\newtheorem{theorem}{Theorem}
\newtheorem{proposition}[theorem]{Proposition}
\newtheorem{observation}[theorem]{Observation}
\newtheorem{lemma}[theorem]{Lemma}
\newtheorem{corollary}[theorem]{Corollary}
\newtheorem{example}[theorem]{Example}
\theoremstyle{remark}
\newtheorem*{remark}{Remark}
\theoremstyle{definition}
\newtheorem{definition}[theorem]{Definition}
\DeclareMathOperator{\tr}{tr}
\DeclareMathOperator{\sgn}{sgn}
\newcommand{\ot}[0]{\otimes}
\newcommand{\nn}[0]{\nonumber}
\newcommand{\BB}[0]{\mathcal{B}}
\newcommand{\CC}[0]{\mathcal{C}}
\newcommand{\HH}[0]{\mathcal{H}}
\newcommand{\JJ}[0]{\mathcal{J}}
\newcommand{\MM}[0]{\mathcal{M}}
\newcommand{\PP}[0]{\mathcal{P}}
\newcommand{\RR}[0]{\mathcal{R}}
\renewcommand{\SS}[0]{\mathcal{S}}
\newcommand{\TT}[0]{\mathcal{T}}
\newcommand{\UU}{\mathcal{U}}
\newcommand{\WW}{\mathcal{W}}
\newcommand{\N}{\mathds{N}}
\newcommand{\R}{\mathds{R}}
\newcommand{\C}{\mathds{C}}
\newcommand{\one}[0]{\mathds{1}}
\newcommand{\id}[0]{\operatorname{id}}
\newcommand{\bra}[1]{\mathinner{\langle #1|}}	% BraKets
\newcommand{\ket}[1]{\mathinner{|#1\rangle}}
\newcommand{\ketbra}[2]{\mathinner{| #1\rangle \langle #2|}}
\newcommand{\dyad}[1]{| #1\rangle \langle #1|}  	% alias for \braket{}
\renewcommand{\a}{\alpha}
\renewcommand{\b}{\beta}
\newcommand{\overbar}[1]{\mkern 1.5mu\overline{\mkern-1.5mu#1\mkern-1.5mu}\mkern 1.5mu} % a wider & thicker overbar
\newcommand{\code}[1]{\texttt{#1}}  % font to display code
\begin{document}

\title  {Positive maps and trace polynomials from the symmetric group}
\date   {\today}
\author {Felix Huber}
\affiliation{ICFO - The Institute of Photonic Sciences, 08860 Castelldefels (Barcelona), Spain}

\maketitle

\noindent {\bf Abstract:} 
With techniques borrowed from quantum information theory,
we develop a method to systematically obtain operator inequalities and identities in several matrix variables.
These take the form of trace polynomials: polynomial-like expressions that involve matrix monomials 
$X_{\alpha_1} \cdots X_{\alpha_r}$ and their traces $\operatorname{tr}(X_{\alpha_1} \cdots X_{\alpha_r})$.
Our method rests on translating the action of the symmetric group on tensor product spaces into that of matrix multiplication.
As a result, we extend the polarized Cayley-Hamilton identity to an operator inequality on the positive cone,
characterize the set of multilinear equivariant positive maps 
in terms of Werner state witnesses, 
and construct permutation polynomials and tensor polynomial identities on tensor product spaces.
We give connections to concepts in quantum information theory and invariant theory.

\section{Introduction}

The study of polynomials that are positive or identically zero 
on certain sets has a rich history, 
going back to Hilbert's 17th problem.
More recently, problems in control theory and optimization led to the study of linear matrix inequalities, 
non-commutative Positiv- and Nullstellensätze, as well as the formulation of semidefinite relaxations 
for polynomials in non-commutative variables~\cite{
helton2006positive,
BlekhermanParriloThomas2012,
10.2307/3844994,
KLEP20081816,
PhysRevA.69.022308,
doi:10.1137/090760155,
BurgdorfKlepPovh2016}.
An important result is that by Helton, which states that 
all positive non-commutative polynomials are sums of hermitian squares~\cite{10.2307/3597203}. 
(An analogous result does not hold for commutative variables).

In this article, we focus on the larger class of {\em trace polynomials}:
expressions which can be realized as products and linear combinations of 
matrix monomials $X_{\alpha_1} \cdots X_{\alpha_r}$ and their traces $\tr(X_{\alpha_1} \cdots X_{\alpha_r})$.
We limit ourselves to {\em multilinear} expressions, 
that is, those which are linear in each variable.
To state an example, the expression $X_1 X_2X_3 + \tr(X_2)X_3X_1 - 2\tr(X_1X_3)\tr(X_2)$ is a multilinear trace polynomial. [We interpret the last term as $2\tr(X_1X_3)\tr(X_2)\one$.]
We study the following question, illustrated in Figure~\ref{fig:set}: 
which multilinear trace polynomials are positive on the set of positive semidefinite $d\times d$ matrices?

\begin{figure}[ht!]
 \includegraphics[width=0.52\linewidth]{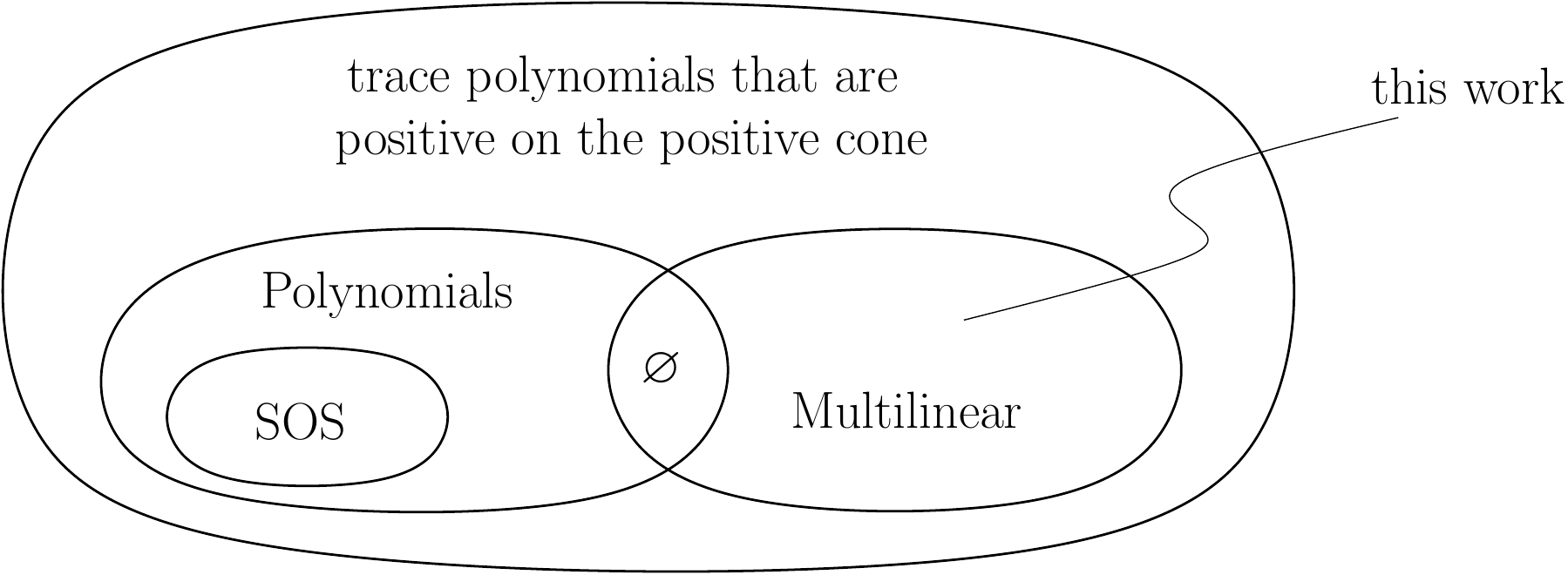}
 \caption{The set of trace polynomials that are positive on the positive cone (equivariant positive maps). 
 It contains expression which are polynomials, for example sum-of-squares (SOS).
 In this work, we characterize those trace polynomials that multilinear and positive on the positive cone.
 \label{fig:set}}
\end{figure}

This question is motivated from applications in quantum information theory.
Here the subset of positive semidefinite matrices, also known as the {\em positive cone},
takes a central role. 
Namely, quantum states are represented by complex positive semidefinite matrices of unit trace.
 As a consequence, one frequently requires inequalities and identities for this subset of matrices
in terms of the Löwner order: we write $A\geq B$ when $A-B$ is a positive semidefinite matrix,
and $A=0$ if $A$ is equal to the zero matrix.
 
Linear maps from the positive cone to itself are known as {\em positive maps}.
These have applications chiefly in the study of quantum dynamics and entanglement~\cite{
Christandl2019, 
PhysRevA.93.042335,
doi:10.1063/1.5045559,
PhysRevA.78.062105,
doi:10.1063/1.4927070,
PhysRevLett.113.100501,
doi:10.1063/1.4998433,
Johnston2019nonmpositive}.
Rather surprisingly, they are also useful in other contexts:
to obtain monogamy of entanglement relations,
to find bounds on the parameters of quantum codes, 
and 
to give constraints on the quantum marginal problem~\cite{
PhysRevA.64.042315, 
PhysRevA.72.022311, 
Butterley2006, 
PhysRevA.75.032102,
Eltschka2018distributionof,
PhysRevA.98.052317, 
651000,
796376,
817508,
681316, 
1751-8121-51-17-175301,
Huber2020quantumcodesof,
Huber_Thesis}. 
More general maps (not necessarily positive) of trace polynomial form
are of interest also in the context of 
joint measureability~\cite{DesignolleFarkasKaniewsk2019}.
\bigskip

Our second focus lies on {\em tensor polynomials}, that is, 
tensor products of non-commutative polynomials and linear combinations thereof. For example, 
$X_1 X_2 X_3 \ot \one + X_2 \ot X_3 X_1 - 2 X_1 X_3 \ot X_2 $
is a tensor polynomial ``living'' on two tensor factors. 
Here we ask:
how can one characterize {\em tensor polynomial identities}, i.e. 
tensor polynomials that vanish on the set of all $d \times d$ matrices? 
Furthermore, what expressions yield operators that permute the individual tensor factors (generalized swap operators)?

\begin{figure}[h!]
 \includegraphics[width=0.52\linewidth]{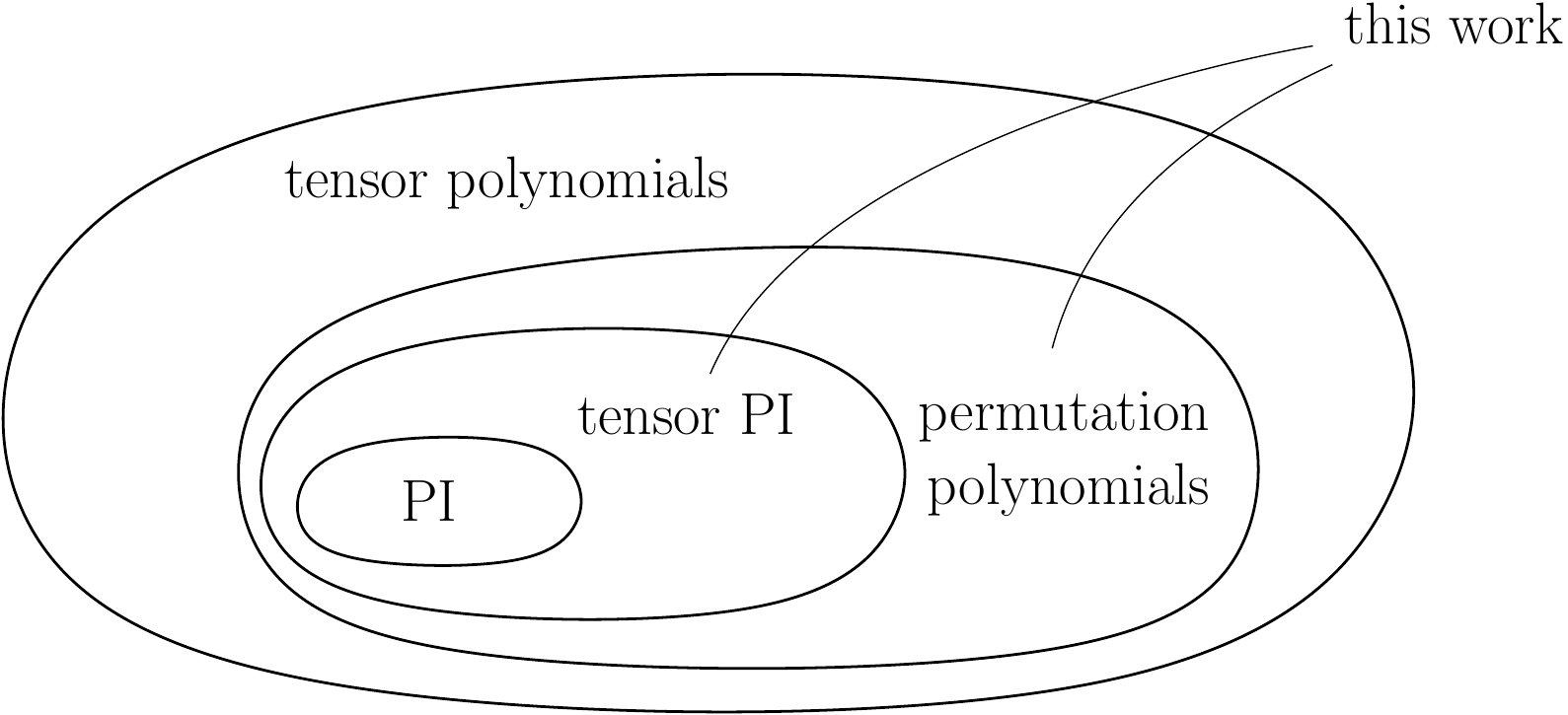}
 \caption{The set of tensor polynomials 
 contains expression that vanish identically zero on the set of $d \times d$ matrices, 
 such as polynomial identities (PI) and tensor polynomial identities (tensor PI).
 In this work, we characterize tensor polynomials that either 
 vanish or that yield permutation operators.
 \label{fig:set2}}
\end{figure}

Again, these questions are motivated from applications in physics.
Polynomial identities are used
as dimensional constraints in semidefinite programming hierarchies~\cite{
PhysRevA.92.042117}
and central polynomials 
as cut-and-glue operators and bond dimension witnesses for matrix product states~\cite{
Navascues2018bonddimension}. 
Multipartite quantum systems are represented by positive operators of unit trace 
on tensor product spaces. 
Thus tensor polynomial identities and permutation polynomials which
``mix'' the different tensor factors are naturally interesting,
for example in manipulating the time evolution of quantum systems~\cite{
Trillo2019,
PhysRevX.8.031008}. 
It is likely that one can also derive rank detection criteria from them.

\bigskip 

In mathematics, our questions connect to the invariant theory of matrices~\cite{
Formanek1989, 
DrenskyFormanek2004, 
ConciniProcesi2017}. 
Multivariate expressions, 
polynomial in the entries of the matrix variables 
and invariant under the simultaneous conjugate unitary action 
on all variables by unitaries,
are termed {\em polynomial invariants}. 
The set of polynomial invariants is generated by traced matrix monomials.
It can be shown that every polynomial invariant~$\iota$ 
is related to a multilinear equivariant map~$f$ by
\begin{equation}\label{eq:inv_eq}
 \iota(X_1, \dots, X_k, X_{k+1}) = \tr[f(X_1, \dots, X_k) X_{k+1}]\,.
\end{equation} 
Here, a map $f$ is termed {\em equivariant}, 
if $U f(X_1, \dots, X_k)U^\dag = f(U X_1 U^\dag, \dots, U X_k U^\dag)$
holds for all complex matrices $X_1,\dots, X_k$ and unitary matrices $U$.
From Eq.~\eqref{eq:inv_eq} it follows that the set of equivariant maps equals the set of trace polynomials.
It can also be seen that polynomial invariants that vanish on the set of $d \times d$ matrices ({\em trace identities}) 
can be turned into trace polynomials that evaluate to the zero matrix
({\em trace polynomial identities}).

\bigskip

Here we use a similar strategy to achieve the following:
first, we turn certain polynomial invariants,
known to be {\em positive} (i.e. non-negative) on the positive cone,
into equivariant positive maps. 
Second, we convert suitable trace identities 
into tensor polynomial identities
for the set of $d \times d$ matrices.
This provides us with a systematic method to work with matrix identities and inequalities.
First, it reduces the characterization of trace polynomials that are positive on the positive cone 
to that of invariant block-positive operators, e.g. entanglement witnesses for Werner states. 
Second, we show that tensor polynomial identities 
correspond to certain sufficiently 
anti-symmetric elements in the group algebra of the symmetric group, 
for which constructions can be given.
In the theory of polynomial identity rings, 
it was shown that all multilinear trace identities 
arise as a consequence of the Cayley-Hamilton theorem~\cite{Formanek1989}.
We show that the same result holds in the tensor setting.

We set our approach in contrast with recent works in non-commutative 
optimization and geometry:
Ref.~\cite{doi:10.1112/plms.12156} gives Positivstellensätze 
for the positivity of symmetric trace polynomials of $n \times n$ matrices 
on semialgebraic sets given by trace polynomial constraints.
Ref.~\cite{klep2020optimization} presents Positivstellensätze 
to obtain a hierarchy of semidefinite relaxations to approximate 
the minimum of dimension-free pure symmetric trace polynomials 
under trace polynomial constraints evaluated on von Neumann algebras
(with normalized traces).
In contrast to these works, we focus on the case of
not necessarily symmetric
trace polynomials with finite dimensional matrices as variables;
these are are either identities, 
evaluate to permutation operators, 
or are positive on positive semidefinite matrices.

\bigskip
The article is structured as follows:
we give an overview on the contributions, on related work from quantum information theory, 
and on the proof strategy in the remainder of this Section.
Section~\ref{sec:prelim} fixes notation and states preliminaries,
while Section~\ref{sec:matmultperm} introduces the key method of translating
permutations acting on tensor spaces into matrix multiplications.
Our main results are in Sections~\ref{sect:pCH}-\ref{sect:PI}: 
Section~\ref{sect:pCH} develops a family of positive equivariant maps 
that have the form of the Cayley-Hamilton identity.
In Section~\ref{sect:matrix_ineq}, we characterize the complete set 
of equivariant positive maps (or trace polynomial inequalities) in terms of invariant block-positive operators and entanglement witnesses.
Section~\ref{sect:PI} develops connections to the invariant theory of matrices.
We outline facts about trace identities and use these to give a complete characterization 
of the set of tensor polynomial identities and related concepts,
before giving conclusions in Section~\ref{sec:conclusion}.

Appendix~\ref{app:RepTheorySk} 
sketches the representation theory of the symmetric group and the construction of Young projectors, 
Appendix~\ref{app:tables} 
provides tables of equivariant positive maps and matrix inequalities, 
and 
Appendix~\ref{app:construction_polyId} 
give recipes for the construction tensor polynomial identities and permutation polynomials.

\subsection{Contributions}\label{sect:contributions}

\begin{enumerate}
\item 
    We construct a family of trace polynomials that are positive on the positive cone, 
    that is, a family of positive equivariant multilinear maps.
    This extends the {\em polarized Cayley-Hamilton identity} 
    to an operator inequality on the positive cone.
    The trace identity as proven by Lew in 1966 states that expressions such as 
    \begin{equation} \label{eq:example0}
    f(X_1,X_2) 
    = \tr(X_1)\tr(X_2){} - \tr(X_1 X_2){}  - \tr(X_1)X_2 - \tr(X_2)X_1 + X_1 X_2 +  X_2 X_1
    \end{equation}
    are identically zero when evaluated on $2\times 2$ 
    matrices~\cite{Lew1966}.
    We show that $f$ and its generalization~$f_\lambda$
    to arbitrary many matrix variables are positive maps.
    For our example above, this means that $f(X_1,X_2)\geq 0$ 
    whenever $X_1,X_2 \geq 0$.
    We show that this family of maps can
    be characterized as 
    completely copositive, 
    equivariant under unitaries, 
    and tensor-stable. 
    
    The details can be found in Section~\ref{sect:pCH}.
    Appendix~\ref{app:tables} contains tables for these maps in up to $3$ variables.

\item 
    We give a characterization of the set of multilinear equivariant positive maps.
    Every map of this type corresponds to an invariant block-positive operator; 
    whereas every {\em optimal} map corresponds to an optimal Werner state witness. 
    As a consequence, the set of symmetric multilinear equivariant positive
    maps has an infinite number of extreme points in the case of four or more variables.
     
    The details can be found in Section~\ref{sect:matrix_ineq}.
    
\item
    We connect our methods to those of invariant theory and 
    polynomial identity rings. 
    Here we give a complete characterization of polynomial 
    identities on tensor product spaces in terms of 
    certain sufficiently anti-symmetric elements 
    of the group algebra $\C[S_k]$. 
    We show that all multilinear permutation polynomials and tensor polynomial
    identities arise as consequences of the Cayley-Hamilton theorem.
     
    The details can be found in Section~\ref{sect:matrix_ineq}.
    Appendix~\ref{app:construction_polyId} gives a list of constructions.
\end{enumerate}

Overall, 
we extend the connection between identities for polynomial invariants 
and equivariant maps, a concept well-known in the study of 
polynomial identity rings~\cite{
Kostant2009, 
DrenskyFormanek2004, 
ConciniProcesi2017}, 
to that of inequalities and to the setting of tensor product spaces.

\subsection{Related concepts from quantum information theory}\label{sect:related}

Our approach generalizes and relates to concepts from in quantum information theory:

\begin{enumerate}[label=(\roman*)]

 \item {\em Werner-Holevo channel:}  
 the map $\rho \mapsto \one - \rho^T$ is a completely positive map~\cite{
 doi:10.1063/1.1498491, 
 HaydenWinter2008}. 

 \item {\em Werner states:} 
 these are multipartite quantum states which satisfy
 $\rho = U^{\ot n} \rho (U^\dag)^{\ot n}$ for all unitaries $U \in \UU(d)$.
 Witnesses that detect entanglement in Werner states show the same invariance, 
 $\WW = U^{\ot n} \WW (U^\dag)^{\ot n}$.

 \item {\em Reduction criterion:} 
 let $\rho_{AB}$ be a bipartite quantum state.
 If $\tr_{B}(\rho_{AB}) \otimes \one_{B}- \rho_{AB} \not \geq 0$, then $\rho_{AB}$ is entangled
 and a maximally entangled state can be distilled from finitely many copies of $\rho_{AB}$~\cite{
 PhysRevA.60.898, 
 PhysRevA.59.4206}.
 
 \item{\em Universal state inversion:}
  the map $\rho \mapsto \sum_{S \subseteq \{1,\dots, n\}} (-1)^{|S|} \tr_{S^c}(\rho) \ot \one_{S^c}$
  is positive but not completely~\cite{
  PhysRevA.64.042315, 
  PhysRevA.72.022311}.

 \item {\em Few-copy entanglement detection, randomized measurements, 
 and concurrences:}
 measureing the overlap of projectors onto invariant subspaces of the symmetric group 
 with that of multiple copies of a quantum state, i.e. expressions of the form $\tr(P_\lambda \rho^{\otimes n})$, results in invariant entanglement detection criteria closely related to generalized concurrences~\cite{
 PhysRevLett.98.140505, 
 PhysRevLett.95.260502,
 PhysRevLett.104.210501}. 
 These can be performed with few-copy or randomized measurements~\cite{
 PhysRevLett.108.110503,
 PhysRevA.99.052323}.
 
 \item {\em Marginal compatibility:} 
 for a tripartite state $\rho_{ABC}$ with the marginals $\rho_{AB}, \rho_{AC}, \rho_{BC}$ to exist, 
 it is necessary that~\cite{Butterley2006}
 \begin{equation}\label{eq:butterley}
  \one - \rho_A - \rho_B - \rho_C + \rho_{AB} + \rho_{AC} + \rho_{BC} \geq 0\,.
 \end{equation}
 where the marginals are understood to be tensored 
 with the identity matrix on the remaining subsystems.
 Similar criteria appear in Refs.~\cite{
 PhysRevA.75.032102, 
 PhysRevA.98.052317} 
 and can be interpreted as quantum Fréchet inequalities~\cite{PhysRevA.94.042106}.

\item {\em Bounds for quantum codes and the shadow inequality:} 
    Rains' shadow inequality states that for all $M,N \geq 0$ and all $|T|\subseteq \{1,\dots, n\}$ that
    \begin{equation}\label{eq:shadow_intro}
     \sum_{ S \subseteq \{1,\dots, n\}} (-1)^{|S\cap T|} \tr[\tr_{S^c}(M)\tr_{S^c}(N)] \geq 0\,.
    \end{equation} 
    Including this inequality into the 
    linear programming bounds on the 
    weight enumerators of quantum codes 
    gives stronger bounds on the existence of codes~\cite{
    681316, 
    796376,
    817508}.
    For qubit codes, an Ansatz for the weight enumerator is known and 
    Rains has shown analytically that 
    the distance $d$ of any $n$-qubit quantum error-correcting code is bounded by
    $d \lesssim \frac{n}{3}$~\cite{651000}.
    These coding bounds can be understood as monogamy of entanglement constraints 
    [c.f. Eq.~\ref{eq:monogamy}] applied the projector onto the code space. 
    For absolutely maximally entangled states and quantum MDS codes the constraints can be evaluated directly, often yielding tight bounds for small systems~\cite{
    1751-8121-51-17-175301, 
    Huber2020quantumcodesof,
    Huber_Thesis}.
    We note that Eq.~\eqref{eq:shadow_intro} can be turned into the (shadow) operator inequality~\cite{
    Eltschka2018distributionof, 
    PhysRevA.98.052317}
    \begin{equation}
        \sum_{ S \subseteq \{1,\dots, n\}} (-1)^{|S\cap T|} \rho_S \ot \one_{S^c} \geq 0\,,
    \end{equation}
    whose generalization forms the motivation of this work.
    
 \item {\em Monogamy of entanglement:} 
 let $\ket{\psi}$ be a multipartite finite-dimensional state with reductions 
 $\rho_S = \tr_{S^c}(\dyad{\psi})$,
 where $S^c$ is the complement of $S$ in $\{1,\dots, n\}$.
 For all $T\subseteq \{1,\dots, n\}$ 
 the distribution of concurrences 
 $C_{S|S^c}(\psi) = \sqrt{2[1- \tr (\rho_S^2)}]$
 is constrained by~\cite{
 Eltschka2018distributionof, 
 PhysRevA.98.052317}
 \begin{equation}\label{eq:monogamy}
  - \sum_{S \subseteq \{1,\dots,n\}} (-1)^{|T \cap S|} C_{S|S^c}^2(\psi) \geq 0\,.
 \end{equation}

\item {\em Finding joint measurements:} equivariant maps can be used as Ansätze for joint measurements.
    For two POVMs $\{A_a\}$ and $\{B_b\}$, joint measurements of the form
    \begin{equation}
     G_{ab} = \a\one\tr(A_aB_b) + \b \one\tr(A_a)\tr(B_b) + \gamma [\tr(A_a)B_b + \tr(B_b)A_a] + \delta (A_a B_b+B_b A_a)
    \end{equation}
    with $\a,\b,\gamma, \delta \in \R$ can be found efficiently 
    with a semidefinite program~\cite{DesignolleFarkasKaniewsk2019}.
    
\item {\em Time manipulation:} 
      tensor polynomials for which $\sum_i p_i \ot q_i \propto \operatorname{SWAP}$ 
      (where $p_i, q_i$ are non-commutative polynomials)
      for all $d\times d$ matrices can be 
      used to manipulate the time evolution of quantum systems~\cite{Trillo2019}.

\item {\em Bond dimension witnesses and cut-and-glue operators for matrix product states:}
        polynomial identities that vanish on all $d \times d$ matrices can be used to 
        witness the bond dimension in matrix product states.
        Central polynomials evaluating to the identity can be used to decouple particles 
        from a spin chain in a cut-and-glue fashion~\cite{Navascues2018bonddimension}.
        
\end{enumerate}

\subsection{Proof strategy}\label{sect:proof_strat}

The proofs rely on the following chain of reasoning:
denote by~$T$ the representation of the symmetric group that exchanges the
tensor factors of $(\C^d)^{\ot k}$. Let $(k\dots 1)$ be the inverse of the permutation $(1\dots k)$
and let $X_1, \dots, X_k$ be $d \times d$ matrices. Then
\begin{equation}\label{eq:sketch1}
    \tr_{1 \dots k \backslash k}  \big[ T{((k \dots 1))} \, X_1 \ot X_2 \ot \cdots \ot X_k\big]
    = X_1 X_2 \cdots X_k\,.
\end{equation}
Note that the trace is taken over all tensor factors except the last one.
Thus, the action of a permutation on a tensor product space can be translated
into that of a matrix multiplication.

Consider now some positive semidefinite operator $\PP$ acting on $(\C^d)^{\ot k}$ 
and replace the last variable $X_k$ in Eq.~\eqref{eq:sketch1} by the identity matrix,
\begin{equation}\label{eq:sketch2}
    \tr_{1 \dots k \backslash k}  \big[ \PP \, X_1 \ot X_2 \ot \cdots \ot X_{k-1} \ot \one \big] \,.
\end{equation}
We recall that an operator $A$ is positive semidefinite if and only if $\tr[A B] \geq 0$ for all $B\geq 0$ holds; 
this is known as the self-duality of the positive cone.
We check this for the expression in Eq.~\eqref{eq:sketch2},
\begin{equation}\label{eq:sketch3}
    \tr\Big\{ \tr_{1 \dots k \backslash k}  \big[ \PP \, X_1 \ot X_2 \ot \cdots \ot X_{k-1} \ot \one \big] \cdot B\Big\}
    = \tr \big[ \PP\, X_1 \ot X_2 \ot \cdots \ot X_{k-1} \ot B \big] \,,
\end{equation}
where we made use of the coordinate-free definition of the partial trace: 
$\tr\big[ \tr_1(M) N\big] = \tr\big[M (\one \ot N)\big]$ holds for all operators $M$ and $N$ 
acting on Hilbert spaces $\HH_1 \ot \HH_2$ and $\HH_2$ respectively.
When $X_1, \dots, X_{k-1},B \geq 0$, then Eq.~\eqref{eq:sketch3} is non-negative 
because it is the Hilbert-Schmidt inner product of the two positive semidefinite operators
$\PP$ and $X_1 \ot \dots \ot X_{k-1} \ot B$.
Consequentially, the expression in Eq.~\eqref{eq:sketch2} is positive semidefinite 
and can be understood as a multilinear positive map with the $X_i$ as variables.

Naturally, we want to make use of the relation in Eq.~\eqref{eq:sketch1}.
We thus choose some $\PP\geq 0$ that is a linear combination of permutations.
Then Eq.~\eqref{eq:sketch2} yields trace polynomials that are positive on the positive cone, 
in other words, multilinear equivariant positive maps.
For example, the idempotent 
\begin{equation}
 \omega= \frac{1}{6}[()-(12)-(13)-(23)+(123)+(132)] \,\in\, \C S_3
\end{equation} 
yields the projector onto the completely anti-symmetric subspace 
$P_{\omega} = \hat T(\omega) \geq 0$. 
We recover Eq.~\eqref{eq:example0},
 \begin{equation}\label{eq:example00}
  \tr_{12}[P_{\omega}\, X \ot Y \ot \one] = \tr(X)\tr(Y){} - \tr(XY){} - \tr(X)Y - \tr(Y)X + XY + YX \geq 0 
  \quad \text{whenever}\quad X,Y\geq 0\,.
 \end{equation}

This motif can be explored further:
first, by choosing the support of $\PP$ to rest exclusively 
in representations that are ``too anti-symmetric'' for a vector space of dimension $d$. 
It will turn out that this idea completely characterizes 
permutation polynomials and tensor polynomial identities. 
Second, a careful look at Eq.~\eqref{eq:sketch3} shows 
that it is not required that $\PP$ be positive semidefinite
for Eq.~\eqref{eq:sketch3} to be non-negative.
A merely blockpositive operator, 
that has non-negative expectation values on the set of separable states, 
suffices.
Thus one can obtain a correspondence between the set of multilinear equivariant positive maps 
and the set of unitary invariant blockpositive operators.
Their boundary of either set bijects to the set of optimal Werner state witnesses,
yielding a complete characterization.

\section{Notation and preliminaries}\label{sec:prelim}

\subsection{The cone of positive matrices, quantum states, and positive maps}
Denote the set of complex $d \times d$ matrices as $\MM_d$ and
the cone of positive-semidefinite $d \times d$ matrices as~$\MM_d^{+}$, 
also known as the {\em positive cone}.
Its natural (semi-) order is that of Löwner, where $A\geq B$ if $A-B$ is positive semidefinite.
We write $\id_d$ or $\one_d$ for the identity map on $\MM_d$ and $A=0$ if $A$ is the zero matrix.
The matrix transpose is $\theta(X) = X^T$, 
and we write the partial transpose on a subsystem $S$ as $\theta_S(X) = X^{T_S}$.
We denote the space of linear operators acting on a vector space $V$ as~$L(V)$. 
Lastly, the set of unitary $d\times d$ matrices is~$\UU(d)$.

The trace of a matrix $X$ is the sum of its diagonal elements, $\tr(X) = \sum_i X_{ii}$. Given two complex square matrices $X$ and $Y$ of the same size, their Hilbert-Schmidt inner product 
is defined as $\langle X, Y \rangle = \tr(X^\dag Y)$. 
It is known that $\tr(X^\dag Y) \geq 0$ when $X$ and $Y$ are positive semidefinite.
In this context we recall a useful characterization of hermitian and positive semidefinite matrices~\cite[Theorem 4.1.4]{horn_johnson_2012}:
a $d \times d$ matrix $X$ is hermitian if and only if $\bra{\phi} X \ket{\phi} \in \R$ for all $\ket{\phi} \in \C^d$ holds;
$X$ is positive semidefinite if and only if $\bra{\phi} X \ket{\phi} \geq 0$ for all $\ket{\phi} \in \C^d$ holds.
It follows that the expression $\tr(X^\dag Y)$ is real and nonnegative for all $Y \in \MM_d^{+}$ if and only if 
also $X\in \MM_d^{+}$. 
This fact is known as the {\em self-duality} of the positive cone and
$\MM_d^{+}$ equals its dual cone $\{ X \,|\, \tr(X^\dag Y) \geq 0\,, \forall \,Y \in \MM_d^{+}\}$.

The set of quantum states with $d$ levels is formed by the set of hermitian positive semidefinite matrices $\rho$ of trace one,
also known as {\em density matrices}. 
That is, a quantum state satisfies $\rho \geq 0$, $\rho^\dag = \rho$, and $\tr \rho = 1$.
Quantum states composed of $n$ particles with $d$ levels each are then elements in $(\MM_d \ot \dots \ot \MM_d)^{+} \, (n \text{ times})$.
Multipartite states are called {\em entangled} if they cannot be written as a convex combination of product states, 
$\rho^\text{ent} \neq \sum_i p_i \rho_1^{(i)} \ot \dots \ot \rho_n^{(i)}$ with $\sum_i p_i = 1$ and $p_i \geq 0$.
States that are not entangled are {\em separable} and are elements of 
$\MM_d^{+} \ot \dots \ot \MM_d^{+} \, (n \text{ times})$.

A linear map $\Lambda : M_{d_1} \to M_{d_2}$ is called {\em positive}, 
if $\Lambda(X) \geq 0$ whenever $X \geq 0$.
A linear map is termed {\em completely positive} (CP), if 
$\Lambda_{d} \ot \id_m$ is positive for all $m \in \N_+$.
A map $\Lambda$ is called {\em completely copositive} (coCP), 
if $\Lambda \circ \theta$ is completely positive.
A positive map $\Lambda$ is termed {\em tensor-stable}, if $\Lambda^{\otimes n}$ is positive for all $n \in \N$~\cite{Christandl2019}.
Naturally, all completely positive maps are tensor-stable, 
while maps that are completely copositive remain so under tensor powers.

A map is {\em multilinear} if it is linear in each variable. 
We call a multilinear map 
$\Lambda : M_{d}^n \to M_{d'}$
{\em positive}, if $\Lambda(X_1, \dots, X_n) \geq 0$  whenever $X_1, \dots, X_n \geq 0$.
We also say that these maps are positive on the positive cone.
A map $f$ is termed {\em equivariant}, 
if $U f(X_1, \dots, X_k)U^\dag = f(U X_1 U^\dag, \dots, U X_k U^\dag)$
holds for all complex matrices $X_1,\dots, X_k$ and unitary matrices $U$.

\subsection{Non-commutative and trace polynomials}
Naturally, matrices in general do not commute.
Consider some collection of matrix variables $\{X_1, \dots, X_n\}$.
The set of {\em non-commutative polynomials} 
is formed by linear combinations of monomials $X_{\a_1} \cdots X_{\a_r}$,
commonly denoted by~$\R\langle X \rangle$.
The algebra of {\em trace polynomials} is generated by monomials 
$X_{\a_1} \cdots X_{\a_r}$ over the ring of traces $\tr(X_{\a_1} \cdots X_{\a_r})$.
Here any trace polynomial term that only contains traced expressions,
e.g. $\tr(X_{\a_1} \cdots X_{\a_r}) \cdots \tr(X_{\zeta_1} \cdots X_{\zeta_t})$, 
is interpreted as the scalar matrix, 
$\tr(X_{\a_1} \cdots X_{\a_r}) \cdots \tr(X_{\zeta_1} \cdots X_{\zeta_t}) \one$. 
Similarly, {\em tensor (trace) polynomials} are given by linear combinations of tensor products of
non-commutative (trace) polynomials.

\begin{table}
\begin{tabular}{@{} l @{\quad\quad} l@{} }
\toprule
non-commutative polynomial  & $X_1 X_2 X_3 + X_2 X_3 X_1 -  2 X_1 X_3 X_2 $                      \\
trace polynomial            & $X_1 X_2 X_3 + \tr(X_2) X_3 X_1 - 2 \tr(X_1X_3)\tr(X_2)$       \\
polynomial invariant
                            & $\tr(X_1 X_2 X_3) + \tr(X_2)\tr(X_3 X_1) - 2 \tr(X_1X_3)\tr(X_2)$  \\
\midrule
tensor polynomial           & $X_1 X_2 X_3 \ot \one +  X_2 \ot X_3 X_1 - 2 X_1 X_3 \ot X_2 $           \\
tensor trace polynomial     & $ \tr(X_1 X_2) X_3 \ot \one + X_2 \ot X_3 X_1 - 2 \tr(X_1) X_3 \ot X_2 $ \\
\bottomrule
\end{tabular}
\caption{\label{table:TP_related} Non-commutative polynomials and related objects.
Terms containing only traces are interpreted as a scalar matrix; i.e. read $\tr(X_1 X_3)$ as $\tr(X_1X_3)\one$.}
\end{table}

A {\em polynomial invariant}
is a scalar expression that is polynomial in the entries of 
the matrix variables and invariant under the simultaneous conjugate action 
on all variables by unitaries.
It has been shown that the ring of polynomial invariants is 
generated by traced matrix monomials ({\em pure} trace polynomials).
Every polynomial invariant~$\iota$ 
is related to a multilinear equivariant map~$f$ by
\begin{equation}\label{eq:inv_eq2}
 \iota(X_1, \dots, X_k, X_{k+1}) = \tr[f(X_1, \dots, X_k) X_{k+1}]\,.
\end{equation} 
From Eq.~\eqref{eq:inv_eq2} it follows that the set of multilinear equivariant maps and 
the set of multilinear trace polynomials coincide~\cite{ConciniProcesi2017} (see also Section~\ref{sect:PI}).
Examples for these different types of generalized polynomials is given 
in Table~\ref{table:TP_related}.

A {\em polynomial identity} is a non-commutative polynomial that vanishes 
on the set of $d \times d$ matrices $\MM_d$ for some $d$.
Likewise, a {\em trace identity} is a trace polynomial that vanishes,
and tensor (trace) polynomial identities are tensor (trace) polynomials 
that vanish on $\MM_d$.
As in the case of positive maps, we call a non-commutative polynomial {\em positive} 
if it is positive semidefinite whenever all variables are positive semidefinite.  
We also say that the polynomial is ``positive on the positive cone''.
Note that this nomenclature is different than that of e.g. Ref.~\cite{10.2307/3597203},
where positive semidefiniteness is required on the set of {\em all} $d\times d$ matrices.

\subsection{Partial trace, Choi-Jamiołkowski isomorphism, and Swap}
The {\em partial trace} is used in quantum mechanics to obtain the 
reduced or local description of quantum states. 
For example, given a bipartite quantum state $\rho_{12}$ and some orthonormal basis $\{\ket{i}_2\}_{i=1}^d$ for the second subsystem, 
the partial trace is commonly written as 
$\tr_2(\rho_{12}) = \sum_{i=1}^d \bra{i}_2 \rho_{12} \ket{i}_2$.
The state $\rho_1 = \tr_2(\rho)$ then gives the complete information 
on measurement outcomes when considering system $1$ alone.
Here, we will here need the more abstract coordinate-free definition of the partial trace:
denote by $\HH_1$ and $\HH_2$ two Hilbert spaces. 
Then, the partial trace $\tr_1$ is the unique linear operator for which
\begin{equation}\label{eq:coord_free_def_ptrace}
 \tr\big[ \tr_1(M) N\big] = \tr\big[M (\one \ot N)\big] 
\end{equation}
holds for all operators $M$ and $N$ 
acting on Hilbert spaces $\HH_1 \ot \HH_2$ and $\HH_2$ respectively.
In other words, the partial trace is 
the adjoint operation to $M \to M \ot \one$
for the Hilbert-Schmidt inner product $\langle A, B \rangle = \tr(A^\dag B)$.
 
The {\em Choi-Jamiołkowski isomorphism} relates 
the space of linear maps $L(V)$
with $V \ot V$.
Let $\Lambda \in L(V)$ be a linear map. 
Define the correspondence
\begin{equation}
 \Lambda \rightarrow \rho_\Lambda  = (\Lambda \ot \one) \dyad{\Omega}
\end{equation}
where $\ket{\Omega} = \sum_{i=1}^d \ket{ii}$ is the (unnormalized) maximally entangled state.
The inverse is given by
\begin{equation}\label{eq:CJ_inverse}
 \rho \rightarrow \Lambda_\rho(X) = \tr_1[\rho (X^T \ot \one)]\,. 
\end{equation} 
The Choi-Jamiołkowski isomorphism states that $\Lambda_\rho$ 
is completely positive if and only if 
$\rho_\Lambda \in (\MM_d \ot \MM_d)^+$.

The {\em swap operator} $\Gamma$ exchanges the two tensor-components of a biproduct-vector,
that is \(\Gamma \ket{\phi} \ot \ket{\psi}= \ket{\psi} \ot \ket{\phi}\) for all $\ket{\phi},\ket{\psi} \in \C^d$.
It can be expanded as
\begin{equation}
 \Gamma
 = \dyad{\Omega}^{T_2} 
 = \sum_{j,k = 1}^d \ketbra{jj}{kk}^{T_2} 
 = \sum_{j,k=1}^d \ketbra{jk}{kj}\,,
\end{equation} 
where $(\cdot)^{T_2}$ denotes the partial transpose on the second subsystem.
It is not hard to establish (e.g. by direct matrix multiplication) that for all operators 
$M$ and $N$ acting on $\C^d$ the following relation holds,
\begin{align}\label{eq:swap_property}
 \tr [ \Gamma \,M \ot N] = \tr[ M N]\,. 
\end{align}

Under a {\em partial} trace, the swap results in the matrix multiplication of tensor-factors.
The resulting {\em swap identities} are well-known~(see e.g. Ref.~\cite{Eltschka2018distributionof}) 
and form the starting point for our work:
let $M$ and $N$ be $d\times d$ matrices. Then
\begin{align}\label{eq:swap_under_ptrace}
 \tr_1[\Gamma \,M \ot N] = MN \quad \text{and} \quad  \tr_2[\Gamma \,M \ot N] = NM\,.
\end{align}

\subsection{Action of the symmetric group on $(\mathds{C}^d)^{\otimes k}$}
\label{sect:act_sym_group}
Let \(S_k\) be the symmetric group, that is, the group of that permutes $k$ elements. 
Under the cycle notation, the permutation $(143)(2)$ maps 
$1\rightarrow 4 \rightarrow 3 \rightarrow 1$ and $2 \rightarrow 2$. 
Because $(143)(2)$ leaves the position $2$ invariant, 
we can further shorten the notation to $(143)$. 
The non-permutation is denoted by $()$ (sometimes also $e$) and is identical to $(1)(2)\dots(k)$.
We refer to $\pi(i)$ as the coordinate 
to which an object at coordinate $i$ is permuted to. 
Consequently, $\pi^{-1}(i)$ refers to what object was brought to position $i$ by~$\pi$. 
Thus if $\pi = (143)(2)$, then $\pi(4) = 3$ and $\pi^{-1}(4)=1$.
Given some permutation $\pi$, its {\em cycle structure} is given by the lengths and multiplicities of its cycles. 
Elements in $S_k$ are {\em conjugate} ($\pi_1 = \pi^{-1} \pi_2 \pi$ for some $\pi \in S_k$) 
if and only if they have the same cycle structure. 
Every permutation is conjugate to its inverse~\footnote{In terms of character theory, 
this is the case if and only if every character is real-valued.}.

Consider now the following representation $T$ of $S_k$ on a complex tensor-product space:
let \(T(\pi)\) act on $(\C^D)^{\ot k}$ by the permutation of its $k$ tensor factors according to $\pi$,
\begin{equation}\label{eq:natural_rep_of_Sk}
 T(\pi) \, \ket{v_1} \ot \dots \ot \ket{v_k} = \ket{v_{\pi^{-1}(1)}} \ot \dots \ot \ket{v_{\pi^{-1}(k)}}\,.
\end{equation}
For example, the permutation $\tilde \pi = (143)(2)$ acts on $(\C^D)^{\otimes 4}$ as
\begin{equation}
 T{(\tilde \pi)} \ket{v_1} \ot \ket{v_2} \ot \ket{v_3} \ot \ket{v_4} = \ket{v_3} \ot \ket{v_2} \ot \ket{v_4} \ot \ket{v_1} \,.
\end{equation}
The adjoint of $T$ acts in a reversed fashion on kets,
$T(\pi)^\dag  \ket{v_1} \ot \dots \ot \ket{v_k} = \ket {v_{\pi(1)}} \ot \dots \ot \ket{v_{\pi(k)}}$.
One can check that the representation $T$ is unitary, $T^{\dag}(\pi) = T^{-1}(\pi) = T(\pi^{-1})$ for all $\pi \in S_k$.
The swap operator $T((ij))$ permutates two tensor factors $i$ and $j$.
When only two tensor factors are present we omit the indices altogether and write $\Gamma$.

A partition $\lambda$ of an integer~$k$ (written as $\lambda \vdash k$) is a sequence of 
positive integers $\lambda = (\lambda_1, \dots, \lambda_r)$, 
such that $\lambda_1 \geq \lambda_2 \geq \dots \geq \lambda_r$ and $\lambda_1 + \dots + \lambda_r = k$,
and $r$ is the number of parts.
We recall the Schur-Weyl duality.
\begin{theorem}[Schur-Weyl Duality~\cite{Audenaert2006}]
\label{thm:schur_weyl}
The tensor product space $(\C^d)^{\ot k}$ can be decomposed as
\begin{equation}
 (\C^d)^{\otimes k} = \bigoplus_{\substack{\lambda \vdash k \\ \text{parts}(\lambda) \leq d}} 
 \UU_\lambda \ot \SS_\lambda \,,
 \end{equation}
where the symmetric group $S_k$ acts on the spaces $\SS_\lambda$ and the general linear group $GL_d(\C)$ acts on the spaces $\UU_\lambda$, 
indexed by the same partitions.
\end{theorem}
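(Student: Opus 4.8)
\emph{Proof idea.} The plan is to derive this from the double commutant theorem, in the spirit of the classical Schur--Weyl argument. I would regard $(\C^d)^{\ot k}$ as carrying two commuting actions: the group $GL_d(\C)$ acting diagonally via $g \mapsto g^{\ot k}$, and the symmetric group $S_k$ acting via the representation $T$ of Eq.~\eqref{eq:natural_rep_of_Sk}. These commute because permuting the tensor factors and then applying $g$ to each agrees with applying $g$ to each and then permuting. Let $\AA = \sspan\{A^{\ot k} : A \in \MM_d\}$ and $\BB = \sspan\{T(\pi) : \pi \in S_k\}$, viewed as unital subalgebras of $L((\C^d)^{\ot k})$; since $GL_d(\C)$ is Zariski-dense in $\MM_d$, the algebra $\AA$ coincides with the one generated by the image of $GL_d(\C)$.

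The first step is to show that $\AA$ and $\BB$ are mutual commutants. Identifying $L((\C^d)^{\ot k}) = L(\C^d)^{\ot k}$, conjugation by $T(\pi)$ permutes the $k$ tensor factors of $L(\C^d)^{\ot k}$; hence $X$ lies in the commutant $\BB'$ iff $X$ is fixed by this $S_k$-action, i.e.\ $X$ lies in the symmetric subspace $\operatorname{Sym}^k(L(\C^d))$. A polarization identity shows this subspace is spanned by the diagonal elements $A^{\ot k}$, so $\BB' = \AA$. Because $\BB$ is a homomorphic image of the group algebra $\C[S_k]$, it is semisimple (Maschke), so the double commutant theorem gives $\BB'' = \BB$, and therefore $\AA' = \BB'' = \BB$ as well. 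Thus $\AA$ and $\BB$ are semisimple algebras that are each other's commutant.

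Next I would feed this into the structure theory of a semisimple algebra together with its commutant on a finite-dimensional space. Writing $\BB \cong \bigoplus_\lambda L(\SS_\lambda)$ as a sum of matrix blocks indexed by the irreducible $S_k$-modules $\SS_\lambda$ that actually occur in $(\C^d)^{\ot k}$, one gets a canonical isotypic decomposition $(\C^d)^{\ot k} \cong \bigoplus_\lambda \UU_\lambda \ot \SS_\lambda$ with multiplicity spaces $\UU_\lambda = \operatorname{Hom}_{S_k}(\SS_\lambda,(\C^d)^{\ot k})$; the algebra $\BB$ acts on the $\SS_\lambda$-factors, while $\AA = \BB'$ acts irreducibly and pairwise inequivalently on the $\UU_\lambda$-factors. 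Since $GL_d(\C)$ generates $\AA$, each $\UU_\lambda$ is an irreducible $GL_d(\C)$-module. As the irreducible representations of $S_k$ are labeled by partitions $\lambda \vdash k$, this produces the indexing asserted in the theorem. (Alternatively, one could prove the decomposition by computing the bicharacter of $(\C^d)^{\ot k}$ as a $GL_d(\C)\times S_k$-module and matching it against the Frobenius character formula, but the commutant route keeps the two actions manifestly symmetric.)

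Finally I would pin down which partitions actually contribute, i.e.\ for which $\lambda$ one has $\UU_\lambda \neq 0$. For this I would use the Young symmetrizers $c_\lambda \in \C[S_k]$ from Appendix~\ref{app:RepTheorySk}: $\UU_\lambda \neq 0$ exactly when $T(c_\lambda)$ does not annihilate $(\C^d)^{\ot k}$, and evaluating $T(c_\lambda)$ on the standard product basis $\ket{e_{i_1}} \ot \dots \ot \ket{e_{i_k}}$ shows this occurs iff the Young diagram of $\lambda$ admits a filling with entries from $\{1,\dots,d\}$ that is weakly increasing along rows and strictly increasing down columns, i.e.\ a semistandard filling. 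Such a filling exists iff every column has length at most $d$, that is iff $\operatorname{parts}(\lambda) \leq d$. I expect this last combinatorial step, together with the care needed in the polarization identity and in invoking the structure theorem, to be the real work; the double-commutant step itself is purely formal once $\BB$ is recognized as semisimple.
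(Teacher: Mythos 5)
The paper does not actually prove Theorem~\ref{thm:schur_weyl}: it is stated as a classical result with a citation, and the only ingredient the paper establishes itself is the one-directional Proposition~\ref{prop:pigeon} (the pigeonhole annihilation of Young symmetrizers for partitions with more than $d$ parts), which is all that is needed downstream. Your double-commutant argument is the standard textbook proof and is essentially sound: the commuting actions, the identification $\BB' = \operatorname{Sym}^k(L(\C^d)) = \sspan\{A^{\ot k}\}$ via polarization, semisimplicity of $\BB$ plus the double commutant theorem, and the structure theory of a semisimple algebra and its commutant all go through as you describe. The one place where you are still owing work is the converse half of the final step: you need to exhibit, for each $\lambda$ with at most $d$ parts, a product basis vector on which $T(c_\lambda)$ is genuinely nonzero (e.g.\ fill row $i$ of the tableau with the basis label $i$ and check that the coefficient of the original basis vector in $T(c_\lambda)\ket{e_{i_1}}\ot\cdots\ot\ket{e_{i_k}}$ does not cancel); the existence of a semistandard filling alone does not by itself certify nonvanishing, though the standard computation closes this gap. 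You correctly flag this as the real combinatorial content. Compared with the paper, your route proves strictly more (the full bimodule decomposition and the irreducibility and pairwise inequivalence of the $\UU_\lambda$), whereas the paper only needs, and only proves, the vanishing direction of the multiplicity statement.
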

Note that $(\C^d)^{\otimes k}$ does not contain subspaces that correspond to partitions with more than $d$ rows 
(c.f. Proposition~\ref{prop:pigeon} in Appendix~\ref{app:RepTheorySk}). 
This will be the origin of trace and polynomial identities, a topic which we will discuss in Section~\ref{sect:PI}.

In other words, the Schur-Weyl duality states that the diagonal action of the general linear group $GL_d(\C)$ 
of invertible complex $d \times d$ matrices and that of the symmetric group on $(\C^d)^{\ot n}$ 
commute. For all $A \in GL_d(\C)$ and $\pi \in S_n$,
\begin{equation}
 T(\pi) (A \ot \cdots \ot A) = (A \ot \cdots \ot A) T(\pi)\,.
\end{equation}

The projectors associated to the subspaces $\UU_\lambda \ot \SS_\lambda$ are the (central) {\em Young Projectors}~\footnote{
The usual Young projectors corresponding the Young symmetrizer are usually neither hermitian nor central.},
\begin{equation}
 P_\lambda = \frac{\chi_\lambda(e)}{k!} \sum_{\pi \in S_k} \chi_\lambda(\pi^{-1}) T(\pi)\,,
\end{equation} 
where $\chi_\lambda$ is the character associated to the irreducible representation indexed by $\lambda$,
and $e$ is the identity permutation in $S_k$.

For our purposes it is important that $P_\lambda = P_\lambda^\dag \geq 0$, 
that the central Young projectors commute with the action of the symmetric group 
and with the diagonal action of the general linear group,
and that they can be written as a linear combination of generalized swap operators $T(\pi)$.
We denote the group algebra representation corresponding to the representation $T$ by $\hat T$.
Then the Young projectors can equivalently be obtained from centrally primitive hermitian idempotents 
$\omega_\lambda$ in the group ring~$\C S_k$ as $P_\lambda = \hat T(\omega_\lambda)$. 
Further details about their construction and the representation theory of the symmetric group 
can be found in Appendix~\ref{app:RepTheorySk}.

\section{Matrix multiplication and permutations}\label{sec:matmultperm}

\subsection{Matrix products from permutations}
Our method to work with trace polynomials 
rests on generalizing the swap identities from the previous section [Eq.~\eqref{eq:swap_under_ptrace}].
We formalize the translation of permutations into matrix products.
  \begin{proposition}\label{prop:generalized_swap_identities}
    Let $X_1, \dots, X_k \in \MM_d$ with $k\geq 3$.
    Consider the cycle $(k \dots 1) = (1 \dots k)^{-1}$.
    Then
    \begin{align}
    \tr_{1 \dots k \backslash k}  \big[ T{((k \dots 1))} X_1 \ot X_2 \ot \cdots \ot X_k\big] 
    &= X_1 X_2 \cdots X_k \nn\\
    \tr_{1 \dots k \backslash 1}  \big[ T{((1 \dots k))} X_1 \ot X_2 \ot \cdots \ot X_k\big] 
    &= X_k X_{k-1} \cdots X_1\,.
    \end{align}
\end{proposition}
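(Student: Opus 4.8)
The plan is to prove both identities by induction on $k$, taking the swap identities of Eq.~\eqref{eq:swap_under_ptrace} as the base case and exploiting that $T$ is a genuine representation of $S_k$ (so that $T(\sigma\tau)=T(\sigma)T(\tau)$) together with the fact that a permutation fixing certain positions acts as the identity on the corresponding tensor factors. A direct computation in the standard product basis — expanding $X_m=\sum_{ij}(X_m)_{ij}\ketbra{i}{j}$, applying $T((k\,\dots\,1))$, which only cyclically relabels the ket indices, and then contracting the traced slots one at a time — also proves the claim and is a good consistency check, but the inductive route is shorter and stays closer to the swap identities the rest of the paper builds on. For $k=2$ the first identity is exactly $\tr_1[\Gamma\,X_1\ot X_2]=X_1X_2$ from Eq.~\eqref{eq:swap_under_ptrace}.

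For the induction step I would use the cycle decomposition
\begin{equation}
 (k\,\dots\,1) = (k-1\,k)\cdot(k-1\,\dots\,1)\,,
\end{equation}
read as composition of functions with the rightmost factor applied first; this is checked by tracking the image of each point. Since $T$ is a homomorphism, $T((k\,\dots\,1))=T((k-1\,k))\,T((k-1\,\dots\,1))$, and because $(k-1\,\dots\,1)$ fixes the position $k$ we have $T((k-1\,\dots\,1))=T'((k-1\,\dots\,1))\ot\one$ with $T'$ the permutation representation on the first $k-1$ factors. Hence
\begin{equation}
 T((k\,\dots\,1))\,(X_1\ot\cdots\ot X_k)=T((k-1\,k))\,\big(Y\ot X_k\big)\,,\qquad Y:=T'((k-1\,\dots\,1))\,(X_1\ot\cdots\ot X_{k-1})\,.
\end{equation}
Now split the total partial trace as $\tr_{1\dots k\backslash k}=\tr_{k-1}\circ\tr_{1\dots k-2}$ and perform it in two stages. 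Because $(k-1\,k)$ fixes each of $1,\dots,k-2$, the operator $T((k-1\,k))$ commutes with $\tr_{1\dots k-2}$ — seen by writing an arbitrary operator on $(\C^d)^{\ot k}$ as a sum of tensors $A\ot B$ with $A$ on factors $1,\dots,k-2$ and $B$ on factors $k-1$ and $k$ — so
\begin{equation}
 \tr_{1\dots k-2}\big[\,T((k-1\,k))\,(Y\ot X_k)\,\big]=\Gamma_{k-1,k}\,\big(\tr_{1\dots k-2}[Y]\ot X_k\big)\,,
\end{equation}
where $\Gamma_{k-1,k}$ denotes the swap on the two surviving factors. By the induction hypothesis applied to the $k-1$ matrices $X_1,\dots,X_{k-1}$ one has $\tr_{1\dots k-2}[Y]=X_1X_2\cdots X_{k-1}$, and the remaining trace together with the base swap identity $\tr_2[\Gamma\,M\ot N]=NM$ gives $\tr_{k-1}\big[\Gamma_{k-1,k}\,(X_1\cdots X_{k-1})\ot X_k\big]=X_1\cdots X_{k-1}X_k$, which closes the induction.

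The second identity follows from the mirror-image argument: use $(1\,\dots\,k)=(1\,2)\cdot(2\,\dots\,k)$ with $(2\,\dots\,k)$ fixing the position $1$, split $\tr_{1\dots k\backslash 1}=\tr_2\circ\tr_{3\dots k}$, apply the induction hypothesis (second identity) to $X_2,\dots,X_k$ to obtain $X_k\cdots X_2$ on the surviving factor, and conclude with $\tr_2[\Gamma\,M\ot N]=NM$. The only delicate point — and the step I expect to be the main, though mild, obstacle — is the bookkeeping: one must not substitute a partial matrix product back into the tensor before the factors it involves have actually been traced out, so the order of the two partial traces, together with the precise statement that a permutation fixing a set of positions commutes with the partial trace over exactly those factors, has to be set up carefully; everything else is routine.
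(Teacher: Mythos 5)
Your proof is correct, but it takes a genuinely different route from the paper. The paper proves the proposition by a direct computation in a product basis: each $X_i$ is expanded in matrix units $\ketbra{\a_i}{\b_i}$, the operator $T((k\dots 1))$ is observed to shift the ket labels cyclically, and the partial trace then contracts the indices pairwise to produce $X_1X_2\cdots X_k$ in one pass. You instead induct on $k$ via the factorization $(k\dots 1)=(k\mbox{--}1\;k)\cdot(k\mbox{--}1\dots 1)$, which is verified correctly, and you isolate the two structural facts that make the induction go through: a permutation fixing position $k$ acts as $T'\ot\one$, and a permutation supported on factors $k-1,k$ commutes with the partial trace over factors $1,\dots,k-2$. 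This reduces everything to the two-factor swap identities of Eq.~\eqref{eq:swap_under_ptrace}, avoids the index bookkeeping of the basis computation, and makes explicit a commutation lemma that the paper uses only implicitly later on; the price is that the order of the two partial traces must be handled with the care you yourself flag, and the argument does not generalize quite as immediately to the multi-cycle case of Corollary~\ref{cor:perm_and_ptrace}, which the paper's basis computation handles with no extra work.

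One small correction: in the closing step of the first identity you are tracing out factor $k-1$, which is the \emph{first} of the two surviving slots in $(X_1\cdots X_{k-1})\ot X_k$, so the identity being invoked is $\tr_1[\Gamma\, M\ot N]=MN$, not $\tr_2[\Gamma\, M\ot N]=NM$ as written; the stated conclusion $X_1\cdots X_{k-1}X_k$ is nevertheless the right one, and the $\tr_2$ version is correctly used in your mirror argument for the second identity.
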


\begin{proof}
 Let $\{\ket{\a}\}$ be an orthonormal basis for $\C^d$. 
 Decompose 
   $X_{i} = \sum_{\a_i,\b_i =1}^d \chi_{\a_i \b_i}^{(i)} \ketbra{{\a_i}}{{\b_i}}$.
Then
\begin{align}
&\quad  \tr_{1 \dots k \backslash k}  \big[ T{((k\dots 1))} X_1 \ot X_2 \ot \cdots \ot X_k
\big] \nn\\
&= \tr_{1 \dots k \backslash k}  \big[ T{((k \dots 1))}
\sum_{\a_1,\b_1 =1}^d \chi_{\a_1 \b_1}^{(1)} \ketbra{{\a_1}}{{\b_1}} \ot
\sum_{\a_2,\b_2 =1}^d \chi_{\a_2 \b_2}^{(2)} \ketbra{{\a_2}}{{\b_2}} \ot
\dots \ot
\sum_{\a_k,\b_k =1}^d \chi_{\a_k \b_k}^{(k)} \ketbra{{\a_k}}{{\b_k}}
\big]\nn\allowdisplaybreaks\\
&= 
 \tr_{1 \dots k \backslash k}  \big[ \phantom{T{((k \dots 1))}}
\sum_{\a_1, \b_1 =1}^d \chi_{\a_1 \b_1}^{(1)} \ketbra{{\a_2}}{{\b_1}} \ot
\sum_{\a_2, \b_2 =1}^d \chi_{\a_2 \b_2}^{(2)} \ketbra{{\a_3}}{{\b_2}} \ot
\dots \ot
\sum_{\a_k, \b_k =1}^d \chi_{\a_k \b_k}^{(k)} \ketbra{{\a_{1}}}{{\b_k}}
\big] \nn \allowdisplaybreaks\\
&= 
\sum_{\a_1, \a_2, \dots, \a_k, \b_k =1}^d 
\chi_{\a_1 \a_2}^{(1)} \,
\chi_{\a_2 \a_3}^{(2)}
\dots 
\chi_{\a_{k-1} \a_k}^{(k)} \ketbra{{\a_{1}}}{{\b_k}}\nn\\
&= X_1 X_2 \cdots X_k\,.
\end{align}
The second relation can be shown in a similar way. This ends the proof.
\end{proof}

\subsection{Trace polynomials from permutations}

We now turn permutations into trace polynomials. For this it is helpful to introduce some additional notation.
Let a permutation $\pi \in S_k$ be given, decomposed into cycles as $\pi = \sigma_1 \dots \sigma_l$.
Naturally, cycles are equivalent whenever they differ by a cyclic shift of their elements. 
Here we demand a canonical ordering with which the elements are to be listed:
we require that the largest element of each cycle appear at its end, 
with the sequence of largest elements of cycles increasing.
In particular, $k$ appears then as the last item in the last cycle, $\sigma_l = (\dots k)$.
For example, the permutation written as $(3)(45)(216)$ is canonically ordered.

Let a set of matrices $X_1, \dots X_k$ be given. 
For a cycle $\sigma = (\sigma^{(1)} \sigma^{(2)} \dots \sigma^{(m)})$ 
we denote by $R_{\sigma}$ the product of $X_i$'s according to the 
canonical ordering with which the positions appear in the cycle.
\begin{equation}\label{eq:R_helper_function}
 R_{\sigma} = X_{\sigma^{(1)}} X_{\sigma^{(2)}} \cdots X_{\sigma^{(m)}}\,.
\end{equation}
For example, $R_{(314)} = X_3 X_1 X_4$.
For cycles of length one such as $\sigma = (i)$ one simply has $R_{\sigma} = R_{(i)} = X_{i}$.

We generalize Proposition~\ref{prop:generalized_swap_identities} 
to arbitrary permutations.
\begin{corollary}\label{cor:perm_and_ptrace}
    Let $X_1, \dots, X_k \in \MM_d$.
    Given a permutation $\pi \in S_k$ 
    consider its canonically ordered cycle decomposition $\pi = \sigma_1 \dots \sigma_l$.
    Then
    \begin{align}
    \tr_{1 \dots k \backslash k} \, [ T(\pi^{-1}) X_1 \ot X_2 \ot \cdots \ot X_k] 
    &= \tr(R_{\sigma_1})          \cdots \tr(R_{\sigma_{l-1}})   R_{\sigma_l} \,, \nonumber \\
    \tr_{ 1 \dots k \backslash \pi(k)} \, [ T(\pi) X_1 \ot X_2 \ot \cdots \ot X_k] 
    &= \tr(R_{\sigma_1^{-1}}) \cdots \tr(R_{\sigma_{l-1}^{-1}})   R_{\sigma_l^{-1}} \,.
    \end{align}
\end{corollary}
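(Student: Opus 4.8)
The plan is to reduce Corollary~\ref{cor:perm_and_ptrace} to Proposition~\ref{prop:generalized_swap_identities} by exploiting the fact that $T$ is a representation and that the partial trace factorizes over a tensor product of spaces on which a permutation acts block-diagonally. Concretely, given $\pi \in S_k$ with canonically ordered cycle decomposition $\pi = \sigma_1 \cdots \sigma_l$, the cycles $\sigma_1, \dots, \sigma_l$ act on disjoint subsets $B_1, \dots, B_l \subseteq \{1,\dots,k\}$ whose union is $\{1,\dots,k\}$, with $k \in B_l$ by the canonical ordering (so $\pi(k) \in B_l$ too, since $\pi$ maps $B_l$ to itself). Hence $T(\pi^{-1}) = \bigotimes_{j=1}^l T(\sigma_j^{-1})$, where each $T(\sigma_j^{-1})$ acts only on the tensor factors indexed by $B_j$.

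The first step is to observe that, because the operator $T(\pi^{-1}) \, X_1 \ot \cdots \ot X_k$ is a tensor product across the blocks $B_1, \dots, B_l$, the partial trace over $\{1,\dots,k\} \setminus \{k\}$ factorizes: tracing out all of $B_1, \dots, B_{l-1}$ produces scalars $c_j = \tr\bigl[ T(\sigma_j^{-1}) \bigotimes_{i \in B_j} X_i \bigr]$, while the partial trace over $B_l \setminus \{k\}$ of $T(\sigma_l^{-1}) \bigotimes_{i \in B_l} X_i$ yields an operator on the $k$-th factor. So it suffices to handle a single cycle. For a cycle $\sigma = (\sigma^{(1)} \cdots \sigma^{(m)})$ of length $m \geq 3$ ending in its largest element, applying the relabeling of tensor factors $\sigma^{(1)},\dots,\sigma^{(m)}$ to $1,\dots,m$ turns $\sigma^{-1}$ into exactly the cycle $(m \cdots 1)$ of Proposition~\ref{prop:generalized_swap_identities}, giving $\tr_{1\dots m \backslash m}[T((m\dots1)) X_{\sigma^{(1)}} \ot \cdots \ot X_{\sigma^{(m)}}] = X_{\sigma^{(1)}} \cdots X_{\sigma^{(m)}} = R_\sigma$. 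Taking a further full trace over the remaining factor gives $\tr(R_\sigma)$; this covers the scalar factors $c_j = \tr(R_{\sigma_j})$ for $j < l$ and, without the final trace, the operator $R_{\sigma_l}$ on factor $k$. The cases $m = 1$ (where $\tr_\emptyset[T((i)) X_i] = X_i = R_{(i)}$ trivially, and the scalar version is $\tr(X_i)$) and $m = 2$ (which is the swap identity~\eqref{eq:swap_under_ptrace}, $\tr_1[\Gamma\, M \ot N] = MN$, whose full trace is $\tr(MN)$ by~\eqref{eq:swap_property}) must be checked separately since Proposition~\ref{prop:generalized_swap_identities} is stated for $k \geq 3$; both are immediate.

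Assembling the pieces: $\tr_{1\dots k\backslash k}[T(\pi^{-1}) X_1 \ot \cdots \ot X_k] = \prod_{j=1}^{l-1}\tr(R_{\sigma_j}) \cdot R_{\sigma_l}$, which is the first claimed identity. The second identity follows by the same argument applied to $T(\pi)$ instead of $T(\pi^{-1})$: now each cycle $\sigma_j$ (rather than $\sigma_j^{-1}$) appears, and Proposition~\ref{prop:generalized_swap_identities}'s second relation — or equivalently reversing the role of $(1\dots k)$ and $(k\dots1)$ — gives the reversed products $R_{\sigma_j^{-1}}$, with the surviving tensor factor being the one indexed by $\pi(k)$ (the image of $k$ under $\sigma_l$), explaining the index $\pi(k)$ in $\tr_{1\dots k \backslash \pi(k)}$. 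I expect the only mildly delicate point to be the bookkeeping of \emph{which} tensor factor survives the partial trace and matching it to the canonical ordering convention — in particular verifying that it is $k$ (resp. $\pi(k)$) and not some other element of $B_l$ — together with cleanly justifying the factorization of the partial trace across blocks; the algebraic content is entirely contained in Proposition~\ref{prop:generalized_swap_identities} and the elementary swap identities.
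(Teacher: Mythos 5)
Your proposal is correct and follows essentially the same route as the paper: factorize the operator along the disjoint cycles, take full traces of the blocks not containing $k$, and apply Proposition~\ref{prop:generalized_swap_identities} (after relabeling) to the last cycle. You are in fact more careful than the paper's own two-line proof, which leaves the relabeling, the $m=1,2$ cases, and the bookkeeping of the surviving tensor factor implicit.
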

\begin{proof}
    The expression factorizes along the (disjoint) cycles and we make use of Proposition~\ref{prop:generalized_swap_identities} for the last term.
    \begin{align}
            \tr_{1 \dots k \backslash k}  [ T(\pi^{-1}) X_1 \ot X_2 \ot \cdots \ot X_k]
            &=  \tr [X_{\sigma_1^{(1)}} \cdots X_{\sigma_1^{(m_1)}}] 
                \cdots
                \tr [X_{\sigma_{l-1}^{(1)}} \cdots X_{\sigma_{l-1}^{(m_{l-1})}}]
                \cdot \tr_{\sigma_l \backslash k} [T(\sigma_l^{-1}) X_{\sigma_l^{(1)}} \ot \cdots \ot X_{\sigma_l^{(m_l)}}]   \nn\\
            &=  \tr(R_{\sigma_1}) \tr(R_{\sigma_2}) \cdots \tr(R_{\sigma_{l-1}}) \, R_{\sigma_l}\,,
    \end{align}    
where $\tr_{\sigma}$ denotes the partial trace over all elements in cycle $\sigma$.
The second relation can be shown in a similar way. This ends the proof.
\end{proof}

Taking a full trace
\begin{equation}\label{eq:complete_contraction}
 \tr[T(\pi^{-1}) X_1 \ot \cdots \ot X_k] = \tr(R_{\sigma_1}) \tr(R_{\sigma_2}) \dots \tr(R_{\sigma_l})\,,
\end{equation}
one arrives at a product of traces ---  that is, a {\em polynomial invariant}.
For multipartite systems more general tensor contractions can be obtained in similar ways.
These then correspond to local unitary invariants~\cite{817508}.
However, these expression cannot always be written in terms of the basic matrix operations {\em trace, partial trace, partial transpose}, and {\em matrix multiplication} when $k\geq 4$~\cite{Szalay_2012}.
Then more general wiring diagrams as used for tensor networks can be useful~\cite{Bridgeman_2017}.

\section{The polarized Cayley-Hamilton map}\label{sect:pCH}
Here we construct a first example of a trace polynomial that is positive on the positive cone,
which equivalently can be understood as a multilinear equivariant positive map.
We show that this map is not only positive, 
but also 
equivariant under unitaries, 
completely copositive, 
and tensor-stable.
Interestingly, this trace polynomial inequality for the positive cone 
has the same form as a matrix {\em identity} found by Lew in 1966, 
emphasizing the point that {\em ``inequalities are not broken equations''}~\cite{Orlin2019}.

Recall that $S_k$ is the symmetric group and $T$ the unitary representation that permutes the tensor factors from 
$(\C^d)^{\otimes k}$.
Let $\lambda$ be a partition of $k$, $\chi_\lambda$ be its character, 
and $P_\lambda$ the associated central Young projector given by
\begin{equation}
 P_\lambda = \frac{\chi_\lambda(e)}{k!} \sum_{\pi \in S_k} \chi_\lambda(\pi^{-1}) T(\pi)\,.
\end{equation} 

\begin{definition}\label{def:polCH_map}
Let $X_1, \dots, X_{k-1}$ be complex $d \times d$ matrices.
Let $\lambda \vdash k$ be a partition and 
$P_\lambda$ be the corresponding central Young projector. 
We define the polarized Cayley-Hamilton map
%  = \sum_{\pi \in S_k} w_\pi T(\pi)$
$ f_\lambda: \MM_d^{k-1} \to \MM_d$ 
as
\begin{equation}\label{eq:polCH_map}
    f_\lambda(X_1, \dots, X_{k-1}) =
    \tr_{ 1\dots k \backslash k } \big[P_\lambda (X_1 \ot \dots \ot X_{k-1} \ot \one)\big]\,,
\end{equation}
where the trace is performed over all but the last tensor factors.
\end{definition}
A complete list of all non-trivial polarized Cayley-Hamilton maps up to $k=4$ can be found in Appendix~\ref{app:tables}.

\subsection{Some observations}
\begin{lemma}\label{lemma:helper_trace_identity}
For all $\ket{v}, \ket{w} \in \C^d$ and $X_1,\dots,X_{k-1} \in \MM_d$, it holds that 
 \begin{equation}
  \bra{w} f_\lambda(X_1, \dots, X_{k-1}) \ket{v}
 = \tr\big[P_\lambda (X_1 \ot \cdots \ot X_{k-1} \ot \ketbra{v}{w})\big]\,.
 \end{equation}  
\end{lemma}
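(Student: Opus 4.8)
The plan is to unfold the definition of $f_\lambda$ from Definition~\ref{def:polCH_map} and use the coordinate-free characterization of the partial trace [Eq.~\eqref{eq:coord_free_def_ptrace}] to absorb the matrix element $\bra{w}\,\cdot\,\ket{v}$ into a full trace on the last tensor factor. Concretely, $f_\lambda(X_1,\dots,X_{k-1})$ is an operator on the last copy of $\C^d$, obtained by tracing out the first $k-1$ factors of $P_\lambda(X_1\ot\cdots\ot X_{k-1}\ot\one)$. Sandwiching an operator $A$ on $\C^d$ between $\bra{w}$ and $\ket{v}$ is the same as taking $\tr[A\,\ketbra{v}{w}]$, since $\tr[A\ketbra{v}{w}] = \bra{w}A\ket{v}$.

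First I would write
\begin{equation}
 \bra{w}f_\lambda(X_1,\dots,X_{k-1})\ket{v}
 = \tr\big[f_\lambda(X_1,\dots,X_{k-1})\,\ketbra{v}{w}\big]
 = \tr\big[\tr_{1\dots k\backslash k}\big(P_\lambda(X_1\ot\cdots\ot X_{k-1}\ot\one)\big)\,\ketbra{v}{w}\big]\,.
\end{equation}
Then I would invoke the coordinate-free definition of the partial trace: with $M = P_\lambda(X_1\ot\cdots\ot X_{k-1}\ot\one)$ acting on $(\C^d)^{\ot(k-1)}\ot\C^d$ and $N = \ketbra{v}{w}$ acting on the last factor, Eq.~\eqref{eq:coord_free_def_ptrace} gives $\tr[\tr_{1\dots k\backslash k}(M)\,N] = \tr[M(\one^{\ot(k-1)}\ot N)]$. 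Substituting $N = \ketbra{v}{w}$ and multiplying $X_{k-1}\ot\one$ by $\one\ot\ketbra{v}{w}$ collapses the last slot from $\one$ to $\ketbra{v}{w}$, yielding
\begin{equation}
 \bra{w}f_\lambda(X_1,\dots,X_{k-1})\ket{v}
 = \tr\big[P_\lambda(X_1\ot\cdots\ot X_{k-1}\ot\ketbra{v}{w})\big]\,,
\end{equation}
which is the claim.

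There is essentially no obstacle here; the only point requiring a small amount of care is bookkeeping on which tensor factor the partial trace acts and checking that $(X_1\ot\cdots\ot X_{k-1}\ot\one)(\one^{\ot(k-1)}\ot\ketbra{v}{w}) = X_1\ot\cdots\ot X_{k-1}\ot\ketbra{v}{w}$, together with the commutation of $P_\lambda$ with this factorization being irrelevant since $P_\lambda$ sits on the left throughout. I would also note explicitly that linearity in the $X_i$ is not needed for this lemma — it is a pointwise identity — so the statement holds verbatim for arbitrary $X_i\in\MM_d$, not just positive ones.
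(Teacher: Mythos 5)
Your proof is correct and follows essentially the same route as the paper's: convert the matrix element to $\tr[f_\lambda(\dots)\,\ketbra{v}{w}]$, apply the coordinate-free definition of the partial trace from Eq.~\eqref{eq:coord_free_def_ptrace}, and absorb $\ketbra{v}{w}$ into the last tensor slot. Your added remarks on bookkeeping and on the identity holding pointwise for arbitrary $X_i \in \MM_d$ are accurate but not needed beyond what the paper already does.
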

\begin{proof}
We use the coordinate-free definition of the partial trace from Eq.~\eqref{eq:coord_free_def_ptrace}. 
  \begin{align}
    \bra{w} f_\lambda(X_1, \dots, X_{k-1}) \ket{v} 
%    &= \tr\big[f_\lambda(X_1, \dots, X_{k-1})  \cdot \ketbra{v}{w}\big] \nn\\
   &= \tr\big\{\tr_{ 1\dots k \backslash k } [P_\lambda (X_1 \ot \dots \ot X_{k-1} \ot \one)]  \cdot  \ketbra{v}{w} \big\} \nn\\
   &= \tr\big\{ [P_\lambda (X_1 \ot \dots \ot X_{k-1} \ot \one)] (\one \ot \dots \ot \one \ot \ketbra{v}{w}) \big\}   \nn\\
   &= \tr\big[ P_\lambda (X_1 \ot \dots \ot X_{k-1} \ot \ketbra{v}{w}) \big] \,.
   \end{align}
  This ends the proof.
\end{proof}

We translate $f_\lambda$ into a trace polynomial.
Given only $k-1$ matrix variables $X_1, \dots, X_{k-1}$, 
we define 
$\tilde R_{\sigma} = R_{\sigma}(X_1, \dots, X_{k-1}, X_k = \one)$.

\begin{observation}\label{lemma:PCH_V2}
The map $f_\lambda$ can be written as
\begin{equation}\label{eq:PCH_V2}
f_\lambda(X_1, \dots, X_{k-1}) = 
\frac{\chi_\lambda(e)}{k!} \sum_{\pi \in S_k} \chi_\lambda(\pi^{-1}) 
\prod_{i=1}^{l-1} \tr(R_{\sigma_i^\pi}) \tilde{R}_{\sigma_l^\pi}\,.
\end{equation}
\end{observation}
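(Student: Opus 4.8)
The plan is to expand the central Young projector $P_\lambda$ in the basis of permutation operators and then apply Corollary~\ref{cor:perm_and_ptrace} term by term. By definition,
\begin{equation}
 f_\lambda(X_1, \dots, X_{k-1}) =
 \tr_{1\dots k \backslash k}\Big[ \tfrac{\chi_\lambda(e)}{k!} \sum_{\pi \in S_k} \chi_\lambda(\pi^{-1}) T(\pi) \,(X_1 \ot \dots \ot X_{k-1} \ot \one)\Big]\,.
\end{equation}
Since the partial trace is linear and the scalars $\tfrac{\chi_\lambda(e)}{k!}\chi_\lambda(\pi^{-1})$ pull out, it suffices to evaluate $\tr_{1\dots k \backslash k}[T(\pi)(X_1 \ot \dots \ot X_{k-1}\ot\one)]$ for each fixed $\pi$.

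First I would reindex the sum. Corollary~\ref{cor:perm_and_ptrace} is stated for $T(\pi^{-1})$ tracing out all but the $k$-th factor, giving $\prod_{i=1}^{l-1}\tr(R_{\sigma_i})\, R_{\sigma_l}$ where $\sigma_1\cdots\sigma_l$ is the canonically ordered cycle decomposition of $\pi$. Here the summand involves $T(\pi)$ rather than $T(\pi^{-1})$, so I would substitute $\pi \mapsto \pi^{-1}$ in the summation index; because $\chi_\lambda$ is a class function and every permutation is conjugate to its inverse, $\chi_\lambda(\pi^{-1})$ is unchanged in form once we relabel (more carefully: the map $\pi\mapsto\pi^{-1}$ is a bijection of $S_k$, and under it $\chi_\lambda(\pi^{-1})$ becomes $\chi_\lambda(\pi)=\chi_\lambda(\pi^{-1})$ by reality of symmetric-group characters), so the relabelled sum has exactly the same coefficients. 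Thus
\begin{equation}
 f_\lambda(X_1,\dots,X_{k-1}) = \tfrac{\chi_\lambda(e)}{k!}\sum_{\pi\in S_k}\chi_\lambda(\pi^{-1})\,\tr_{1\dots k\backslash k}\big[T(\pi^{-1})(X_1\ot\dots\ot X_{k-1}\ot\one)\big]\,.
\end{equation}

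Next I would apply Corollary~\ref{cor:perm_and_ptrace} directly to each term, obtaining $\prod_{i=1}^{l-1}\tr(R_{\sigma_i^\pi})\cdot R_{\sigma_l^\pi}$, where $\sigma_1^\pi\cdots\sigma_l^\pi$ is the canonically ordered cycle decomposition of $\pi$ and the $R$'s are the matrix products defined in Eq.~\eqref{eq:R_helper_function}. Since $X_k$ has been set to $\one$, and the canonical ordering places $k$ at the end of the last cycle $\sigma_l^\pi$, the factors $R_{\sigma_i^\pi}$ for $i<l$ never involve $X_k$ (the cycles are disjoint and $k$ lies in $\sigma_l^\pi$), so $R_{\sigma_i^\pi} = \tilde R_{\sigma_i^\pi}$ automatically; only the last product picks up the substituted $X_k=\one$, giving $R_{\sigma_l^\pi} = \tilde R_{\sigma_l^\pi}$. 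Collecting everything yields exactly Eq.~\eqref{eq:PCH_V2}.

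The only subtle point — and the step I would be most careful about — is the bookkeeping around the $\pi\leftrightarrow\pi^{-1}$ substitution and the canonical cycle ordering: one must check that inverting the permutation index does not disturb the coefficients (it does not, by the reality of $\chi_\lambda$ and the bijectivity of inversion) and that "the canonically ordered cycle decomposition of $\pi$" appearing in Eq.~\eqref{eq:PCH_V2} is consistent with what Corollary~\ref{cor:perm_and_ptrace} produces after the relabelling. Everything else is a direct substitution, so no genuine obstacle arises; the content is entirely in Corollary~\ref{cor:perm_and_ptrace}, which does the heavy lifting.
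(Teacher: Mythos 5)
Your proposal is correct and follows essentially the same route as the paper: expand $P_\lambda$ as $\frac{\chi_\lambda(e)}{k!}\sum_\pi \chi_\lambda(\pi^{-1})T(\pi)$, use linearity of the partial trace, and apply Corollary~\ref{cor:perm_and_ptrace} term by term, with the $\pi\leftrightarrow\pi^{-1}$ relabelling absorbed by the reality of symmetric-group characters. You make the reindexing step and the fact that only $R_{\sigma_l^\pi}$ is affected by setting $X_k=\one$ explicit, which the paper leaves implicit, but the argument is the same.
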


\begin{proof}
    We use Corollary~\ref{cor:perm_and_ptrace} and the decomposition 
    $P_\lambda = \frac{\chi_\lambda(e)}{k!} \sum_{\pi \in S_k} \chi_\lambda(\pi^{-1}) T(\pi)$. Then
    \begin{align}
    \frac{\chi_\lambda(e)}{k!} \sum_{\pi \in S_k} \chi_\lambda(\pi^{-1}) 
    \prod_{i=1}^{l-1} \tr(R_{\sigma_i^\pi}) \tilde{R}_{\sigma_l^\pi} 
    &= \frac{\chi_\lambda(e)}{k!} \sum_{\pi \in S_k} \chi_\lambda(\pi^{-1})  \tr_{1\dots k \backslash k} \big[ T(\pi^{-1}) X_1\ot \dots \ot X_{k-1} \ot \one\big]\nonumber\\
    &= \tr_{1\dots k \backslash k}\big[ \frac{\chi_\lambda(e)}{k!} \sum_{\pi \in S_k} \chi_\lambda(\pi^{-1})  T(\pi) X_1\ot \dots \ot X_{k-1} \ot \one \big]\nonumber\\
    &= \tr_{1\dots k \backslash k}\big[P_\lambda (X_1\ot \dots \ot X_{k-1} \ot \one)\big]\,.
    \end{align}
    This ends the proof.
\end{proof}

We are ready to explore some interesting properties of $f_\lambda$.

\subsection{The polarized Cayley-Hamilton identity}
The following matrix identity was proven by Lew in 1966, for which we give a new proof.
\begin{theorem}[Polarized Cayley-Hamilton identity \cite{Lew1966}]\label{thm:polCH}
 Let $\lambda$ be a partition of $k$ with strictly more than $d$ parts (i.e. the associated tableau has strictly more than $d$ rows).
 For all $X_1, \dots, X_{k-1} \in \MM_d$ it holds that
\begin{equation}
    f_\lambda(X_1, \dots, X_{k-1}) = 0\,.
\end{equation} 
\end{theorem}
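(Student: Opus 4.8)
The plan is to use the Schur--Weyl duality to show that the Young projector $P_\lambda$ simply annihilates the relevant tensor-product space, so that the trace polynomial $f_\lambda$ vanishes for the trivial reason that $P_\lambda = 0$ as an operator on $(\C^d)^{\otimes k}$. Concretely: by Theorem~\ref{thm:schur_weyl}, the decomposition $(\C^d)^{\otimes k} = \bigoplus_{\mu \vdash k,\, \text{parts}(\mu) \leq d} \UU_\mu \ot \SS_\mu$ runs only over partitions with at most $d$ parts. The central Young projector $P_\mu = \hat T(\omega_\mu)$ is the orthogonal projector onto the isotypic component $\UU_\mu \ot \SS_\mu$; hence if $\lambda$ has strictly more than $d$ parts, the summand indexed by $\lambda$ is absent and $P_\lambda = \hat T(\omega_\lambda) = 0$ as an operator on $(\C^d)^{\otimes k}$. (This is the statement flagged after Theorem~\ref{thm:schur_weyl} and made precise in Proposition~\ref{prop:pigeon} of Appendix~\ref{app:RepTheorySk}: a partition with more than $d$ rows contributes nothing to $(\C^d)^{\otimes k}$.)

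First I would state and invoke this vanishing: $P_\lambda = 0$ on $(\C^d)^{\otimes k}$ whenever $\text{parts}(\lambda) > d$. Then I would plug this directly into Definition~\ref{def:polCH_map}: since $f_\lambda(X_1,\dots,X_{k-1}) = \tr_{1\dots k \backslash k}[P_\lambda (X_1 \ot \cdots \ot X_{k-1} \ot \one)]$ and $P_\lambda$ is the zero operator, the argument of the partial trace is the zero operator, and therefore $f_\lambda(X_1,\dots,X_{k-1}) = 0$ for all $X_1,\dots,X_{k-1} \in \MM_d$. That is the entire argument; it is essentially a one-line consequence of the representation-theoretic setup already laid out.

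There is really no serious obstacle here — the only thing to be careful about is the bookkeeping between the abstract idempotent $\omega_\lambda \in \C S_k$ and its image $\hat T(\omega_\lambda)$ under the representation $T$ on $(\C^d)^{\otimes k}$: the idempotent $\omega_\lambda$ is nonzero in the group algebra $\C S_k$ for every $\lambda \vdash k$, but its image under $\hat T$ is zero precisely when the corresponding irreducible $\SS_\lambda$ does not appear in $(\C^d)^{\otimes k}$, i.e. when $\lambda$ has more than $d$ parts. So the statement to cite is about $P_\lambda = \hat T(\omega_\lambda)$ acting on the concrete space, not about $\omega_\lambda$ in the abstract algebra. Alternatively, if one prefers a character-theoretic phrasing that avoids appealing to the explicit Schur--Weyl decomposition, one can note that $P_\lambda$ projects onto a subspace whose dimension is $\dim(\UU_\lambda)\cdot\dim(\SS_\lambda)$, and $\dim(\UU_\lambda) = 0$ (there is no polynomial $GL_d$-irrep for a partition with more than $d$ rows, e.g. its Weyl dimension formula gives $0$), so $\tr(P_\lambda) = 0$; since $P_\lambda \geq 0$, this forces $P_\lambda = 0$. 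Either way one concludes $f_\lambda \equiv 0$.

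I would keep the write-up to a few lines: cite Theorem~\ref{thm:schur_weyl} and Proposition~\ref{prop:pigeon}, conclude $P_\lambda = 0$, and substitute into Eq.~\eqref{eq:polCH_map}. It is worth remarking in passing that this recovers Lew's identity~\cite{Lew1966} and, specialized to $d = 2$, $k = 3$, $\lambda = (1,1,1)$ (so $P_\lambda = P_\omega$ the antisymmetrizer), reproduces Eq.~\eqref{eq:example0} as an identity on $2\times 2$ matrices — the same expression that Eq.~\eqref{eq:example00} shows is positive, not zero, on the positive cone for general $d$, illustrating the ``inequalities are not broken equations'' point made in the section preamble.
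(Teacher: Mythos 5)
Your proposal is correct and follows essentially the same route as the paper: both invoke Schur--Weyl duality together with Proposition~\ref{prop:pigeon} to conclude that $P_\lambda$ annihilates $(\C^d)^{\otimes k}$ when $\lambda$ has more than $d$ parts (the paper phrases this as $P_\lambda$ killing all product vectors and hence $P_\lambda(X_1\ot\cdots\ot X_{k-1}\ot\one)=0$, which is equivalent to your ``$P_\lambda=0$ as an operator'' since product vectors span the space), and then substitute into Eq.~\eqref{eq:polCH_map}. Your alternative trace/positivity argument for $P_\lambda=0$ is a fine extra remark but does not change the substance.
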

 \begin{proof}
    It follows from the Schur-Weyl Duality~[Theorem~\ref{thm:schur_weyl}] that
    $P_\lambda  \ket{\phi_1} \ot \cdots \ot \ket{\phi_k} = 0$ when
    $\lambda$ has strictly more than $d$ parts 
    (c.f Proposition~\ref{prop:pigeon} in Appendix~\ref{app:RepTheorySk}).
    Expanding the matrices $X_i$ in a vector basis, one has
    $P_\lambda X_1 \ot \cdots \ot X_{k-1} \ot \one = 0$, and consequently also 
    $f_\lambda(X_1, \dots, X_{k-1}) = \tr_{1\dots k \backslash k} [P_\lambda (X_1 \ot \cdots \ot X_{k-1} \ot \one)] = 0$.
    This ends the proof.
    \end{proof}

\begin{example}
The partition $\lambda=(1,1) \vdash 2$ yields the idempotent
$\omega_\lambda = \frac{1}{2}[() - (12)]$.
By evaluating Eq.~\eqref{eq:polCH_map} we obtain (after a scaling)
$f_\lambda(X) = \tr(X) - X$, in entanglement theory also known as the reduction map.
Theorem~\ref{thm:polCH} states the (trivial) fact that $f_\lambda$ vanishes on $1\times 1$ matrices.
\end{example}

Procesi and Razmyslov independently showed that {\em all} multilinear trace identities 
that hold for complex $d\times d$ matrices are consequences of the Cayley-Hamilton Theorem 
and that they are completely described by 
Young tableaux~\cite{Formanek1989,Razmyslov_1974,PROCESI1976306};
see Section~\ref{sect:PI} for more details.

\subsection{$f_\lambda$ is positive on the positive cone}
We now extend the polarized Cayley-Hamilton identity [Theorem~\ref{thm:polCH}]
to an inequality for the positive cone. 
We term it the {\em polarized Cayley-Hamilton inequality}.
However, 
in light of concepts known from quantum information theory 
(c.f. Section \ref{sect:pCHinQIT}),
one could equally see it as a multivariate generalization 
of the universal state inversion
or of the shadow operator inequality.

\begin{theorem}\label{thm:pCH_pos}
 The map $f_\lambda$ is positive on the positive cone.
 In other words, 
\begin{equation}
 f_\lambda(X_1, \dots, X_{k-1}) \geq 0 \quad \text{whenever} \quad X_1, \dots, X_{k-1} \geq 0\,.
\end{equation}
\end{theorem}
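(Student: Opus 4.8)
The plan is to use the self-duality of the positive cone, exactly as outlined in the proof-strategy section. By the self-duality characterization of $\MM_d^+$, it suffices to show that $\tr[f_\lambda(X_1,\dots,X_{k-1})\, B] \geq 0$ for every $B \geq 0$, given $X_1,\dots,X_{k-1}\geq 0$. First I would invoke Lemma~\ref{lemma:helper_trace_identity} — or rather the same coordinate-free partial-trace manipulation behind it — with $\ketbra{v}{w}$ replaced by a general positive semidefinite $B$. Concretely, using Eq.~\eqref{eq:coord_free_def_ptrace},
\begin{equation}
 \tr\big[f_\lambda(X_1,\dots,X_{k-1})\, B\big]
 = \tr\big[P_\lambda\,(X_1 \ot \cdots \ot X_{k-1} \ot B)\big]\,.
\end{equation}

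Now the right-hand side is the Hilbert-Schmidt inner product $\langle P_\lambda,\, X_1 \ot \cdots \ot X_{k-1} \ot B\rangle$ (note $P_\lambda$ is hermitian, so $P_\lambda^\dag = P_\lambda$). Both arguments are positive semidefinite: $P_\lambda \geq 0$ because it is a central Young projector, and $X_1 \ot \cdots \ot X_{k-1} \ot B \geq 0$ because a tensor product of positive semidefinite matrices is positive semidefinite. Hence the inner product is nonnegative, which gives $\tr[f_\lambda(X_1,\dots,X_{k-1})\,B]\geq 0$ for all $B\geq 0$, and therefore $f_\lambda(X_1,\dots,X_{k-1}) \geq 0$ by self-duality. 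One should also check hermiticity of $f_\lambda(X_1,\dots,X_{k-1})$ first so that ``positive semidefinite'' is meaningful — this follows since $P_\lambda$ and each $X_i$ (and $\one$) are hermitian and the partial trace preserves hermiticity; alternatively it is subsumed by the fact that $\bra{\phi}f_\lambda\ket{\phi} = \tr[P_\lambda(X_1\ot\cdots\ot X_{k-1}\ot\dyad{\phi})] \in \R$ for all $\ket{\phi}$, which already proves both hermiticity and positivity at once via the Horn–Johnson characterization cited in Section~\ref{sec:prelim}.

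I do not expect a genuine obstacle here: the argument is essentially the three-line computation in Eq.~\eqref{eq:sketch3} specialized to $\PP = P_\lambda$, and the only things being used are (i) $P_\lambda\geq 0$, (ii) positivity of tensor products, and (iii) self-duality of the cone. The one place to be slightly careful is bookkeeping of which tensor factor carries $B$ versus $\one$ and making sure the partial trace in the definition of $f_\lambda$ is being dualized against the correct identity factor — but this is exactly the content of Eq.~\eqref{eq:coord_free_def_ptrace} applied on the last tensor slot, so it is routine. It is worth remarking that this proof makes no use of the partition $\lambda$ being ``too tall'' (unlike Theorem~\ref{thm:polCH}); positivity holds for every $\lambda \vdash k$, and indeed for any positive semidefinite $\PP$ in place of $P_\lambda$, which is precisely the generalization pursued in Section~\ref{sect:matrix_ineq}.
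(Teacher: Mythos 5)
Your proposal is correct and follows essentially the same route as the paper: the paper's proof applies Lemma~\ref{lemma:helper_trace_identity} with $B=\dyad{\phi}$ and invokes the Horn--Johnson characterization (the variant you mention in your final alternative), which is just the rank-one special case of your self-duality argument. Both rest on the same two facts, namely $P_\lambda\geq 0$ and nonnegativity of the Hilbert--Schmidt inner product of positive semidefinite operators.
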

\begin{proof}
Recall that a matrix $X$ is positive semidefinite 
if and only if the expression $\bra{\phi} X \ket{\phi}$ is real and nonnegative 
for all $\ket{\phi} \in \C^d$~\cite[Theorem 4.1.4]{horn_johnson_2012}. 
 With Lemma~\ref{lemma:helper_trace_identity}, we obtain the nonnegative expression
\begin{equation}
    \bra{\phi} f_\lambda(X_1, \dots, X_{k-1}) \ket{\phi} 
    = \tr\big(P_\lambda X_1 \ot \cdots \ot X_{k-1} \ot \dyad{\phi}\big) \geq 0\,,
\end{equation} 
as latter is the Hilbert-Schmidt inner product of two positive semidefinite matrices.
This ends the proof.
\end{proof}
\begin{example}\label{ex:pCH_111}
 The partition $\lambda = (1,1,1) \vdash 3$ yields the idempotent
$\omega_\lambda= \frac{1}{6}[()-(12)-(13)-(23)+(123)+(132)]$ and 
 \begin{equation}
  f_\lambda(X,Y) = \frac{1}{6}[\tr(X)\tr(Y){} - \tr(XY){} - \tr(X)Y - \tr(Y)X + XY + YX]\,.
 \end{equation}
 The map $f_\lambda$ vanishes on $2\times 2$ matrices [Theorem~\ref{thm:polCH}] 
 and $f_\lambda(X,Y) \geq 0$ whenever $X, Y \geq 0$ 
 [Theorem~\ref{thm:pCH_pos}].
\end{example}

\subsection{$f_\lambda$ is equivariant under unitaries}
The map $f_\lambda$ arises as a lifting of polynomial invariants. 
Consequently $f_\lambda$ is equivariant under the simultaneous conjugate action of unitaries.

\begin{proposition}\label{prop:pCH_covariance}
  The map $f_\lambda$ is equivariant under the action of the unitary group, that is, 
  \begin{equation}
   f_\lambda(UX_1 U^{-1}, \dots, U X_{k-1} U^{-1}) = U f_\lambda(X_1, \dots, X_{k-1}) U^{-1} \quad \text{for all} \quad U \in \UU\,.
  \end{equation} 
  \end{proposition}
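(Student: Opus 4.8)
The plan is to exploit the defining formula $f_\lambda(X_1,\dots,X_{k-1}) = \tr_{1\dots k\backslash k}[P_\lambda(X_1\ot\dots\ot X_{k-1}\ot\one)]$ together with two facts: the central Young projector $P_\lambda$ commutes with the diagonal action of $GL_d(\C)$ (in particular of $\UU(d)$), which is a consequence of Schur--Weyl duality [Theorem~\ref{thm:schur_weyl}]; and the partial trace intertwines conjugation by $U^{\ot k}$ on the full space with conjugation by $U$ on the surviving tensor factor. Concretely, I would start from $f_\lambda(UX_1U^{-1},\dots,UX_{k-1}U^{-1}) = \tr_{1\dots k\backslash k}[P_\lambda(UX_1U^{-1}\ot\dots\ot UX_{k-1}U^{-1}\ot\one)]$ and rewrite the argument as $U^{\ot k}(X_1\ot\dots\ot X_{k-1}\ot\one)(U^{-1})^{\ot k}$, using that $U\one U^{-1}=\one$ in the last slot.

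Next I would move $U^{\ot k}$ past $P_\lambda$: since $P_\lambda = \hat T(\omega_\lambda)$ is central in the image of the group algebra and $T(\pi)$ commutes with $U^{\ot k}$ for every $\pi\in S_k$ by Schur--Weyl duality, we get $P_\lambda\, U^{\ot k} = U^{\ot k}\, P_\lambda$. Hence the expression inside the partial trace becomes $U^{\ot k}\,P_\lambda (X_1\ot\dots\ot X_{k-1}\ot\one)\,(U^{-1})^{\ot k}$. The final step is to pull the outer $U$ and $U^{-1}$ through the partial trace over the first $k-1$ factors: for any operator $M$ on $(\C^d)^{\ot k}$ one has $\tr_{1\dots k\backslash k}[(V\ot W)M(V^{-1}\ot W^{-1})] = W\,\tr_{1\dots k\backslash k}[(V^{-1}V\ot\one)M]\,W^{-1}$ when $V = U^{\ot(k-1)}$ acts on the traced-out factors and $W=U$ on the kept factor — cyclicity of the partial trace over the discarded block kills $V$, leaving $W\,\tr_{1\dots k\backslash k}[M]\,W^{-1}$. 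Applying this with $M = P_\lambda(X_1\ot\dots\ot X_{k-1}\ot\one)$ yields $U f_\lambda(X_1,\dots,X_{k-1})U^{-1}$, completing the proof.

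An alternative, perhaps cleaner, route avoids manipulating the partial trace directly: use Lemma~\ref{lemma:helper_trace_identity}, which says $\bra{w}f_\lambda(\vec X)\ket{v} = \tr[P_\lambda(X_1\ot\dots\ot X_{k-1}\ot\ketbra{v}{w})]$. Then $\bra{w} U f_\lambda(UX_1U^{-1},\dots)U^{-1}\ket{v} = \bra{U^{-1}w} f_\lambda(UX_1U^{-1},\dots)\ket{U^{-1}v}$ equals $\tr[P_\lambda(UX_1U^{-1}\ot\dots\ot UX_{k-1}U^{-1}\ot U\ketbra{v}{w}U^{-1})]$, which rewrites as $\tr[P_\lambda U^{\ot k}(X_1\ot\dots\ot\ketbra{v}{w})(U^{-1})^{\ot k}]$; now commute $U^{\ot k}$ with $P_\lambda$ and use cyclicity of the full trace to cancel $U^{\ot k}$ against $(U^{-1})^{\ot k}$, recovering $\bra{w}f_\lambda(X_1,\dots,X_{k-1})\ket{v}$. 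Since this holds for all $\ket{v},\ket{w}$, the two operators coincide.

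The only genuine subtlety — and the step I would be most careful about — is the commutation $P_\lambda U^{\ot k} = U^{\ot k} P_\lambda$: it rests on the fact that $U^{\ot k}$ lies in the commutant of $T(S_k)$, which is exactly the content of Schur--Weyl duality and was already recorded in the excerpt. Everything else is routine bookkeeping with the trace and the tensor-factor indexing; in particular one must keep track that the identity occupies the $k$-th slot so that conjugation by $U$ there is trivial, and that the partial trace is over precisely the first $k-1$ slots so the cancellation of $U^{\ot(k-1)}$ under the trace is legitimate. I would present the second (matrix-element) argument as the main proof since it sidesteps any delicate claim about how $U$ commutes with a partial trace.
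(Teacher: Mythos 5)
Your proposal is correct, and the matrix-element argument you designate as the main proof is essentially identical to the paper's: the paper also applies Lemma~\ref{lemma:helper_trace_identity}, rewrites the argument as $U^{\ot k}(X_1\ot\dots\ot\dyad{\phi})(U^{-1})^{\ot k}$, and cancels via Schur--Weyl duality and cyclicity of the trace (the paper tests only against $\dyad{\phi}$ and invokes hermiticity of $f_\lambda$, whereas your use of general $\ketbra{v}{w}$ sidesteps that caveat — a minor refinement). Your first route, pulling $U^{\ot(k-1)}$ through the partial trace directly, is also sound but is not the path the paper takes.
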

\begin{proof}
Because $f_\lambda$ is hermitian~\footnote{
If $X_1, \dots, X_{k-1}$ is not hermitian, 
then $f_\lambda$ is not necessarily hermitian either. 
However, the equivariance of $f_\lambda$ still holds.},
it is enough to show that 
  \begin{equation}\label{eq:equiv_to_show}
   \tr\big[ U^{-1} f_\lambda(UX_1 U^{-1}, \dots, U X_{k-1} U^{-1}) U \dyad{\phi}\big] 
   = \tr\big[f_\lambda(X_1, \dots, X_{k-1}) \dyad{\phi}\big] \,.
  \end{equation} 
holds for all $\dyad{\phi} \in \MM_d$.
Using Lemma~\ref{lemma:helper_trace_identity}, the cyclicity of the trace, and the Schur-Weyl duality [Theorem~\ref{thm:schur_weyl}] we write
  \begin{align}
     \tr\big[ U^{-1} f_\lambda(UX_1 U^{-1}, \dots, U X_{k-1} U^{-1}) U \cdot \dyad{\phi}\big] 
   = &\tr\big[ P_\lambda  \big( U X_1 U^{-1} \ot \cdots \ot U X_{k-1} U^{-1} \ot U \dyad{\phi} U^{-1}\big)\big] \nn\\
   = &\tr\big[ P_\lambda  U^{\otimes k}
                    \big(X_1 \ot \cdots \ot X_{k-1} \ot \dyad{\phi} \big) 
                    (U^{-1})^{\otimes k} \big] \nn\\
   = &\tr\big[ P_\lambda  (X_1 \ot \cdots \ot X_{k-1} \ot \dyad{\phi} ) \big] \nn\\
   = &\tr\big[ f_\lambda (X_1, \dots, X_{k-1}) \dyad{\phi} \big] \,.
  \end{align}
This ends the proof.
\end{proof}
Of course, the same proof establishes that 
more generally, 
$f_\lambda$ is equivariant under the action of $A^{\otimes k}$ with $A \in GL(\C^d)$.

\subsection{$f_\lambda$ is tensor-stable}
How does the map $f_\lambda$ behave under the tensor product?
To interpret expressions such as $f_\lambda^{\ot 3}$ or $f_\lambda \ot f_\mu$
we return to Definition~\ref{def:polCH_map},
\begin{equation}
    f_\lambda(X_1, \dots, X_{k-1}) =
    \tr_{ 1\dots k \backslash k } \big[P_\lambda (X_1 \ot \cdots \ot X_{k-1} \ot \one)\big]\,.
\end{equation}
It should now be clear how to define tensor products of this map.
Let $X_1, \dots, X_{k-1}$ be operators acting on $\C^d \ot \C^{d'}$ and 
let $\lambda$ and $\mu$ be (possibly distinct) partitions of $k$.
We define the tensor product of $f_\lambda$ and $f_\mu$ as
\begin{equation}
 f_\lambda \ot f_\mu(X_1, \dots, X_{k-1}) = \tr_{ 1 1'\dots k k' \backslash \{kk'\} } \big[P_{\lambda\mu}(X_{11'} \ot \cdots \ot X_{(k-1) (k-1)'} \ot \one_{d d'})\big]\,,
\end{equation}
where $ P_{\lambda\mu} = P_\lambda \ot P_\mu$ 
is a ''vertical`` tensor product, with $P_\lambda \in (M_{d})^{\ot k}$ and $P_\mu \in {M_{d'}}^{\ot k}$ 
acting on parties $1\dots k$ and $1' \dots k'$ respectively. 
The expression is positive on the positive cone because as in the proof of Theorem~\ref{thm:pCH_pos}, 
\begin{equation}
 \bra{\phi} f_\lambda \ot f_\mu(X_1, \dots, X_{k-1}) \ket{\phi} \geq 0
\end{equation} 
holds for all $\ket{\phi} \in \C^{d d'}$.
We arrive at the following result.
\begin{theorem}\label{thm:pCH_is_tensor_stable}
The map $f_\lambda$ is tensor-stable.
That is, 
for any choice of $n$ partitions $\lambda, \dots, \mu \vdash k$ and 
all $X_1, \dots, X_{k-1} \in M_{d}^+$, the expression
\begin{equation}
 f_{\lambda} \ot \cdots \ot f_{\mu} (X_1, \dots, X_{k-1}) \geq 0
\end{equation} 
is positive semidefinite on the positive cone of $d \times d$ matrices ($d = d_1 \cdots d_n$).
\end{theorem}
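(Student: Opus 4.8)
The key observation is that $f_\lambda$ was shown to be positive on the positive cone precisely because of the identity in Lemma~\ref{lemma:helper_trace_identity}, which expresses the matrix element $\bra{\phi} f_\lambda(X_1,\dots,X_{k-1})\ket{\phi}$ as the Hilbert--Schmidt inner product of the positive semidefinite operator $P_\lambda$ with the product $X_1 \ot \cdots \ot X_{k-1} \ot \dyad{\phi}$. My plan is to run the same argument one level up: show that the tensor product $f_\lambda \ot \cdots \ot f_\mu$ admits an analogous ``sandwiched'' representation, with $P_\lambda$ replaced by the ``vertical'' tensor product $P_{\lambda\dots\mu} = P_\lambda \ot \cdots \ot P_\mu$ and $\dyad{\phi}$ now a rank-one projector on the composite space $\C^{d_1} \ot \cdots \ot \C^{d_n} = \C^d$.

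First I would fix the bookkeeping: label the tensor factors of the $i$-th copy as $1^{(i)},\dots,k^{(i)}$, so that each matrix variable $X_j$ acts on the ``horizontal'' slice $\{j^{(1)},\dots,j^{(n)}\}$, i.e.\ $X_j = X_{j^{(1)}\cdots j^{(n)}}$, while each projector $P_{\lambda^{(i)}}$ acts on the ``vertical'' slice $\{1^{(i)},\dots,k^{(i)}\}$. These two sets of slices commute, and $P_{\lambda\dots\mu}$ is the product of the $P_{\lambda^{(i)}}$, hence still positive semidefinite (a tensor product of PSD operators). The definition of $f_\lambda \ot \cdots \ot f_\mu$ given just before the theorem statement is exactly the partial trace of $P_{\lambda\dots\mu}(X_1 \ot \cdots \ot X_{k-1} \ot \one_d)$ over all factors except the last vertical slice $\{k^{(1)},\dots,k^{(n)}\}$.

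Next I would invoke the coordinate-free definition of the partial trace from Eq.~\eqref{eq:coord_free_def_ptrace}, exactly as in the proof of Lemma~\ref{lemma:helper_trace_identity}, to get for every $\ket{\phi} \in \C^d$
\begin{equation}
 \bra{\phi}\, f_\lambda \ot \cdots \ot f_\mu(X_1,\dots,X_{k-1})\,\ket{\phi}
 = \tr\big[\, P_{\lambda\dots\mu}\,\big(X_1 \ot \cdots \ot X_{k-1} \ot \dyad{\phi}\big)\,\big]\,,
\end{equation}
where the last $\dyad{\phi}$ sits on the $\{k^{(1)},\dots,k^{(n)}\}$ slice. When $X_1,\dots,X_{k-1}\geq 0$ the operator $X_1 \ot \cdots \ot X_{k-1} \ot \dyad{\phi}$ is positive semidefinite, so the right-hand side is the Hilbert--Schmidt inner product of two positive semidefinite operators and is therefore nonnegative. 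Since this holds for all $\ket{\phi}\in\C^d$, the characterization of positive semidefiniteness via expectation values~\cite[Theorem 4.1.4]{horn_johnson_2012} gives $f_\lambda \ot \cdots \ot f_\mu(X_1,\dots,X_{k-1})\geq 0$, which is the claim.

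**Main obstacle.** The argument itself is short; the only real work is the indexing. The subtlety to be careful about is that the last tensor factor of each individual $f$ is not a separate party but gets merged, across the $n$ copies, into the single output system $\C^d$, and correspondingly the rank-one input $\dyad{\phi}$ ranges over $\C^d = \C^{d_1}\ot\cdots\ot\C^{d_n}$ rather than over product vectors — which is exactly what makes the conclusion nontrivial (it is positivity, not mere complete positivity, that is being preserved). I would make sure the partial-trace identity is applied on the correct composite output slice and that the commuting of horizontal $X_j$ slices with vertical $P_{\lambda^{(i)}}$ slices is stated explicitly so the reordering inside the trace is justified.
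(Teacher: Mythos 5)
Your proposal is correct and follows essentially the same route as the paper: the paper also defines $f_\lambda \ot \cdots \ot f_\mu$ via the ``vertical'' tensor product $P_{\lambda\dots\mu}=P_\lambda\ot\cdots\ot P_\mu\geq 0$ and then reruns the argument of Theorem~\ref{thm:pCH_pos}, sandwiching with $\ket{\phi}\in\C^{d_1\cdots d_n}$ and using the coordinate-free partial trace together with the nonnegativity of the Hilbert--Schmidt inner product of two positive semidefinite operators. Your explicit remark that $\ket{\phi}$ ranges over the full composite space (not just product vectors) correctly identifies why the statement is a genuine tensor-stability claim, and your indexing by horizontal and vertical slices matches the paper's bookkeeping.
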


We now state two examples with density matrices as variables. 
Then many normalization factors fall away.
\begin{example}\label{ex:shadow_op_exmpl}
The partitions $(1,1) \vdash 2$ and $(2) \vdash 2$ 
yield the idempotents
 $\omega_{-}   = \frac{1}{2}\big[() - (12)\big]$ and 
 $\omega_{+}     = \frac{1}{2}\big[() + (12)\big]$.
 The maps
$f_{-}(X) = \tr(X){} - X$ and 
$f_{+}  (X) = \tr(X){} + X$ follow. 
Let $T$ be a subset of $\{1 \dots n\}$ and 
choose for every $j$ the partition $(1,1)$ if $j\in T$ and the partition $(2)$ otherwise.
The positivity of $\bigotimes_{j \in T} f_{-}^{(j)} \bigotimes_{i \not \in T} f_{+}^{(i)}$
yields an inequality for multipartite quantum states: 
let $\rho$ be a density matrix acting on $\C^{d_1} \ot \cdots \ot \C^{d_n}$.
Then for all subsets $T \subseteq \{1\dots n\}$,
 \begin{equation}\label{eq:shadow}
        \sum_{S \subseteq \{1 \dots n\}} (-1)^{|S\cap T|} \tr_{S^c}(\rho) \ot \one_{S^c} \geq 0\,.
 \end{equation}
 Above inequality is also known as the generalized universal state inversion 
 or shadow operator inequality [c.f. Sections~\ref{sect:related}, \ref{sect:pCHinQIT}]. 
\end{example}

\begin{example}\label{eq:21_tensorstab}
The partition $\lambda = (2,1) \vdash 3$ yields the idempotent 
$\omega_{\lambda}= \frac{1}{3}\big[2() - (123) - (132)\big]$.
The map
$f_{\lambda}(X,Y) = \frac{1}{3} \big [ 2\tr(X) \tr(Y) {} - XY - YX \big]$ follows, 
whose wiring diagram is visualized in Figure~\ref{fig:lambda_21}.
\begin{figure}[tb]
 \includegraphics{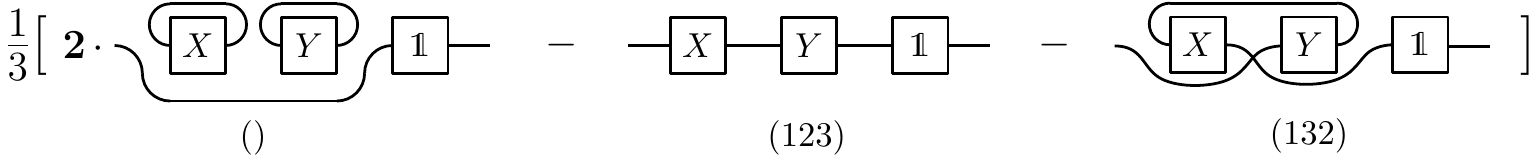}
 \caption{\label{fig:lambda_21}
 The wiring diagram of the trace polynomial corresponding to 
 $\frac{1}{3}\big[2() - (123) - (132)\big] \in \C S_3$.}
\end{figure}
Consider Theorem~\ref{thm:pCH_is_tensor_stable} for a bipartite system first: 
let $\rho$ and $\mu$ be density matrices on $\C^{d_1} \ot \C^{d_2}$. Then
\begin{align}\label{eq:21_tensorstab_2party}
 f_\lambda^{\ot 2}(\rho, \mu) = \frac{1}{9}\big[ &4(\one \ot \one) 
 + (\rho^{T_1} \mu^{T_1})^{T_1}
+ (\rho^{T_2} \mu^{T_2})^{T_2} 
- 2 (\{\rho_1, \mu_1\} \ot \one + \one \ot \{\rho_2, \mu_2\})
+ \{\rho, \mu \}
\big]
\geq 0 \,,
\end{align}
where $\rho_1 = \tr_2(\rho)$ and $\rho^{T_1}$ etc denote the reduced density matrix and the partial transpose of $\rho$ respectively.

The expression for $n$-partite systems can symbolically be expanded as
\begin{equation}\label{eq:lambda_21_symbolic}
 f^{\ot n}_\lambda(\#1, \#2) = \frac{1}{3^n} \bigotimes_{j \in \{1\dots n\}} 
 \Big[ 
 2\tr_j(\# 1) \tr_j(\# 2) \one 
 - \id(\#1) \cdot \id(\#2) 
 - \theta \big(\theta(\#1) \cdot \theta(\#2) \big ) 
 \Big] \,.
\end{equation} 
Above, $\id$ and $\theta$ are the identity and transpose maps, and $\#1$ and $\#2$ refer to performing the operation on the first and second variable respectively.
Let $\rho, \mu$ be density matrices on $\C^{d_1} \ot \dots \ot \C^{d_n}$ 
and denote by $\rho_A = \tr_{A^c}(\rho)$ and $\rho^{T_A}$ the reduction and the partial transpose on subsystem~$A$ respectively.
Theorem~\ref{thm:pCH_is_tensor_stable} then states that
\begin{equation}\label{eq:deg_3}
 f^{\ot n}_\lambda (\rho, \mu) = (\tfrac{2}{3})^n
    \sum_{\substack{S \subseteq \{1\dots n\} \\ A \subseteq S}}  (-\tfrac{1}{2})^{|S|}
    (\rho_S^{T_A} \cdot 
     \mu_S^{T_A})^{T_A} \ot \one_{S^c} \geq 0\,.
\end{equation}
\end{example}

\begin{remark}
We emphasize that the above constructions in 
Examples~\ref{ex:shadow_op_exmpl} to \ref{eq:21_tensorstab}
were chosen because of their low degree.
The tensor-stability as established in 
Theorem~\ref{thm:pCH_is_tensor_stable} allows 
to combine {\em any} set of $n$ Young tableaux with $k$ boxes or less
to obtain multipartite trace polynomials in $k-1$ variables 
that are positive on the positive cone.
\end{remark}

\subsection{$f_\lambda$ is completely copositive}
Recall that a map $\Lambda$ is completely copositive if 
$\Lambda \circ \theta$ is completely positive
where $\theta$ is the partial transposition.
The partition $\lambda=(1,1) \vdash 2$ yields the map $f_{(1,1)}(X) = \tr(X)\one - X$,
which from quantum information theory is known to be completely copositive~\cite{PhysRevA.59.4206}.
We now show that this property holds for all $f_\lambda$.

Recall the definition 
$
 f_\lambda(X_1, \dots, X_{k-1}) = \tr_{1\dots k \backslash k}[P_\lambda (X_1 \ot \dots \ot X_{k-1} \ot \one)]\,.
$
This suggests that $P_\lambda$ plays a role similar to that of $\rho_\Lambda$ in 
the Choi-Jamiołkowski isomorphism [Eq.~\eqref{eq:CJ_inverse}].
This is indeed the case.
\begin{proposition}\label{prop:pCH_copos}
 The map $f_\lambda$ is completely copositive. That is, $f_\lambda$ is completely positive 
 when all its variables are partially transposed,
 \begin{equation}
  \big (f_\lambda \circ \theta \ot \id_d \big) (X_1, \dots, X_{k-1}) \geq 0 
  \quad\quad \text{if } \quad\quad  X_1, \dots, X_{k-1} \in (\MM_d \ot \MM_d)^+\,.
 \end{equation} 
 where $\theta$ acts on each variable.
\end{proposition}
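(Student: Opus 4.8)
The plan is to exhibit the Choi–Jamio\l kowski state of the map $f_\lambda \circ \theta$ (where $\theta$ transposes each of the $k-1$ variables) and show it is positive semidefinite; by the Choi–Jamio\l kowski isomorphism this is equivalent to complete positivity of $f_\lambda \circ \theta$, hence to complete copositivity of $f_\lambda$. First I would recall from Definition~\ref{def:polCH_map} and Lemma~\ref{lemma:helper_trace_identity} that matrix elements of $f_\lambda$ are given by $\bra{w} f_\lambda(X_1,\dots,X_{k-1})\ket{v} = \tr[P_\lambda(X_1\ot\cdots\ot X_{k-1}\ot\ketbra{v}{w})]$. The key observation is that feeding in $X_i = \ketbra{a_i}{b_i}$ and transposing gives $X_i^T = \ketbra{b_i}{a_i}$, so that the transposition on the input legs simply swaps bra and ket on each of the first $k-1$ tensor factors; combined with the analogous bra/ket structure on the output leg, the whole expression $\bra{w}(f_\lambda\circ\theta)(\ketbra{a_1}{b_1},\dots,\ketbra{a_{k-1}}{b_{k-1}})\ket{v}$ becomes $\bra{b_1\cdots b_{k-1}\,v}\,P_\lambda\,\ket{a_1\cdots a_{k-1}\,w}$ (up to a relabelling of which index sits where).

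Next I would assemble the Choi matrix. Writing the $m$-th copy of $\C^d$ for the ancilla, the Choi state of $f_\lambda\circ\theta$ is $\sum (f_\lambda\circ\theta)(\ketbra{a_1}{b_1},\dots)\ot\ketbra{a_1\cdots a_{k-1}}{b_1\cdots b_{k-1}}$ summed over all the $a_i,b_i$. Using the matrix-element formula above, this is a partial overlap of $P_\lambda$ with a maximally entangled pattern on the input/ancilla legs: concretely, after collecting the sums over $a_i$ and $b_i$, the Choi matrix equals (a constant times) $\tr_{\text{input legs}}\big[(P_\lambda\ot\one_{\text{anc}})\,\ket{\Omega}\!\bra{\Omega}^{\ot(k-1)}\big]$ acting between the output leg and the $k-1$ ancilla copies, where each $\ket{\Omega}$ couples input leg $i$ to ancilla $i$. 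Because $P_\lambda\geq 0$ and $\ket{\Omega}\!\bra{\Omega}^{\ot(k-1)}\geq 0$, the operator inside the partial trace is a product of a positive operator with a positive operator in the "sandwiched" sense $A\ket{\Omega}\!\bra{\Omega}A^\dagger$-form — more carefully, I would write it as $M^\dagger P_\lambda M$ for a suitable $M$ built from $\ket{\Omega}$'s, which is manifestly positive, and partial traces preserve positivity. Hence the Choi matrix is positive semidefinite.

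The cleanest way to package this without index-chasing is probably to mimic the proof of Theorem~\ref{thm:pCH_pos}: show directly that $\bra{\phi}(f_\lambda\circ\theta\ot\id_d)(X_1,\dots,X_{k-1})\ket{\phi}\geq 0$ for all $\ket{\phi}\in\C^d\ot\C^m$ and all $X_i\geq 0$ on $\C^d\ot\C^m$. Expanding $X_i=\sum_j \ket{\xi_j^{(i)}}\!\bra{\xi_j^{(i)}}$ in each variable (positive, so a sum of rank-one terms) and $\ket{\phi}$ in a product basis of its two factors, each term reduces via Lemma~\ref{lemma:helper_trace_identity} to $\bra{\psi}P_\lambda\ket{\psi}\geq 0$ for some vector $\ket{\psi}\in(\C^d)^{\ot k}$ obtained by transposing/reshuffling the ancilla indices of the $\xi_j^{(i)}$ and $\phi$; the transpose $\theta$ only permutes which tensor slot each index lands in and does not spoil the fact that we end up with a diagonal expectation value of $P_\lambda\geq 0$. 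Summing the nonnegative contributions gives the claim.

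\textbf{Main obstacle.} The only real subtlety is bookkeeping: one must check that the partial transpose $\theta$ on the $d\times d$ "system" factor of each input, together with the presence of the $m$-dimensional ancilla factor, still leaves every individual term in the expansion in the form $\bra{\psi}P_\lambda\ket{\psi}$ rather than something like $\bra{\psi_1}P_\lambda\ket{\psi_2}$ with $\psi_1\neq\psi_2$. I expect this to work out because the transpose acting only on the system legs, together with the maximally-entangled coupling to the ancilla, conspires to keep bra and ket equal after the dust settles — this is exactly the mechanism behind the Choi–Jamio\l kowski isomorphism, which is why framing the argument through the Choi matrix (and citing Eq.~\eqref{eq:CJ_inverse}) is the safest route. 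A secondary minor point is tracking normalization constants coming from $P_\lambda$'s prefactor and from $\ket{\Omega}$ being unnormalized, but these are positive and irrelevant to the inequality.
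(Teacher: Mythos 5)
Your proposal is correct and follows essentially the same route as the paper: the paper's proof also passes to the linear extension of $f_\lambda$ on $\MM_d^{\ot(k-1)}$, observes that its Choi matrix is (up to identity padding and reordering of legs) just $P_\lambda \geq 0$, and invokes the Choi--Jamio\l kowski isomorphism of Eq.~\eqref{eq:CJ_inverse} to conclude complete copositivity, then restricts to product inputs. Your version merely spells out the index bookkeeping that the paper leaves implicit.
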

\begin{proof}
Let $X \in \MM_d^{\otimes k-1}$ and consider the linear extension of $f_\lambda$,
\begin{equation}
 \tilde f_\lambda(X) =
 \tr_{1 \dots k \backslash k}  [P_\lambda (X \ot \one) ] 
 = d^{2-k} \tr_{1 \dots (2k-2)} 
 [(P_\lambda \ot \one_{d^{k-1}}) (X \ot \one_{d^{k-1}})] \,.
 \end{equation} 
With the Choi-Jamiołkowski isomorphism [c.f. Eq.~\eqref{eq:CJ_inverse}] 
it is clear that $\tilde f_\lambda$ is completely copositive.
This property still holds when $X$ is of tensor-product form 
$X = X_1 \ot \dots \ot X_{k-1}$.
It follows that $f_\lambda$ is completely copositive and 
this ends the proof.
\end{proof}

\subsection{Asymmetric tensor-stable positive maps}
\label{sect:asym}
So far the maps considered were symmetric in the variables. 
Considering single copies of irreducible representations, 
one can obtain a further fine-graining of the symmetric Cayley-Hamilton map 
into asymmetric summands.

Let us state an example. The partition $(2,1)\vdash 3$ 
yields the hermitian idempotent $\omega_\lambda = \frac{1}{3}[2() - (123) - (132)]$.
It is centrally primitive and corresponds to the isotypic component with projector 
$P_\lambda = \hat T(\omega_\lambda)$.
Decomposing $\omega_\lambda$ further, one can obtain the two orthogonal primitive hermitian idempotents
$\frac{1}{6}[2() + 2(12) - (23) - (13) - (123) - (132)]$ and 
$\frac{1}{6}[2() - 2(12) + (23) + (13) - (123) - (132)]$. 
Another decomposition is
$\frac{1}{6}[2() + 2(23) - (12) - (13) - (123) - (132)]$ and 
$\frac{1}{6}[2() - 2(23) + (12) + (13) - (123) - (132)]$.
It is important that these idempotents yield {\em hermitian} projectors under the algebra representation $\hat T$.
We now replace $P_\lambda$ in Eq.~\eqref{eq:polCH_map} 
with these fine-grained projectors and obtain the following four 
inequalities,

\begin{align}
\begin{rcases}
 &2\tr(X)\tr(Y){} + 2\tr(XY){} - \tr(X)Y - \tr(Y)X - XY - YX \\
 &2\tr(X)\tr(Y){} - 2\tr(XY){} + \tr(X)Y + \tr(Y)X - XY - YX \\
 &2\tr(X)\tr(Y){} + 2\tr(X)Y - \tr(XY){} - \tr(Y)X - XY - YX \\
 &2\tr(X)\tr(Y){} - 2\tr(X)Y + \tr(XY){} + \tr(Y)X - XY - YX
 \quad \end{rcases}
 \quad \geq 0 \quad \text{whenever}\quad X,Y \geq 0\,.
\end{align}
It is easy to see that the first two maps add up to $2\tr(X)\tr(Y) - XY - YX$, 
corresponding to the centrally primitive idempotent $2() - (123) - (132)$. The same holds for the last two maps.

Such a fine-graining into maps that arise from individual irreducible representations 
is non-trivial but always possible~\cite{
doi:10.1063/1.4983478}. 
It is important to note that it does not simply suffice to take idempotents in $\C S_k$,
but that they also have to be hermitian for the resulting map to be positive on the positive cone.
A generic construction can be obtained from hermitian sum-of-squares in the group ring, i.e. elements 
$\a^* \a$ with $\a \in \C S_k$~\cite{Huber2020b}.

\subsection{The polarized Cayley-Hamilton map in quantum information: the case $k=2$}\label{sect:pCHinQIT}
In quantum information, the reduction map $\RR(\rho) = f_\lambda(\rho) = \rho \mapsto \one - \rho$ 
with $\lambda=(1,1)\vdash 2$ stands as the archetypical example for a class of univariate maps that are 
positive but not completely~\cite{
PhysRevLett.97.080501,
0305-4470-39-45-020,
doi:10.1063/1.4962339}.
It arose $1997$ as an criterion in entanglement detection and distillation:
whenever
\begin{equation}
 (\RR \ot \id)(\rho_{AB}) = \one_1 \ot \rho_B - \rho_{AB} \ngeq 0\,,
\end{equation}
then $\rho_{AB}$ is entangled and multiple copies of $\rho_{AB}$ can be distilled 
to the maximally entangled state~\cite{
PhysRevA.59.4206,
PhysRevA.60.898}.

The complete copositivity of the reduction map immediately leads to the quantum channel
$\rho \mapsto \frac{1}{d-1}(\one - \rho^T)$.
Known as the {\em Werner-Holevo channel}, it played a pivotal role 
in refuting the maximal $p$-norm multiplicativity conjecture~\cite{doi:10.1063/1.1498491, HaydenWinter2008}.
Furthermore, the tensor-stability of $\RR$ led to the development of the {\em universal state inversion}~\cite{
PhysRevA.64.042315,
PhysRevA.72.022311},
which was used to obtain compatibility conditions for the quantum marginal problem~\cite{
Butterley2006, 
PhysRevA.75.032102}
and monogamy of entanglement relations~\cite{
Eltschka2018distributionof, 
PhysRevA.98.052317,
Huber_Thesis}. 

Independently, Rains generalized the shadow bounds 
from classical error correction to the quantum setting. 
He termed the resulting trace inequality the {\em shadow inequality} (preprint in $1996$, published in $1999$~\cite{796376}):
for all operators $M,N \geq 0$ in 
$\MM_d^{\otimes n}$ 
and 
$T \subseteq \{1\dots n\}$, one has that
\begin{equation}
 \sum_{S \subseteq \{1\dots n\}} (-1)^{|S \cap T|} \tr\big[\tr_{S^c}(M) \tr_{S^c}(N)\big] \geq 0\,.
\end{equation} 
where $S^c$ is the complement of $S$ in $\{1\dots n\}$. 
Note how the above expression can, with the coordinate-free definition of the partial trace,
be lifted to the shadow operator inequality from Example~\ref{ex:shadow_op_exmpl}.

Stated in terms of polynomial invariants of degree two, the shadow inequality can be 
incorporated as a list of additional constraints into the quantum linear programming bound~\cite{
681316, 
796376}.
Using an Ansatz for the weight distribution, Rains showed that qubit codes can have a distance 
of at most $d\lesssim  n/3$~\cite{651000}.
The shadow inequalities can also be evaluated directly if the explicit 
form of the weight enumerator is known, 
leading to occasionally tight bounds on the existence of absolutely maximally entangled states 
and quantum maximum distance separable codes~\cite{
1751-8121-51-17-175301, 
Huber2020quantumcodesof}. 
We note that these shadow bounds are nothing else than the above-mentioned monogamy of 
entanglement relations applied to the case of quantum codes.

In a seminal article ''Polynomial invariants of quantum codes``, 
Rains obtained a multilinear generalization of the shadow inequality~\cite{817508}.
These generalized shadow inequalities were still in the form of {\em trace} inequalities
and are the original motivation for this article.

\section{Positive maps from entanglement witnesses}\label{sect:matrix_ineq}
It is well-known that positive but not completely positive maps correspond to entanglement witnesses.
Maps covariant with respect to certain operations were studied in Refs.~\cite{
Mozrzymas2017, 
BardetCollinsSapra2020},
and multilinear maps in Refs.~\cite{
HORODECKI20011,
doi:10.1063/1.4931059,
Kye_2015}.

Here we relate the construction of equivariant positive maps 
(i.e. multilinear trace polynomial inequalities for the positive cone) 
to invariant block-positive operators and entanglement witnesses.
In contrast to the previous Section, 
these more general constructions generally lack properties such as 
tensor-stability and complete copositivity.

We recall that a multipartite quantum state is termed {\em separable} 
if it can be written as a convex combination of product states,
$
 \rho = \sum_i p_i \rho_1^{(i)} \ot \dots \ot \rho_k^{(i)}.
$
Here $\rho_j^{(i)}$ are quantum states and $p_i$ are probabilities 
such that $\sum_i p_i = 1$.
A quantum state which is not separable, 
i.e. for which no such decomposition can be found, 
is called {\em entangled}.
Suppose one has an operator $\mathcal{W}$ for which 
$\tr[\mathcal{W} \rho ] \geq 0$ 
holds for all separable states $\rho$, 
then $\mathcal{W}$ is {\em block-positive}.
If additionally $\tr[\mathcal{W} \sigma]<0$ for some entangled state 
$\sigma$, then $\mathcal{W}$ is termed an {\em entanglement witness}.
Consequently, $\mathcal{W}$~can be used to detect quantum entanglement.
Lastly, $\mathcal{W}$ is an {\em optimal} witness 
if also $\tr[\mathcal{W} \tilde\rho] = 0$ 
holds for some separable state~$\tilde\rho$.
For our purposes, we require entanglement witnesses that 
detect $U^{\ot k}$-invariant states. 
These so-called Werner states satisfy 
$\rho_{12} = (U\ot U) \rho_{12} (U^\dag \ot U^\dag)$,
which generalizes to the $k$-partite case as
$\rho_{1\dots k} = U^{\ot k} \rho_{1\dots k} \,(U^\dag)^{\ot k}$, for all $U \in \UU$.
The separability of tripartite Werner states was studied in 
Ref.~\cite{PhysRevA.63.042111} and only little is known in the case 
of more than three parties~\cite{MaassenKuemmerer2019}.
Due to linearity, Werner state witnesses can always 
assumed to show the same invariance as the states themselves, 
and $\WW = U^{\ot k} \WW \,(U^\dag)^{\ot k}$ for all $U\in\UU$ holds.

\bigskip
\noindent {\bf General construction.}
Let $\PP\in \MM_d^{\ot r+1}$ and define the map
 \begin{equation}\label{eq:gen_equiv_map}
  f_{\mathcal{P}}(X_1, \dots, X_r) = \tr_{1\dots r}\big[\mathcal{P} (X_1 \ot \dots \ot X_r \ot \one)\big]
 \end{equation}
 We use the coordinate-free definition of the partial trace [ Eq.~\eqref{eq:coord_free_def_ptrace}]
 to establish that
 \begin{equation}
  \bra{\phi} f_{\mathcal{P}}(X_1, \dots, X_r) \ket{\phi} = \tr \big[\mathcal{P} (X_1 \ot \dots \ot X_r \ot \dyad{\phi})\big]
 \end{equation}
 holds for all $\ket{\phi} \in \C^d$.
 
 Let now $X_1,\dots, X_r\geq 0$ be positive semidefinite. Then 
\begin{equation}\label{eq:pos_map_gen}
 f_{\mathcal{P}}(X_1, \dots, X_r) \geq 0 \quad \quad \text{if and only if} 
 \quad \quad \tr \big[\mathcal{P} (X_1 \ot \dots \ot X_r \ot \dyad{\phi})\big] \geq 0 
 \quad \text{for all } \ket{\phi}\in \C^d\,.
\end{equation} 
When the above expression is positive semidefinite then $\PP$ must be a block-positive operator, 
e.g. an entanglement witness.

\subsection{Trace polynomials}
Let us focus on the case of multilinear maps that can be realized as products and linear
combinations of matrix monomials $X_{\alpha_1} \cdots X_{\alpha_r}$ 
and their traces $\tr(X_{\alpha_1} \cdots X_{\alpha_r})$. 
So these {\em trace polynomials} have the general form of
\begin{equation}
  \sum_{\alpha, \beta} c_{\alpha \beta} X_{\alpha_1} \cdots X_{\alpha_r} \prod_\beta \tr \big(X_{\beta_1} \cdots X_{\beta_t}\big) 
\end{equation}
where $c_{\alpha\beta} \in \C$ and $\alpha, \beta$ are multi-indices.
A trace polynomial $f$ is an inequality on the positive cone, 
if $f(X_1, \dots, X_r) \geq 0$  for all $X_1, \dots, X_r \geq 0$ holds.
It is optimal if also 
$\lambda_{\min} \big(f(\tilde{X}_1, \dots, \tilde{X}_r)\big) = 0$ holds
for some nonzero $\tilde{X}_1, \dots, \tilde{X}_r \geq 0$,
where $\lambda_{\min}$ stands for the smallest eigenvalue.

We construct matrix inequalities from entanglement witnesses.
\begin{theorem}\label{thm:witness_to_ineq}
Every multilinear trace polynomial inequality on the positive cone in $r$ variables
corresponds to an $U^{\otimes r+1}$-invariant block-positive operator.
In particular, every optimal multilinear trace polynomial inequality on the positive cone 
corresponds to an optimal Werner state witness.
\end{theorem}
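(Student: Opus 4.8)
The plan is to exhibit a bijection between multilinear trace polynomial inequalities on the positive cone in $r$ variables and $U^{\otimes r+1}$-invariant block-positive operators on $(\C^d)^{\ot r+1}$, and then to show that this bijection restricts to one between optimal such inequalities and optimal Werner state witnesses. The starting point is the general construction of Eq.~\eqref{eq:gen_equiv_map}, together with the equivalence in Eq.~\eqref{eq:pos_map_gen}: for $\PP \in \MM_d^{\ot r+1}$, the map $f_\PP$ is positive on the positive cone if and only if $\tr[\PP(X_1 \ot \cdots \ot X_r \ot \dyad{\phi})] \geq 0$ for all $X_1, \dots, X_r \geq 0$ and all $\ket\phi \in \C^d$; since every product state is a limit of such $X_1 \ot \cdots \ot X_r \ot \dyad\phi$ after rescaling, this says exactly that $\PP$ is block-positive. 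What remains is the surjectivity onto the invariant setting and the matching of $\PP$ with $f_\PP$ when $f_\PP$ is itself a trace polynomial.

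First I would argue that every multilinear trace polynomial in $r$ variables arises as $f_\PP$ for some $\PP$. By the discussion around Eq.~\eqref{eq:inv_eq2}, a multilinear trace polynomial $f$ is an equivariant map, and via $\iota(X_1,\dots,X_r,X_{r+1}) = \tr[f(X_1,\dots,X_r)X_{r+1}]$ it corresponds to a multilinear polynomial invariant $\iota$; by the first fundamental theorem of invariant theory (stated in the excerpt: the ring of polynomial invariants is generated by traced monomials), $\iota$ is a linear combination of terms $\tr[T(\pi)\, X_1 \ot \cdots \ot X_{r+1}]$ over $\pi \in S_{r+1}$, using Corollary~\ref{cor:perm_and_ptrace} and Eq.~\eqref{eq:complete_contraction} to identify traced monomials with such expressions. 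Collecting coefficients, $\iota(X_1,\dots,X_{r+1}) = \tr[\PP\, X_1 \ot \cdots \ot X_{r+1}]$ with $\PP = \sum_\pi c_\pi T(\pi)$ a linear combination of permutation operators, and then $f = f_\PP$ by comparing with Eq.~\eqref{eq:gen_equiv_map}. Crucially, any such $\PP$ commutes with $U^{\ot r+1}$ for all $U \in \UU$ by Schur-Weyl duality [Theorem~\ref{thm:schur_weyl}], so $\PP$ is $U^{\otimes r+1}$-invariant; conversely, since the commutant of $\{U^{\ot r+1}\}$ is exactly the span of the $T(\pi)$, every invariant block-positive $\PP$ is of this form and yields a trace polynomial via $f_\PP$. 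This closes the first correspondence. (One should also note $\PP$ may be replaced by its Hermitian part without changing $f_\PP$ on Hermitian arguments, so block-positivity is the right condition.)

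For the optimality statement, I would unwind the definitions: $f = f_\PP$ is optimal iff $\lambda_{\min}(f_\PP(\tilde X_1,\dots,\tilde X_r)) = 0$ for some nonzero $\tilde X_i \geq 0$, i.e.\ $\bra\phi f_\PP(\tilde X_1,\dots,\tilde X_r)\ket\phi = 0$ for some $\ket\phi$ while $f_\PP \geq 0$ everywhere. By Eq.~\eqref{eq:pos_map_gen} this is precisely the statement that $\tr[\PP\,\sigma] = 0$ for the separable state $\sigma \propto \tilde X_1 \ot \cdots \ot \tilde X_r \ot \dyad\phi$ while $\tr[\PP\,\rho] \geq 0$ for all separable $\rho$ — that is, $\PP$ touches the separable cone, which is the definition of an optimal witness (provided $\PP$ is genuinely a witness, i.e.\ negative on some entangled state; if $\PP \geq 0$ the inequality is "trivial" and I would handle this degenerate case separately or fold it into the statement). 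Finally, since $\PP$ is $U^{\ot r+1}$-invariant, by averaging one may take the touching separable state $\sigma$ to be invariant as well without loss of generality, so $\PP$ is an optimal \emph{Werner} state witness, and conversely. The main obstacle I anticipate is not any single hard computation but rather bookkeeping the normalization/homogeneity issues — product states are rescalings of $X_1 \ot \cdots \ot X_r \ot \dyad\phi$, trace-polynomial-vs-general-$\PP$ requires the invariant-theory input to be invoked cleanly, and the edge case of inequalities coming from genuinely positive (non-witness) $\PP$ needs a careful wording so the "optimal inequality $\leftrightarrow$ optimal witness" claim is literally correct.
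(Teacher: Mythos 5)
Your proposal follows essentially the same route as the paper's proof: both directions rest on the identification $f = f_{\PP}$ with $\PP = \hat T(\alpha)$ a linear combination of permutation operators (via Corollary~\ref{cor:perm_and_ptrace} in one direction and Schur--Weyl duality in the other), on the equivalence of positivity of $f_\PP$ on the positive cone with block-positivity of $\PP$ through Eq.~\eqref{eq:pos_map_gen}, and on unwinding $\lambda_{\min}=0$ into the witness touching a pure product state for the optimality claim. Your extra remarks (taking the Hermitian part of $\PP$, twirling the touching separable state, and flagging the degenerate case where $\PP\geq 0$ is not a genuine witness) are sensible refinements of details the paper treats tersely, but they do not change the argument.
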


\begin{proof}
We first show that any multilinear trace polynomial inequality on the positive cone 
corresponds to an $U^{\otimes r+1}$-invariant block-positive operator.\\
\noindent ''$\Rightarrow$``:
let $f$ be a multilinear trace polynomial inequality on the positive cone in $r$ matrix variables.
Corollary~\ref{cor:perm_and_ptrace} allows us to express the trace polynomial as
\begin{equation}
 f(X_1, \dots, X_r) = 
 \tr_{1 \dots {r}}
 \big[\hat T(\a) (X_1 \ot \dots \ot X_r \ot \one)\big]  \quad \text{with} \quad  \a \in \C S_{r+1} \,. 
\end{equation}
By assumption, $f$ is positive semidefinite on the positive cone.
With Eq.~\eqref{eq:pos_map_gen}, this can be restated as the requirement that
\begin{equation}
 \bra{\phi} f(X_1, \dots, X_r) \ket{\phi} 
 = \tr \big[\hat T(\a) (X_1 \ot \dots \ot X_r \ot \dyad{\phi})\big] \geq 0
\end{equation}
holds for all $\ket{\phi} \in \C^d$ and $X_1, \dots, X_r \geq 0$.
Then $\hat T(\alpha)$ must be hermitian due to linearity~\cite{ChauDavid_privcomm2019}
and finally, also block-positive.
From the Schur-Weyl duality [Theorem~\ref{thm:schur_weyl}] it follows that 
$\hat T(\a)$ commutes with the diagonal action of unitaries and therefore 
$\hat T(\a) = U^{\ot (r+1)} T(\a) (U^\dag)^{\ot (r+1)}$.
Thus $\hat T(\a)$ is $U^{\ot (r+1)}$-invariant block-positive operator.
\\
\noindent ''$\Leftarrow$``:
conversely, let $\mathcal{B} \in \MM_d^{\ot r+1}$ 
be a block-positive operator with the property that
$\mathcal{B} = U^{\ot (r+1)} \, \mathcal{B} \, (U^\dag)^{\ot (r+1)}$
for all unitaries $U \in \mathcal{U}_d$. 
From the Schur-Weyl duality [Theorem~\ref{thm:schur_weyl}] it follows that 
$\mathcal{B}$ can be decomposed into a linear combination of permutations.
Consequently the expression
\begin{equation}
f_\mathcal{B}(X_1, \dots, X_r) = \tr_{1\dots r}
\big[\mathcal{B} (X_1\ot \dots \ot X_{r} \ot \one)\big]
\end{equation} 
constitutes a multilinear trace polynomial inequality on the positive cone.
[We again used the coordinate free definition of the partial trace 
or Eq.~\eqref{eq:pos_map_gen} respectively.]

We now show that $f_\mathcal{W}(X_1, \dots, X_r) =  \tr_{\{1\dots r\}}
\big[\mathcal{W} (X_1\ot \dots \ot X_{r} \ot \one)\big]$ is an {\em optimal} trace polynomial inequality
if and only if
$\mathcal{W}$ is an {\em optimal} $U^{\otimes r+1}$-invariant entanglement witness.
Because $f_\mathcal{W}$ is multilinear, we can always normalize the variables $X_i \in \MM_d^+$
to quantum states $\rho_i = X_i / \tr(X_i)$.
Consider the identity 
\begin{equation}\label{eq:optimal}
\sum_i p_i  \bra{{\phi_i}} f_\mathcal{W} (\rho_1, \dots, \rho_{r}) \ket{{\phi_i}} =
\tr\big[\mathcal{W} ( \rho_1 \ot \dots \ot \rho_r \ot \sum_i p_i \dyad{\phi_i})\big]\,.
\end{equation}
where $p_i \geq 0 $ and $\sum_i p_i = 1$ such that $\sum_i p_i \dyad{\phi_i}$ is a density matrix.
Due to linearity, the separable state reaching $\tr(\mathcal{W} \rho_{\text{sep}}) = 0$ in Eq.~\eqref{eq:optimal} 
can always taken to be pure.
It is now easy to see that an optimal trace polynomial inequality yields an optimal witness, and vice versa.
This ends the proof.
\end{proof}

\subsection{Equivariant positive maps}
One can readily see that all multilinear trace polynomials are equivariant under unitary action,
\begin{equation}
 f_\BB(UX_1 U^{-1}, \dots, U X_r U^{-1}) = U f_\BB( X_1 , \dots, X_r )U^{-1} \quad \text{for all} \quad U\in \UU\,.
\end{equation}
It is known that the ring of equivariant maps is generated over the ring of invariants 
of the form $\tr(X_{\alpha_1} \cdots X_{\alpha_r})$ by matrix monomials (trace polynomials)~\cite{ConciniProcesi2017}.
Thus the set of multilinear equivariant maps and the set of multilinear trace polynomials coincide 
(see also Section~\ref{sect:PI}).
We can thus readily rephrase Theorem~\ref{thm:witness_to_ineq} as the following.

\begin{corollary}\label{cor:ineq_to_witness}
Every multilinear positive map that is equivariant under unitaries
corresponds to a unitary-invariant block-positive operator.
In particular, every optimal multilinear positive map that is equivariant under unitaries
corresponds to an optimal Werner state witness.
\end{corollary}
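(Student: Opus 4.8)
The plan is to reduce the statement to Theorem~\ref{thm:witness_to_ineq} by identifying multilinear equivariant maps with multilinear trace polynomials. The first step is to recall the invariant-theoretic fact, stated around Eq.~\eqref{eq:inv_eq2} and due to De Concini--Procesi~\cite{ConciniProcesi2017}: the ring of maps that are equivariant under the simultaneous conjugation action of $\UU$ is generated over the ring of invariants $\tr(X_{\alpha_1}\cdots X_{\alpha_r})$ by the matrix monomials $X_{\alpha_1}\cdots X_{\alpha_r}$. Restricting to the multilinear part, this says precisely that a multilinear map $f:\MM_d^r\to\MM_d$ satisfies $f(UX_1U^{-1},\dots,UX_rU^{-1})=Uf(X_1,\dots,X_r)U^{-1}$ for all $U\in\UU$ if and only if $f$ is a multilinear trace polynomial. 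I would state this as the opening sentence of the proof.

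Next I would observe that the positivity hypotheses coincide under this identification: a multilinear equivariant map $f$ is positive on the positive cone, i.e.\ $f(X_1,\dots,X_r)\geq 0$ whenever $X_1,\dots,X_r\geq 0$, exactly when the associated trace polynomial is a multilinear trace polynomial inequality on the positive cone in the sense defined just before Theorem~\ref{thm:witness_to_ineq}. Likewise, the notion of an \emph{optimal} equivariant positive map should be defined (or recalled) to mean that $\lambda_{\min}\big(f(\tilde X_1,\dots,\tilde X_r)\big)=0$ for some nonzero $\tilde X_1,\dots,\tilde X_r\geq 0$, which is verbatim the optimality condition for trace polynomial inequalities. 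With these identifications in place, both claims follow immediately by invoking Theorem~\ref{thm:witness_to_ineq}: the first part gives that every multilinear trace polynomial inequality on the positive cone in $r$ variables corresponds to a $U^{\otimes r+1}$-invariant block-positive operator, and the second part gives that optimal ones correspond to optimal Werner state witnesses.

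The one point requiring care — and the only real obstacle — is making sure the dictionary between ``equivariant positive map in $r$ variables'' and ``trace polynomial inequality in $r$ variables'' is applied with the correct bookkeeping of the extra tensor factor: the map $f$ has $r$ matrix inputs and one output, so via $\bra{\phi}f(X_1,\dots,X_r)\ket{\phi}=\tr[\hat T(\alpha)(X_1\otimes\cdots\otimes X_r\otimes\dyad{\phi})]$ (Corollary~\ref{cor:perm_and_ptrace} together with the coordinate-free partial trace) the relevant group algebra element $\alpha$ lives in $\C S_{r+1}$ and the block-positive operator $\hat T(\alpha)$ acts on $r+1$ tensor factors. This is exactly the setting of Theorem~\ref{thm:witness_to_ineq}, so no new work is needed; I would simply note explicitly that the $(r+1)$-st factor is the one carried by the output space. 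Thus the corollary is a direct rephrasing, and the proof is: (i) cite the equality of the two sets of maps; (ii) match the positivity and optimality notions; (iii) apply Theorem~\ref{thm:witness_to_ineq}.
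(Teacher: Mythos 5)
Your proposal is correct and follows the paper's own route exactly: the paper also deduces this corollary by invoking the De Concini--Procesi fact that the multilinear equivariant maps are precisely the multilinear trace polynomials, and then rephrasing Theorem~\ref{thm:witness_to_ineq}. Your extra remark on the bookkeeping of the $(r+1)$-st tensor factor is a harmless (and accurate) elaboration of what the theorem's proof already sets up.
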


\subsection{Some optimal trace polynomials}
The Cayley-Hamilton map $f_\lambda$ is a basic example of a trace polynomial inequality that is {\em not} optimal when $d\geq 2$.
This follows from the fact that $P_\lambda$ is a positive semidefinite operator, but not an entanglement witness.
Corollary~\ref{cor:ineq_to_witness} however allows to obtain a larger class of inequalities from Werner state witnesses.
Our next example corresponds to an optimal witness for tripartite Werner states,
and is in a sense complementary to Example~\ref{ex:pCH_111}.
\begin{example}\label{ex:werner}
Consider the hermitian idempotent
 $\omega_{-} = \frac{1}{6}[()-(12)-(13)-(23)+(123)+(132)]$ which corresponds to the partition $(1,1,1) \vdash 3$.
 A classic result by Eggeling and Werner established that~\cite[Theorem~$1$]{PhysRevA.63.042111}
 \begin{equation}
  \max_{\rho \in \text{SEP}} \tr[\hat{T}(\omega_{-}) \rho] \leq 1/6\,,
 \end{equation} 
 where the maximum is taken over the set of separable states $\text{SEP}$.
We construct the optimal witness $\mathcal{W} = \frac{1}{6}\one - \hat{T}(\omega_{-})$. 
Returning to unnormalized variables $X,Y \in \MM_d^+$, one obtains
\begin{align}\label{eq:opt_111}
 f_{\mathcal{W}}(X,Y) &= \tr_{12}[\mathcal{W} (X \ot Y \ot \one)]
 = \tr(XY){}  + \tr(X)Y + \tr(Y)X - XY - YX \geq 0 \quad \text{whenever} \quad X,Y\geq 0
\end{align}
by Theorem~\ref{thm:witness_to_ineq}. 
Also, $f_{\mathcal{W}}$ is an optimal trace polynomial inequality or stated differently, 
$f_\lambda$ is an optimal equivariant positive map. 
Indeed $\lambda_{\min}\big(f_{\mathcal{W}}(X,Y)\big) = 0$
whenever $X = \dyad{\psi}$ and $Y = \dyad{\psi^\perp}$ are orthogonal rank one operators
in $\MM_d^+$ with $d\geq 3$.
\end{example}
\begin{remark}
 Compare the inequality in Eq.~\eqref{eq:opt_111} to the inequality from Example~\ref{ex:pCH_111}. There we
 showed that
 \begin{equation}\label{eq:f_111again}
    f_\lambda(X,Y) = \tr(X)\tr(Y){} - \tr(XY){} - \tr(X)Y - \tr(Y)X + XY + YX \geq 0 
    \quad \text{whenever} \quad X,Y \geq 0\,.
 \end{equation}
In comparison to Eq.~\ref{eq:opt_111}, this expression is missing the first term $\tr(X)\tr(Y)$ 
with all signs inverted.
\end{remark}

The construction from Example~\ref{ex:werner} generalizes: 
let $\mathcal{W} = \frac{1}{k!}\one - T(\omega_{-})$, 
where $\omega_{-}$ is the central idempotent corresponding to the partition $(1,\dots,1)\vdash k$~\cite{MaassenKuemmerer2019}. 
This yields the trace polynomial
\begin{align}\label{eq:werner_antisym}
 f_{\mathcal{W}}(X_1, \dots, X_{k-1})  
 &= - \tr_{1 \dots k \backslash k} 
 \Big [ \sum_{\pi \in S_k, \pi \neq e} \sgn(\pi) T(\pi) (X_1 \ot \dots \ot X_{k-1} \ot \one ) \Big]  \nn\\
 &= \prod_{i=1}^{k-1} \tr(X_i) \one - f_{\lambda_{-}}(X_1, \dots, X_{k-1})
\,.
\end{align}
\begin{corollary}\label{cor:werner_antisym}
 The map $f_{\mathcal{W}}$ from Eq.~\eqref{eq:werner_antisym} is an optimal trace polynomial inequality on the positive cone.
\end{corollary}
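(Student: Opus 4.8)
The plan is to invoke Theorem~\ref{thm:witness_to_ineq}: it suffices to show that $\mathcal{W} = \frac{1}{k!}\one - \hat T(\omega_{-})$ is an \emph{optimal} $U^{\otimes k}$-invariant entanglement witness for $k$-partite Werner states on $(\C^d)^{\otimes k}$, provided $d\geq k$ (so that the antisymmetric subspace is nontrivial and the relevant separable states exist). Once this is established, Theorem~\ref{thm:witness_to_ineq} immediately gives that $f_{\mathcal{W}}$ is an optimal trace polynomial inequality on the positive cone, and the explicit form in Eq.~\eqref{eq:werner_antisym} follows from Corollary~\ref{cor:perm_and_ptrace} applied to the expansion of $\hat T(\omega_-)$ into permutations (together with $\sgn(\pi)=\chi_{(1,\dots,1)}(\pi)$ for the sign character).

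The first step is block-positivity, i.e. $\tr[\mathcal{W}\rho]\geq 0$ for all separable $\rho$. Since $\hat T(\omega_-) = P_{(1,\dots,1)}$ is the projector onto the totally antisymmetric subspace, it suffices to bound $\max_{\rho\in\mathrm{SEP}} \tr[P_{(1,\dots,1)}\rho]$, and by convexity and linearity this maximum is attained at a pure product state $\dyad{\phi_1}\otimes\cdots\otimes\dyad{\phi_k}$. The quantity $\langle\phi_1\otimes\cdots\otimes\phi_k| P_{(1,\dots,1)}|\phi_1\otimes\cdots\otimes\phi_k\rangle = \frac{1}{k!}\,\big|\det G\big|$ where $G$ is the Gram-type matrix with entries $\braket{\phi_i}{\phi_j}$; by Hadamard's inequality this is at most $\frac{1}{k!}$, with equality exactly when the $\ket{\phi_i}$ are orthonormal. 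This is the content generalizing the Eggeling--Werner bound $1/6$ at $k=3$; I would cite Ref.~\cite{MaassenKuemmerer2019} for the general $k$ statement, or supply the Hadamard argument directly. This gives both block-positivity of $\mathcal{W}$ and identifies precisely which separable states saturate $\tr[\mathcal{W}\rho]=0$.

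The second step is to verify that $\mathcal{W}$ is a genuine witness (detects some entangled state) and is \emph{optimal} (touches the separable set). Optimality is immediate from the Hadamard equality case: any product of $k$ mutually orthonormal pure states $\dyad{\psi_1}\otimes\cdots\otimes\dyad{\psi_k}$ (which exists precisely because $d\geq k$) is separable and satisfies $\tr[\mathcal{W}\rho_{\mathrm{sep}}]=0$. That $\mathcal{W}$ detects entanglement follows because $\tr[\mathcal{W}\,P_{(1,\dots,1)}/\dim]<0$: the normalized antisymmetric projector is a Werner state with $\tr[\hat T(\omega_-)\cdot P_{(1,\dots,1)}/\dim] = 1/\dim > 1/k!$ once $\dim \binom{d}{k} < k!$, i.e. for $d$ not too large; in any case it suffices to exhibit one Werner state violating the bound, and the antisymmetric state does so whenever $d\geq k$ with strict inequality somewhere. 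Combining: $\mathcal{W}$ is an optimal $U^{\otimes k}$-invariant entanglement witness, and invoking the ``optimal'' half of Theorem~\ref{thm:witness_to_ineq} concludes the proof.

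The main obstacle is the clean identification of $\max_{\rho\in\mathrm{SEP}}\tr[P_{(1,\dots,1)}\rho] = 1/k!$ for general $k$ and the exact characterization of its maximizers — this is where one must either appeal carefully to the cited separability analysis of multipartite Werner states or reprove the Hadamard/Gram determinant bound; everything downstream (invariance via Schur--Weyl, the explicit trace-polynomial form via Corollary~\ref{cor:perm_and_ptrace}, optimality via the orthonormal product state) is then routine.
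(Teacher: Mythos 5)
Your proposal is correct and follows the same route as the paper: reduce the corollary to the statement that $\mathcal{W}=\tfrac{1}{k!}\one-\hat T(\omega_{-})$ is an optimal $U^{\ot k}$-invariant witness, then apply Theorem~\ref{thm:witness_to_ineq}. The paper's entire proof is a one-line citation of Maassen and K\"ummerer~\cite{MaassenKuemmerer2019} for that witness statement; what you add is a self-contained derivation of the key bound via
$\bra{\phi_1\ot\cdots\ot\phi_k}\hat T(\omega_{-})\ket{\phi_1\ot\cdots\ot\phi_k}=\frac{1}{k!}\det G$ with $G_{ij}=\braket{\phi_i}{\phi_j}$, together with $0\le\det G\le 1$ (Hadamard's inequality for the positive semidefinite Gram matrix) and equality iff the $\ket{\phi_i}$ are orthonormal. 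This buys an explicit description of the separable maximizers (hence optimality) and surfaces the implicit hypothesis $d\ge k$, which the paper leaves unstated; for $d<k$ the antisymmetric projector vanishes and the corollary degenerates.

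Two minor corrections. For the normalized antisymmetric Werner state $\rho_{-}=P_{(1,\dots,1)}/\binom{d}{k}$ one gets $\tr[\hat T(\omega_{-})\,\rho_{-}]=\tr[P_{(1,\dots,1)}]/\binom{d}{k}=1$, not $1/\binom{d}{k}$, so $\tr[\mathcal{W}\rho_{-}]=\frac{1}{k!}-1<0$ holds unconditionally for $k\ge 2$ and your caveat about $\binom{d}{k}<k!$ is unnecessary. Moreover, by the paper's definition an optimal trace polynomial inequality only requires block-positivity together with some nonzero positive semidefinite (equivalently, separable product) input attaining $\lambda_{\min}=0$; that $\mathcal{W}$ also detects an entangled state is not needed for the corollary, so your third step is a bonus rather than a required ingredient.
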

\begin{proof}
 The claim follows from Theorem~\ref{thm:witness_to_ineq} and the result by Maassen and Kümmerer~\cite{MaassenKuemmerer2019} on optimal witnesses for symmetric Werner states.
\end{proof}
In the same work, Maassen and Kümmerer showed that the set of symmetric Werner states has 
an infinite number of extreme points for $k\geq 5$~\cite{MaassenKuemmerer2019}.
We obtain as a consequence the following.
\begin{corollary}
 The set of 
  symmetric multilinear trace polynomials positive on the positive cone 
 (symmetric multilinear equivariant positive maps) 
 in four or more matrix variables has an infinite number of extremal points.
\end{corollary}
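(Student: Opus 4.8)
The plan is to dualise: by Corollary~\ref{cor:ineq_to_witness} the problem becomes one about $U^{\ot k}$-invariant block-positive operators, which can then be settled by polyhedral duality together with the Maassen--Kümmerer analysis of symmetric Werner states. Throughout I write $k=r+1$, so that ``four or more variables'' means $k\geq 5$, exactly the range of that theorem; I take $d$ in whatever range the theorem requires.

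First I would record that the correspondence $f_\BB\leftrightarrow\BB$ of Theorem~\ref{thm:witness_to_ineq} and Corollary~\ref{cor:ineq_to_witness} is a linear isomorphism from the space of multilinear equivariant Hermitian maps in $r$ variables onto the space of $U^{\ot k}$-invariant Hermitian operators on $(\C^d)^{\ot k}$, carrying the cone $\CC$ of positive maps onto the cone $\CC'$ of $U^{\ot k}$-invariant block-positive operators. Moreover, since $f_\BB(X_{\tau(1)},\dots,X_{\tau(r)})$ corresponds to a permuted conjugate of $\BB$ (with $\tau\in S_r$ acting on the first $r$ tensor factors), symmetry of $f$ in its $r$ variables translates to invariance of $\BB$ under conjugation by those permutations. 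A linear isomorphism of cones preserves the facial structure, so it suffices to exhibit infinitely many extreme rays of $\CC'$. Since $f$ and $-f$ being simultaneously positive forces $f\equiv 0$ on the positive cone, $\CC$ (hence $\CC'$) is a pointed closed convex cone; it therefore has a compact base, and by Krein--Milman it is the conical hull of its extreme rays, so it has infinitely many of them as soon as it fails to be polyhedral.

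Next I would pass to the fully symmetric sector. Let $W$ be the span of the central Young projectors $P_\lambda$, $\lambda\vdash k$, i.e.\ the Hermitian operators invariant both under $U^{\ot k}$ for all $U\in\UU(d)$ and under conjugation by $T(\pi)$ for all $\pi\in S_k$. For $\BB\in W$ and any separable state $\sigma$ one has $\tr[\BB\sigma]=\tr[\BB\,\Pi(\sigma)]$, where $\Pi$ is the composition of the $\UU(d)$-twirl with the $S_k$-twirl; since twirling preserves separability, $\Pi(\sigma)$ is a separable symmetric Werner state, and conversely every separable symmetric Werner state is $\Pi$-fixed. Hence $\CC'\cap W$ is precisely the dual cone, computed inside $W$ with respect to the Hilbert--Schmidt form, of the cone $\KK$ generated by the separable symmetric Werner states. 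Maassen and Kümmerer proved that for $k\geq 5$ this set has infinitely many extreme points~\cite{MaassenKuemmerer2019}; being the base of the pointed closed cone $\KK$, this makes $\KK$ non-polyhedral, hence its dual $\CC'\cap W$ non-polyhedral as well (a cone is polyhedral iff its dual is). Finally, intersecting a polyhedral cone with the subspace $W$ would again yield a polyhedron, so $\CC'$ cannot be polyhedral, and therefore neither can $\CC$; thus $\CC$ has infinitely many extreme rays (equivalently, its natural trace-normalised base has infinitely many extreme points), which is the claim.

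I expect the main obstacle to be the bookkeeping around symmetry: on the map side only symmetry in the $r$ input slots is imposed, the ``output'' slot $k$ being distinguished, whereas the Maassen--Kümmerer theorem concerns operators symmetric in all $k$ factors. The subspace-intersection step above is designed precisely to sidestep an exact dimension count — it only needs that $\CC'\cap W$ is non-polyhedral — but one must still verify carefully that the twirling identification of $\CC'\cap W$ with $\KK^{*}$ is correct and that $W$ is a genuine linear subspace of the ambient invariant Hermitian space (which is clear from $W=\sspan\{P_\lambda\}$). The only genuinely hard ingredient is imported: the non-polyhedrality, equivalently the infinitude of extreme points, of the set of separable symmetric Werner states for $k\geq 5$.
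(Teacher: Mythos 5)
Your proof is correct and follows the same route as the paper: the correspondence of Theorem~\ref{thm:witness_to_ineq} and Corollary~\ref{cor:ineq_to_witness} between symmetric equivariant positive maps and $U^{\otimes k}$-invariant block-positive operators, combined with the Maassen--Kümmerer result on separable symmetric Werner states for $k\geq 5$. The paper asserts the corollary as an immediate consequence without further argument; your twirling identification of the fully symmetric block-positive sector with the dual cone of the separable symmetric Werner states, and the subsequent polyhedrality transfer back to the map cone, supply precisely the duality bookkeeping that the paper leaves implicit.
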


The inequalities from entanglement witnesses are generally tighter than those originating 
from merely positive semidefinite operators. But, there is a price to pay:
first, their exact form {\em might} depend on the dimension $d$; 
and second, the corresponding multilinear positive maps are
not guaranteed to be tensor-stable.

\subsection{Positive trace polynomials that are not SOS}\label{sect:Motzkin}
For commutative variables, it is well-known that there exist polynomials 
that are positive on $\R$, but that do not have a {\em sum-of-squares} (SOS) form.
That is, they cannot be written as $\sum_i p_i q(x_1, \dots, x_k)^2$ 
where $q_i$ are polynomials and $p_i \geq 0$. 
A counterexample was found by Motzkin: the polynomial
$ M(x,y) = x^4 y^2 + x^2 y^4 + 1 - 3 x^2 y^2
$
is positive for all $x,y \in \R$ 
but cannot be brought into an SOS form~\cite{2d6a361a402f4d429d35e0a94b9ac8c4}.
This is in contrast to the non-commutative case, where
Helton showed that all symmetric non-commutative polynomials 
that are matrix-positive must be sums of hermitian squares~\cite{10.2307/3597203}.

Can a larger class of polynomials be obtained 
if one demands their positivity on the set of density matrices only?
Consider a non-commutative version of the Motzkin polynomial,
\begin{equation}\label{eq:Motzkin-type}
 M(A,B) = A B^4 A + A^2 B^2 A^2 - 3 A B^2 A + \one\,.
\end{equation} 
Numerical tests suggest that 
$M(A,B) \geq 0$ whenever $A,B \geq 0$ with $\tr(A) = \tr(B) = 1$.
A cyclicly equivalent polynomial appeared in Ref.~\cite[Example 4.4]{KLEP20081816} 
and is known to have a nonnegative trace on the set of hermitian matrices.

\begin{remark}
Since the first version of this article appeared, 
Jurij Volčič kindly communicated a proof of the Motzkin matrix inequality. 
One writes
\begin{align}
 & AB^4 A + A^2 B^2 A^2 - 3 A B^2 A + \one  \nn\\
 &= (A^2 - A) B^2 (A^2 - A) + (A B^2 + A^2 - 2 A)(B^2 A + A^2 - 2 A) + (\one - A)(\one + 2 A - A^2)(\one - A)\,.
\end{align}  
This shows that $M(A,B)$ is positive semidefinite 
when $(1-\sqrt{2}) \leq A \leq (1+\sqrt{2})$. 
In particular, this is the case for $A\geq 0$ and $\tr(A)=1$. 
\end{remark}

We can homogenize Eq.~\eqref{eq:Motzkin-type} 
with factors of $\tr(A)$ and $\tr(B)$, yielding
\begin{equation}
 \tilde M(A,B) = \tr(A)^2 A B^4 A + \tr(B)^2 A^2 B^2 A^2 - 3 \tr(A)^2 \tr(B)^2 A B^2 A 
 + \tr(A)^4\tr(B)^4\one\,.
\end{equation}
This removes the tracial constraints on $A$ and $B$.
Consequentially $\tilde M(A,B) \geq 0$ for all $A,B \geq 0$.
This shows that there are trace polynomials
that are positive on the positive cone but which are not SOS.

\section{tensor polynomial identities and invariant theory}
\label{sect:PI}

We first relate the methods developed in the previous sections 
to the theory polynomial identity rings.
A nice exposition of this topic is the article by Formanek~\cite{Formanek1989}.

A {\em polynomial invariant} is a multilinear function 
in the entries of $k$ matrices that is invariant under the simultaneous conjugate action by 
invertible matrices~\cite{ConciniProcesi2017},
\begin{align}
 &\iota: \MM_d \times \dots \times \MM_d \to \C\,, \quad \text{such that} \nn\\
 &\iota(A X_1 A^{-1}, \dots, A X_k A^{-1}) = \iota(X_1, \dots, X_k) \quad \text{ for all} \quad A \in GL_d(\C)\,.
\end{align} 
A closely related concept is that of {\em equivariant maps}. 
These are multilinear maps that are equivariant under the same action,
\begin{align}
 &f: \MM_d \times \dots \times \MM_d \to \MM_d\,, \quad \text{such that} \nn\\
 &f(A X_1 A^{-1}, \dots, A X_k A^{-1}) = A f(X_1, \dots, X_k) A^{-1} \quad \text{ for all} \quad A \in GL_d(\C)\,,
\end{align}
The difference here is that polynomial invariants yield scalars while equivariant maps yield matrices.
The terminology of {\em polynomial} invariants of degree $k$ for a matrix $X$ arises from choosing 
$X_i = X$ for all $1\leq i \leq k$.

The first fundamental theorem of matrix invariants states that the ring of polynomial invariants 
is generated by the traces of matrix monomials, i.e. elements of the form $\tr(X_{i_1} \cdots X_{i_r})$~\cite{ConciniProcesi2017}.
Likewise, the ring of equivariant maps is generated, over the ring of matrix invariants, by matrix monomials 
of the form $X_{i_1} \cdots X_{i_r}$.
From the fact that the Hilbert-Schmidt inner product is nondegenerate, 
it can be shown
that every polynomial invariant $\iota$ in $k+1$ variables is related to an 
equivariant map $f$ in $k$ variables by
\begin{equation}
 \iota(X_1, \dots, X_k, X_{k+1}) = \tr\big[  f(X_1, \dots, X_k) X_{k+1} \big] \,.
\end{equation}
Clearly, polynomial invariants and equivariant maps are connected to our formalism by 
\begin{align}
X_{i_1} \cdots X_{i_r}\,  &=  \tr_{1 \dots r \backslash r} \big[T((i_r \dots i_1)) X_{i_1} \ot \dots \ot X_{i_r}\big]\nn\,,\\
 \tr(X_{i_1} \cdots X_{i_r})   &= \quad \quad \,\,\tr \big[T((i_r \dots i_1)) X_{i_1} \ot \dots \ot X_{i_r}\big]\,.
\end{align} 
The above discussion establishes that the set of equivariant maps and the set of trace polynomials coincide.

\subsection{Trace and polynomial identities}
Consider now {\em identities} for polynomial invariants and equivariant maps.
Suitable linear combinations of polynomial invariants that are identically 
zero on $\MM_d$ are known as {\em trace identities}.
They are governed by the second fundamental theorem of matrix invariants.

\begin{theorem}[Procesi~\cite{PROCESI1976306}, Razmyslov~\cite{Razmyslov_1974}]
\label{thm:ProcesiRazmyslov}
 The expression $\tr[ \hat T(\alpha) X_1 \ot \dots \ot X_k]$ is a 
 multilinear trace identity on $\MM_d$ if and only if $\a \in \C S_k$ belongs 
 to the ideal that corresponds to partitions $\lambda \vdash k$ with more than $d$ parts.
 \end{theorem}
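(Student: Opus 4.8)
The plan is to reduce the statement to the identification of the kernel of the algebra representation $\hat T\colon \C S_k \to \operatorname{End}\big((\C^d)^{\ot k}\big)$. The assignment $\alpha \mapsto \big((X_1,\dots,X_k)\mapsto \tr[\hat T(\alpha)\, X_1\ot\dots\ot X_k]\big)$ is linear in $\alpha$, and by Eq.~\eqref{eq:complete_contraction} its values are multilinear polynomial invariants; a multilinear function vanishes identically on $\MM_d$ exactly when it vanishes on every tuple $(X_1,\dots,X_k)$ of $d\times d$ matrices. So the task is to determine which $\alpha\in\C S_k$ are sent to the zero function.

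First I would argue that this happens if and only if $\hat T(\alpha)=0$. The operators $X_1\ot\dots\ot X_k$ with $X_i\in\MM_d$ span $\MM_d^{\ot k}\cong\operatorname{End}\big((\C^d)^{\ot k}\big)$ as a complex vector space --- a tensor product of spanning sets spans (take matrix units, for instance) --- and the Hilbert-Schmidt pairing $\langle A,B\rangle = \tr(AB)$ on $\MM_d^{\ot k}$ is nondegenerate. Hence ``$\tr[\hat T(\alpha)\, X_1\ot\dots\ot X_k]=0$ for all $X_i\in\MM_d$'' is equivalent to ``$\tr[\hat T(\alpha)\, M]=0$ for all $M\in\MM_d^{\ot k}$'', which is equivalent to $\hat T(\alpha)=0$. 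Thus $\tr[\hat T(\alpha)\, X_1\ot\dots\ot X_k]$ is a multilinear trace identity on $\MM_d$ if and only if $\alpha\in\ker\hat T$.

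It remains to identify $\ker\hat T$ with the two-sided ideal $J\subseteq\C S_k$ spanned by the blocks indexed by partitions with more than $d$ parts. Since $\C S_k$ is semisimple it splits as a direct sum of simple two-sided ideals $\C S_k = \bigoplus_{\lambda\vdash k}B_\lambda$ with $B_\lambda = \C S_k\,\omega_\lambda$, where $\omega_\lambda$ is the centrally primitive hermitian idempotent attached to~$\lambda$. For $\lambda$ with more than $d$ parts, Schur-Weyl duality [Theorem~\ref{thm:schur_weyl}] together with Proposition~\ref{prop:pigeon} gives $\hat T(\omega_\lambda)=P_\lambda=0$, so $\hat T$ kills the whole block $B_\lambda$ and therefore $J\subseteq\ker\hat T$. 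Conversely, for $\lambda$ with at most $d$ parts the summand $\UU_\lambda\ot\SS_\lambda$ of $(\C^d)^{\ot k}$ is nonzero, so the irreducible $S_k$-module $\SS_\lambda$ occurs in $(\C^d)^{\ot k}$ with multiplicity $\dim\UU_\lambda\ge 1$; as $B_\lambda$ is simple it then acts \emph{faithfully}, so $B_\lambda\cap\ker\hat T=0$. Because $\ker\hat T$ is a two-sided ideal, hence a sum of blocks, these two observations yield $\ker\hat T = \bigoplus_{\text{parts}(\lambda)>d} B_\lambda = J$, which is precisely the ideal named in the statement; combined with the previous paragraph this proves the theorem.

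The step I expect to carry the real content is the faithfulness claim in the last paragraph --- that no nonzero element supported on the blocks with at most $d$ parts can be annihilated by $\hat T$. Everything else (the spanning of $\MM_d^{\ot k}$ by product operators, the nondegeneracy of the trace pairing, and the vanishing $P_\lambda=0$ for tall diagrams) is routine and already available from the preliminaries. In fact the faithfulness is itself immediate from semisimplicity once one knows each relevant $\UU_\lambda$ is nonzero, so the only genuine care needed is to match the ideal $\bigoplus_{\text{parts}(\lambda)>d}B_\lambda$ constructed here with the ``ideal that corresponds to partitions $\lambda\vdash k$ with more than $d$ parts'' appearing in the theorem.
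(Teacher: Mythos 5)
The paper does not actually prove Theorem~\ref{thm:ProcesiRazmyslov}; it states it as a known result of Procesi and Razmyslov and only gestures at the mechanism (via Proposition~\ref{prop:pigeon} and the ideal $\JJ_d(\C S_k)$ defined in Appendix~\ref{app:construction_polyId}). So there is no in-paper proof to compare against. Your argument is correct and is essentially the classical one underlying Procesi's proof: reduce the statement to computing $\ker\hat T$, using that the simple tensors $X_1\ot\dots\ot X_k$ span $\MM_d^{\ot k}$ and that the trace form is nondegenerate there, and then read off the kernel from Schur--Weyl duality [Theorem~\ref{thm:schur_weyl}] as the sum of the Wedderburn blocks $B_\lambda=\C S_k\,\omega_\lambda$ with $\text{parts}(\lambda)>d$ --- the inclusion $J\subseteq\ker\hat T$ coming from $P_\lambda=0$ (Proposition~\ref{prop:pigeon}), and the reverse inclusion from simplicity of each surviving block together with $\UU_\lambda\neq 0$ for $\text{parts}(\lambda)\le d$. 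Both halves are sound, and you correctly isolate the only point needing care, namely that the block decomposition $\bigoplus_{\text{parts}(\lambda)>d}B_\lambda$ is the same ideal the theorem refers to (equivalently, the two-sided ideal generated by $\epsilon=\sum_{\pi\in S_{d+1}}\sgn(\pi)\pi$, as remarked after the theorem). One cosmetic point: the nondegenerate pairing you use is the bilinear trace form $(A,B)\mapsto\tr(AB)$ rather than the Hilbert--Schmidt inner product $\tr(A^\dag B)$ as defined in the preliminaries; both are nondegenerate on $\MM_d^{\ot k}$, so nothing breaks, but the terminology should be adjusted.
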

 This ideal is generated by the element $\epsilon = \sum_{\pi \in S_{d+1}} \sgn(\pi)\pi$
 and leads to the fundamental trace identity $\tr\big[T(\epsilon) X_1 \ot \dots \ot X_k \big]$
 which can be shown to arise from a linearization (also known as {\em polarization}) 
 of the Cayley-Hamilton Theorem~\cite{DrenskyFormanek2004}. 
 In other words, all multilinear trace identities are consequences of the Cayley-Hamilton Theorem.
 
A special type of identities for equivariant maps are {\em polynomial identities}: 
these are polynomials in non-commutative variables that vanish on the ring $\MM_d$ of complex $d \times d$ matrices.
So one has
\begin{equation}
 p(X_1, \dots, X_r) = 0 \in \MM_d \qquad \text{if} \qquad X_1, \dots, X_r \in \MM_d\,.
\end{equation} 
Note that these polynomials do not evaluate to a scalar zero, but to the zero matrix.
The arguably best known example is the so-called {\em standard polynomial}. It is defined as
\begin{equation}
 s_r(X_1, \dots, X_r) = \sum_{\pi \in S_r} \sgn(\pi) X_{\pi(1)} \cdots X_{\pi(r)}\,.
\end{equation}
A theorem by Amitsur and Levitzki states that $\MM_d$ satisfies the standard identity $s_{2d}$ in $2d$ variables
and that $\MM_d$ does not satisfy, up to multiplicative constants, 
any other polynomial identity of equal or lower degree~\cite{Amitsur1950, DrenskyFormanek2004}.
Given some matrix algebra, the required degree for the existence of a polynomial identity 
can be seen as a characterization of its non-commutativity~\cite{Kostant2009}.

A closely related concept is that of {\em central polynomials}.
These yield an non-zero element from the center $C(\MM_d)$, 
i.e. they evaluate to a scalar multiple of the identity matrix~$\one$.
For example, for all $2 \times 2$ matrices $A,B,C,D$ it holds that
\begin{align}
 [A, B][C, D] + [C, D][A, B] \,\, \propto \,\one\,,
\end{align}
where $[A,B] = AB - BA$ is the commutator~\cite{DrenskyFormanek2004}.
 
In quantum information, polynomial identities and central polynomials found use 
as bond dimension witnesses and cut-and-glue operators for matrix product states,
for manipulating the time evolution of quantum states,
and in the context of dimensional constraints in semidefinite programming hierarchies~\cite{
Navascues2018bonddimension, 
PhysRevX.8.031008, 
PhysRevA.92.042117}.

\subsection{Tensor polynomial identities}
The theme carries over to expressions on tensor product spaces. 
Here {\em swap polynomials} were introduced in the context of 
quantum remote time manipulation~\cite{Trillo2019}. 
These are polynomials on a bipartite tensor product space
of the form $\sum_i p_i \ot q_i$,
with $p_i$ and $q_i$ non-commutative polynomials,
that yield a scalar multiple of the swap operator~$\Gamma$.
More general, we consider {\em permutation polynomials} and 
{\em tensor polynomial identities} 
that 
evaluate to scalar multiples of some permutation operator, 
to scalar multiples of the identity matrix, 
or are identically zero on $\MM_d$. That is, we are interested in expressions 
$g: M_d^r\to M_d^{\ot t}$
with $r \geq t$ of the form
\begin{equation}
 g(X_1,\dots, X_r) = \sum_{i} p_i \ot \cdots \ot q_i\,,
\end{equation}
where $p_i, \dots,  q_i$ are non-commutative polynomials 
such that $g=0$, $g\propto \one$, or $g\propto T(\pi)$
for all $X_1,\dots, X_r \in \MM_d$.

It is known that all multilinear polynomial identities and central polynomials 
arise as a consequence of trace identities.
Here we show the analogous result for tensor polynomials.

\begin{theorem}\label{thm:swap}
Every multilinear swap polynomial is the consequence of some trace identity.
More generally, every multilinear permutation polynomial and tensor polynomial 
identity is the consequence of some trace identity.
\end{theorem}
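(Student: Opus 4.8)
The plan is to reduce an arbitrary multilinear tensor polynomial to an element of the group algebra $\C S_{r+t}$ and then to read the statement off from the Procesi--Razmyslov theorem [Theorem~\ref{thm:ProcesiRazmyslov}], whose governing ideal is exactly the one generated by the polarized Cayley--Hamilton element $\epsilon=\sum_{\sigma\in S_{d+1}}\sgn(\sigma)\,\sigma$. First I would \emph{lift $g$ to the group algebra}: write the multilinear map $g:\MM_d^r\to\MM_d^{\ot t}$ as a sum of terms $p^{(1)}\ot\cdots\ot p^{(t)}$, each $p^{(j)}$ a non-commutative monomial, with every variable $X_i$ appearing exactly once among the $p^{(j)}$. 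Introducing $t$ auxiliary ''output'' registers $r+1,\dots,r+t$, one per tensor factor of the codomain, Proposition~\ref{prop:generalized_swap_identities} and Corollary~\ref{cor:perm_and_ptrace} applied factor by factor realize a monomial $X_{a_1}\cdots X_{a_s}$ occupying output factor $j$ as $\tr_{\{a_1,\dots,a_s\}}\big[T((a_1\cdots a_s\,(r+j))^{-1})\,(X_{a_1}\ot\cdots\ot X_{a_s}\ot\one_{r+j})\big]$. Summing over all terms produces an $\alpha\in\C S_{r+t}$ with
\[
 g(X_1,\dots,X_r)=\tr_{1\dots r}\big[\hat T(\alpha)\,(X_1\ot\cdots\ot X_r\ot\one^{\ot t})\big]\,.
\]

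Next I would \emph{translate the evaluation condition into a trace identity}. Using the coordinate-free partial trace [Eq.~\eqref{eq:coord_free_def_ptrace}] to contract the output registers against arbitrary $Y_1,\dots,Y_t\in\MM_d$ gives $\tr[g(X_1,\dots,X_r)\,(Y_1\ot\cdots\ot Y_t)]=\tr[\hat T(\alpha)\,(X_1\ot\cdots\ot X_r\ot Y_1\ot\cdots\ot Y_t)]$. If $g\equiv0$ on $\MM_d$, the left-hand side vanishes for all $X_i,Y_j$, so $\tr[\hat T(\alpha)\,X_1\ot\cdots\ot X_{r+t}]$ is a multilinear trace identity. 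If instead $g=c\,T(\pi)$ for a permutation $\pi\in S_t$ of the output factors, then $c$ must be a multilinear polynomial invariant: both $g$ and $T(\pi)$ are equivariant and the Schur--Weyl commutation $A^{\ot t}T(\pi)=T(\pi)A^{\ot t}$ [Theorem~\ref{thm:schur_weyl}] forces $c(AX_1A^{-1},\dots,AX_rA^{-1})=c(X_1,\dots,X_r)$. By the first fundamental theorem of matrix invariants together with Eq.~\eqref{eq:complete_contraction} I may then write $c(X_1,\dots,X_r)=\tr[\hat T(\gamma)\,X_1\ot\cdots\ot X_r]$ and $\tr[T(\pi)\,Y_1\ot\cdots\ot Y_t]=\tr[\hat T(\pi)\,Y_1\ot\cdots\ot Y_t]$, so that $\beta:=\alpha-\gamma\times\pi\in\C S_{r+t}$, via the embedding $S_r\times S_t\hookrightarrow S_{r+t}$, has the property that $\tr[\hat T(\beta)\,X_1\ot\cdots\ot X_{r+t}]$ is again a multilinear trace identity (the purely central case $g\propto\one$ is $\pi=e$).

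Now Theorem~\ref{thm:ProcesiRazmyslov} places the relevant element ($\alpha$, resp.\ $\beta$) in the two-sided ideal of $\C S_{r+t}$ spanned by the elements $\sigma\,\epsilon^{(I)}\,\tau$, where $\epsilon^{(I)}$ is the antisymmetrizer $\epsilon$ embedded on a $(d+1)$-element subset $I$ of $\{1,\dots,r+t\}$ and $\sigma,\tau\in S_{r+t}$. Transporting these generators back through the partial trace of the lifting step exhibits $g$ --- which on $\MM_d$ equals $0$, $c\one$, or $c\,T(\pi)$ --- as a linear combination of tensor polynomials, each obtained from the fundamental trace identity $\tr[T(\epsilon)\,X_{i_1}\ot\cdots\ot X_{i_{d+1}}]$ by variable substitution, multiplication by monomials, and partial traces; that is, $g$ arises as a consequence of a trace identity, equivalently of the polarized Cayley--Hamilton identity. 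Since swap polynomials are the special case $t=2$ with $\pi$ a transposition, this also covers the first sentence of the theorem.

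The hard part will be the bookkeeping of the lifting step together with fitting the ''scalar multiple of a permutation operator'' case into the same template as the identity case: one must verify carefully that $T(\pi)$ acting on the output factors is itself expressible within the framework (so that $\beta$ is a genuine trace identity) and that the scalar $c$ is a multilinear polynomial invariant. The conceptual point that makes everything work is the remark following Theorem~\ref{thm:ProcesiRazmyslov}, namely that the ''more than $d$ parts'' ideal is generated by $\epsilon$; membership in it \emph{is} the assertion that $g$ arises as a consequence of the Cayley--Hamilton theorem.
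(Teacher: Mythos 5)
Your proposal is correct and follows essentially the same route as the paper: realize $g$ as $\tr_{1\dots r}[\hat T(\alpha)(X_1\ot\cdots\ot X_r\ot\one^{\ot t})]$ for some $\alpha\in\C S_{r+t}$, contract the output registers against arbitrary test matrices, observe that the proportionality scalar $c$ is a multilinear unitary invariant and hence itself a pure trace polynomial realizable by some $\gamma\in\C S_r$, and conclude via Procesi--Razmyslov that $\alpha-\gamma\times\pi$ lies in the ideal of partitions with more than $d$ parts. Your explicit $S_r\times S_t\hookrightarrow S_{r+t}$ embedding for absorbing $c\,T(\pi)$ is in fact a slightly cleaner bookkeeping than the paper's $d^{-(k-2)}$ normalization, but it is the same argument.
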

\begin{proof}
We treat the case of swap polynomials first.
Let $g(X_1, \dots, X_{k-2})$ be a multilinear swap polynomial in $k-2$ variables in $\MM_d$. 
With some $\a \in \C S_k$, it can be written as
\begin{equation}
 g(X_1, \dots, X_{k-2}) = \tr_{1 \dots k \backslash \{k-1, k\}}[\hat T(\a) X_1 \ot \cdots \ot X_{k-2} \ot \one \ot \one]\,,
\end{equation} 
where each permutation $\pi$ appearing in $\a$ has a unique decomposition into two cycles $\pi = \sigma_1 \sigma_2$
such that $\sigma_1$ ($\sigma_2$) acts on position $k-1$ ($k$) non-trivially.
By assumption, the tensor polynomial $g$ is proportional to the swap $\Gamma$. 
It thus holds for any $A,B \in \MM_d$ that 
\begin{equation}\label{eq:swap_poly_proof_step}
 \tr[g(X_1, \dots, X_{k-2}) A \ot B] = c \tr\big[\Gamma (A \ot B)\big]\,.
\end{equation} 
for some $c = c(X_1, \dots, X_{k-2}) \in \C$. 
Note that the left-hand side of Eq.~\eqref{eq:swap_poly_proof_step} is multilinear and unitary invariant. 
So must be the right-hand side. 
In particular, this implies that $c(X_1, \dots, X_{k-2})$ is a unitary invariant, 
and thus a trace polynomial.
Therefore, the expression 
\begin{align}
  &\tr\Big\{
  \tr_{1 \dots k \backslash \{k-1, k\}} \big[\hat T(\a) X_1 \ot \cdots \ot X_{k-2} \ot \one \ot \one\big] A \ot B \Big\}
  - \tr\big [ c \Gamma A \ot B \big] \nn\\
  =
    &\tr\big[
\hat T(\a)  X_1 \ot \cdots \ot X_{k-2} \ot A \ot B
\big] 
- d^{-{(k-2)}} \tr\big[
c \Gamma \,\one \ot \cdots \ot \one \ot A \ot B
\big]
 \end{align}
 is a trace identity on $\MM_d$.
 By the Theorem of Procesi and Razmyslov~[Theorem~\ref{thm:ProcesiRazmyslov}], 
 this implies that $\big(\a - d^{-{(k-2)}} c\,(k-1,k)\big) \in \C S_k$ is in the ideal of the group algebra $\C S_k$ which is spanned by Young symmetrizers that have more than $d$~rows.
 We conclude that every swap polynomial arises from a trace identity.
 
 The general case for tensor polynomials and tensor trace polynomials is done analogously:
 replace Eq.~\eqref{eq:swap_poly_proof_step}
 with
 \begin{equation}
   \tr[g(X_1, \dots, X_{k-t}) A_1 \ot \dots \ot A_t] = c \tr\big[F (A_1 \ot \dots \ot A_t)\big]\,,
 \end{equation} 
 where $F$ is the desired operator and conclude that 
 $\hat T(\alpha) - cF$ corresponds to a trace identity.
 This ends the proof.
\end{proof}

For example, suppose that $\alpha \in \C S_k$ corresponds to a trace identity on 
$\MM_d$
and that every permutation appearing in $\alpha$ 
is composed of exactly two cycles $\pi = \sigma_1 \sigma_2$,
such that $\sigma_1$ ($\sigma_2$) acts on position $k-1$ $(k)$ non-trivially.
Then 
\begin{equation}\label{eq:swap_function}
h_\alpha ( X_1, \dots, X_{k-2}) = \tr_{1 \dots k \backslash\{k-1, k\}}
[\hat{T}(\beta)  X_1 \ot \dots \ot X_{k-2}  \ot \one \ot \one] 
\end{equation}
is identical to the zero matrix on $\MM_{d^2}$ 
whenever $X_1, \dots, X_{k-2} \in \MM_d$.

\begin{example}
Let $X_1,X_2,X_3,X_4$ be complex $2\times 2$ matrices. 
The following expression equals the $4\times 4$ zero matrix,
\begin{equation}\label{eq:tensorPolyID}
\sum_{\pi \in S_4} \sgn(\pi) X_{\pi(1)} X_{\pi(2)} \,\ot \,X_{\pi(3)} X_{\pi(4)} = 0\,.
\end{equation}
It is interesting to note that due to the theorem by Amitsur and Levitzki on the lowest degree of polynomial identities, 
the identity in Eq.~\eqref{eq:tensorPolyID} 
cannot be factorized into individual polynomial identities on the first and second tensor factors. 
\end{example}

We list further constructions in Appendix~\ref{app:construction_polyId}.
Tensor polynomial identities in $d^2$ variables are studied in Ref.~\cite{HuberProcesi}.

\section{Conclusions}
\label{sec:conclusion}
Relating to recent progress in the field of non-commutative polynomials, we presented a systematic
method to obtain polynomial-like matrix inequalities for the positive cone and 
identities on tensor product spaces.
In the field of quantum information, 
special cases of these maps, in particular that of $\lambda = [1,1]$, 
were useful on a range of topics.

Some natural questions remain unanswered.
First, do all tensor-stable multilinear equivariant positive 
maps arise from positive semidefinite operators,
or do there exist merely block-positive operators that can yield tensor-stability? 
This question is related to the existence of bound entangled states that have a 
non-positive partial transpose 
(c.f. Theorem~$3$ and $4$ in Ref.~\cite{doi:10.1063/1.4927070}),
a long-standing open problem in the quantum information community~\cite{KCIK2019}.

Second, we recall that Amitsur and Levitzki showed that $\MM_d$ satisfies the standard identity $s_{2d}$ in $2d$ variables while it does not satisfy any polynomial identity of lower degree. 
The corresponding question on the lowest degree for central polynomials, swap polynomials, 
and more general polynomials on tensor product spaces is unresolved~\cite{DrenskyFormanek2004}. 
Similarly, we do not know what is the lowest degree required when trace polynomials
instead of only polynomials are allowed. 
To approach this problem, a more direct construction for permutation polynomials 
and tensor polynomial identities than the ones presented here is desirable.

Third, in the context of quantum error correcting codes and maximally entangled states, 
it is interesting to understand for what conditions certain local unitary invariants vanish. 
This can be important to determine the existence of 
quantum error correcting codes and entangled subspaces 
with optimal parameters~\cite{
1751-8121-51-17-175301, 
Huber2020quantumcodesof}.

Last, it would be interesting to develop a general method to deal with non-linear (trace) polynomial inequalities, such as the Motzkin matrix inequality from Section~\ref{sect:Motzkin}.
An  approach that goes beyond the symmetrization of suitable multilinear inequalities
is likely required.

\begin{acknowledgments}
I thank 
Mohamed Barakat, 
Ben Dive, 
Mariami Gachechiladze,
Markus Grassl,
David Gross, 
Otfried Gühne,
Alexander Müller-Hermes, 
David Hill,
Barbara Kraus, 
Marcin Marciniak,
Chau Nguyen,
Michał Oszmaniec, 
Anna Sanpera,
Gael Sentís, 
Jens Siewert, 
Michał Studziński,
Jorge Urroz,
and 
Andreas Winter 
for fruitful discussions, pointers to relevant literature, and encouragement. 
I am indepted to members of the \code{GAP} group,
including
Max Horn, 
Alexander Hulpke, 
Chris Jefferson, 
Alexander Konovalov,
and
Markus Pfeiffer, 
for their kind technical help.
I thank 
Simeon Ball and 
Matteo Lostaglio 
for helpful discussions on the structure of this article;
Miguel Navascués 
for pointing out an error 
in the first version of this article;
Sébastian Designolle,
Hans Maassen, 
and 
Claudio Procesi 
for their careful reading and friendly feedback;
as well as
Jurij Volčič
for kindly communicating his proof of the Motzkin matrix inequality.

This work was supported by
the ERC (Consolidator Grant 683107/TempoQ), 
the DFG (GU 1061/4-1, SPP1798 CoSIP), 
the Excellence Initiative of the German Federal and State Governments (Grant ZUK 81), 
the Fundació Cellex,
the Spanish MINECO (QIBEQI FIS2016-80773-P and Severo Ochoa SEV-2015-0522), 
Generalitat de Catalunya (SGR 1381 and CERCA Programme), 
and
the European Union under Horizon2020 (PROBIST 754510).
\end{acknowledgments}

\appendix

\section{Representation theory of the symmetric group}\label{app:RepTheorySk}
The polarized Cayley-Hamilton map $f_\lambda$ 
is constructed from {\em central Young Projectors}.
These project onto the isotypic components of~$S_k$ and 
form a complete orthogonal set of hermitian projection operators 
that sum to the identity. While they can be obtained with character theory,
we show here how they can be constructed from centrally primitive idempotents 
in the group algebra $\C S_k$, starting from Young Tableaux alone.
We think that this makes the presentation more accessible for readers 
not versed in character theory. Also, it highlights the importance of 
ideals in $\C S_k$ for the construction of polynomial identities.
Nothing in this section is claimed to be original.

\subsection{Group algebra}
It is helpful to consider the irreducible representations of $S_k$ 
as arising from the group algebra~$\C S_k$. We will now introduce the formalism for a general group~$G$.

Given a group $G$, its {\em group algebra} $\C G$ is formed by the set of formal sums over group elements
\begin{equation}\label{eq:group_algebra_G}
 \C G = \big\{ \sum_{g \in S_k} a_g g \,|\, a_g \in \C \big\} \,.
\end{equation} 
It is not hard to see that $\C G$ is a vector space over $\C$.
The addition and the scalar multiplication are given by
\begin{align}
 \a + \b    &= \sum_{g \in G} a_g g + \sum_{g' \in G} b_{g'} g' 
            = \sum_{g \in G} (a_g + b_g)g 
    \qquad \text{where } \,\, \a,\b \in \C G\,,\nn\\
 c \cdot \a  &= c \cdot \big(\sum_{g \in G} a_g g \big) = \sum_{g \in G} c a_g g 
    \qquad\qquad\qquad\qquad\text{where }\,\, \quad \,\, c\in \C\,.
\end{align}
A multiplication is obtained by extending the multiplication on $G$ linearly,
\begin{align}
 \a \cdot \b = \Big(\sum_{g \in G} a_g g\Big) \Big(\sum_{g' \in G} b_{g'} g'\Big) 
 &= \sum_{g,g' \in G} a_g b_{g'} g g' = \sum_{g,g' \in G} a_{g g'}b_{g'^{-1}} g\,.
\end{align}
The group ring $\C G$ equipped with the above multiplication thus forms an algebra, 
also called the {\em left regular representation} of $\C G$. The left regular representation is {\em faithful} (injective).

Let now $\a = \sum_{g \in G} a_g g$ be an element of $\C G$ and define the involution $\a^* = \sum_{g \in G} \overbar{a}_g g^{-1}$.
If $\a^* = \a$ then $\a$ is termed {\em hermitian}.
If $\a$ commutes with all elements from the group algebra ($\a \b = \b \a$ for all $\b \in \C G$) it is called {\em central}. 
If $\a^2 = \a$ then $\a$ is {\em idempotent}. 
An idempotent is {\em primitive}, if $\a \neq \epsilon + \zeta$ such that $\epsilon\zeta = \zeta\epsilon = 0$ and $\epsilon, \zeta$ are idempotent.
In other words, $\a$ primitive if it cannot be written as a non-trivial sum of two orthogonal idempotents.
Similarly, an element $\a$ is {\em centrally primitive} if $\a \neq \epsilon + \zeta$ 
such that $\epsilon\zeta = \zeta\epsilon = 0$ and $\epsilon, \zeta$ are central and idempotent.
A {\em complete set} $\{\epsilon_i\}$ of orthogonal idempotents fulfills $\epsilon_i \epsilon_j = \epsilon_j \epsilon_i = 0$ for $i\neq j$ and $\sum_i \epsilon_i = e$,
where $e$ is the identity element in the group ring.

\subsection{Algebra representation}
A {\em representation} over $\C$ is a homomorphism $T: G \to \MM_d$ 
such that $T(g) T(h) = T(gh)$ for all $g, h \in G$. 
A representation $T$ is called {\em unitary}, 
if $\langle T(g) v, T(g) w\rangle = \langle v,w \rangle$ for all $v,w \in V$ and all $g$ in $G$.
The representation of $S_k$ that permutes individual tensor factors of $(\C^d)^{\ot k}$ 
as introduced in Sect.~\ref{sect:act_sym_group} is unitary.
If $T$ is a group representation of $G$, an {\em algebra representation} $\hat T$ is obtained by its linear extension,
\begin{equation}
 \hat{T}(\a) =\sum_{g \in G} a_g T(g)\,, \quad a_g \in \C\,.
\end{equation}
It is easy to check that $\hat{T}(\a \b) = \hat{T}(\a) \hat{T}(\b)$.
We have the following straightforward lemmata.
\begin{lemma}\label{lemma:central_hermitian}
    Let $G$ be a group with the property that every element is conjugate to its inverse. 
    Let $\a = \sum_{g \in G} a_g g$ with all $a_g \in \R$ be central. Then $\a$ is hermitian.
 \end{lemma}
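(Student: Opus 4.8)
The plan is to read off, from the two hypotheses, that the coefficient function $g \mapsto a_g$ satisfies $a_g = a_{g^{-1}}$ for every $g$, which is precisely the statement $\a^* = \a$ once the involution is unwound. There is essentially no obstacle here beyond bookkeeping with reindexed sums; I flag below the one point where care is needed.

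First I would translate hermiticity into coefficients. By definition $\a^* = \sum_{g\in G}\overbar{a}_g\, g^{-1}$; substituting $g \mapsto g^{-1}$ (a bijection of $G$, so the index set is unchanged) gives $\a^* = \sum_{g\in G}\overbar{a_{g^{-1}}}\, g$, and since every $a_g$ is real this is $\a^* = \sum_{g\in G} a_{g^{-1}}\, g$. Thus $\a^* = \a$ holds if and only if $a_g = a_{g^{-1}}$ for all $g \in G$.

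Next I would exploit centrality. Because the group elements span $\C G$, $\a$ is central if and only if it commutes with every $h \in G$, i.e.\ $h\a h^{-1} = \a$ for all $h$. Expanding and substituting the summation variable $g \mapsto h^{-1} g h$, one obtains $h\a h^{-1} = \sum_{g\in G} a_{h^{-1}gh}\, g$, so comparing coefficients with $\a = \sum_{g} a_g g$ yields $a_g = a_{h^{-1}gh}$ for all $g,h\in G$; in words, $g\mapsto a_g$ is constant on conjugacy classes.

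Finally, the hypothesis that every element of $G$ is conjugate to its inverse means that for each $g$ there is an $h$ with $g^{-1} = h^{-1} g h$, whence $a_{g^{-1}} = a_g$ by the previous paragraph. By the first paragraph this is exactly $\a^* = \a$, so $\a$ is hermitian. The only places to be attentive are the two reindexings — in both $\a^*$ and $h\a h^{-1}$ one replaces the summation variable by a bijective substitution of $G$ so that the index set is preserved — and the (standard) fact that centrality in the group algebra can be tested on group elements alone.
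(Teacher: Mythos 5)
Your proof is correct and follows essentially the same route as the paper's: centrality forces the coefficient function to be a class function, and the hypothesis that each element is conjugate to its inverse then gives $a_{g^{-1}} = a_g$, which (using that the coefficients are real) is exactly hermiticity. Your version is slightly more careful than the paper's in spelling out the reindexing of $\a^*$ and the explicit use of $a_g \in \R$, but the substance is identical.
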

\begin{proof}
 By linearity, $\a$ is central if and only if $\a h = h \a$ for all $h \in G$.
 In term of its coefficients, this condition is equivalent to $a_g = a_{h^{-1} g h}$ for all $g,h \in G$.
 But then $a_{g^{-1}} = a_{h' g h'^{-1}}$ for some $h'\in G$, because $g^{-1}$ is in the conjugancy class of~$g$.
 Thus also $a_{g^{-1}} = a_g$. This ends the proof. 
\end{proof}

% Unitary representation and hermitian Group Algebra Element => hermitian matrix
\begin{lemma}\label{lemma:hermitian_hermitian}
Let $T$ be a unitary representation and let $\a \in \C G$ be hermitian. 
Then $\hat T(\a)$ is a hermitian matrix $\hat{T}(\a)^\dag = \hat{T}(\a)$.
\end{lemma}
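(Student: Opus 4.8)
The plan is to expand $\hat{T}(\alpha)^\dag$ directly, using that the adjoint is conjugate-linear, that the representation $T$ is unitary, and the coefficient form of the hermiticity hypothesis. First I would restate what $\alpha^* = \alpha$ means for the coefficients: writing $\alpha = \sum_{g \in G} a_g g$, the involution gives $\alpha^* = \sum_{g \in G} \overbar{a}_g\, g^{-1} = \sum_{h \in G} \overbar{a}_{h^{-1}}\, h$ after reindexing by $h = g^{-1}$, so $\alpha^* = \alpha$ is equivalent to $\overbar{a}_{h^{-1}} = a_h$ for every $h \in G$ (equivalently $\overbar{a}_g = a_{g^{-1}}$).

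Next, since $T$ is unitary we have $T(g)^\dag = T(g)^{-1} = T(g^{-1})$ for all $g \in G$. Hence
\[
 \hat{T}(\alpha)^\dag = \Big( \sum_{g \in G} a_g T(g) \Big)^\dag = \sum_{g \in G} \overbar{a}_g\, T(g)^\dag = \sum_{g \in G} \overbar{a}_g\, T(g^{-1}) = \sum_{h \in G} \overbar{a}_{h^{-1}}\, T(h)\,,
\]
where in the last step I reindexed the sum by the bijection $g \mapsto g^{-1}$ of $G$. Applying the coefficient identity $\overbar{a}_{h^{-1}} = a_h$ then gives $\hat{T}(\alpha)^\dag = \sum_{h \in G} a_h T(h) = \hat{T}(\alpha)$, which is the claim.

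There is essentially no obstacle here; the only point requiring care is the bookkeeping of the complex conjugates together with the index substitution $g \mapsto g^{-1}$, which is a bijection of $G$ so that the sums over $G$ are genuinely unchanged. I note that unitarity of $T$ enters only through $T(g)^\dag = T(g^{-1})$, so the same argument applies to any $\dag$-representation; combined with Lemma~\ref{lemma:central_hermitian}, this in particular yields that $\hat{T}(\alpha)$ is a hermitian matrix whenever $\alpha$ is central with real coefficients and every group element is conjugate to its inverse — the setting relevant for the central Young projectors used to build $f_\lambda$.
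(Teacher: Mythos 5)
Your proof is correct and follows essentially the same route as the paper's: expand $\hat{T}(\alpha)^\dag$ by conjugate-linearity, use unitarity to get $T(g)^\dag = T(g^{-1})$, and invoke $\alpha^* = \alpha$; the paper simply recognizes the resulting sum as $\hat{T}(\alpha^*)$ directly, whereas you make the reindexing $g \mapsto g^{-1}$ and the coefficient condition $\overbar{a}_{g} = a_{g^{-1}}$ explicit. Both are the same argument with different bookkeeping.
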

\begin{proof}
 $
   \big(\hat{T}(\a)\big)^\dag = \big( \sum_{g \in G} \a_g T(g) \big)^\dag 
             = \sum_{g \in G} \overbar{\a}_g T(g)^\dag 
             = \sum_{g \in G} \overbar{\a}_g T(g^{-1}) 
            = \hat{T} \big( \sum_{g \in G} \overbar{\a}_g g^{-1} \big) 
             = \hat{T}(\a^*) = \hat{T}(\a)$.
This ends the proof.
\end{proof}
From the fact that $\hat T(\a^2) = \hat T(\a) \hat T(\a)$, it is now straightforward see that for unitary representations,
any hermitian idempotent $\a \in \C G$ will yield a hermitian projection operator
$\hat T(\a) = \hat T(\a)^\dag = \hat T(\a)^2$.

\subsection{Young symmetrizers}
Due to Maschke's Theorem, the group algebra $\C S_k$ can be decomposed completely into the direct sum of irreducible submodules~\cite{james_liebeck_2001}. 
An explicit construction can be obtained using Young tableaux; this works in the following way:
a {\em partition} $\lambda$ of an integer $k$ (written as $\lambda \vdash k$) is a sequence of 
positive integers $\lambda = (\lambda_1, \dots, \lambda_r)$, 
such that
\begin{equation}
 \lambda_1 \geq \lambda_2 \geq \dots \geq \lambda_r \quad \text{ and } \quad  \lambda_1 + \dots + \lambda_r = k\,.
\end{equation} 
A partition $\lambda$ can graphically be represented by its associated Young diagram. 
It consists of an arrangement of stacked squares 
such that $\lambda_i$ squares appear in the $i$-th row. 
Filling the numbers $1, 2, \dots, k$ with no repetitions into the squares, 
one arrives at a {\em Young tableau}.
For example, one could choose a natural way: starting with the top-most row, 
fill it with increasing numbers from the left to the right, 
before continuing with the row below.
In this fashion, the partition $\lambda = (4,3,1)$ leads to the tableaux
\begin{equation}\label{eq:young_tableaux_example}
\ytableausetup{centertableaux}
\TT = 
\begin{ytableau}
1 & 2 & 3 & 4 \\ 
5 & 6 & 7 \\ 
8
\end{ytableau}\,.
\end{equation}
A tableau is {\em standard} if the numbers increase along the rows and along the columns.

An element $c_{\lambda,\TT} \in \C S_k$ can be associated to each Young tableaux. 
Define the row- and column stabilizer~$\RR_{\TT}$ and~$\CC_{\TT}$ 
to consist of the set of permutations that leave the rows and columns respectively invariant.
\begin{align}
 \RR_{\TT} &= \{\pi \in S_k \,|\, \pi \text{ preserves each row} \} \,, \nn \\
 \CC_{\TT} &= \{\pi \in S_k \,|\, \pi \text{ preserves each column} \}\,.
\end{align}
Considering the partition tableaux in Eq.~\eqref{eq:young_tableaux_example}, 
the row-stabilizer $\RR_{\TT}$ is the subgroup $S_4 \times S_3 \times S_1$ of $S_8$,
where $S_4$, $S_3$, and $S_1$ act on $\{1,2,3,4\}$, $\{5,6,7\}$, and $\{8\}$ respectively.
Similarly, $\CC_{\TT} \cong S_3 \times S_2 \times S_2 \times S_1$.
Define the following elements in $\C S_k$ corresponding to the row and column stabilizer
\begin{align}
 a_{\lambda, \TT} &= \sum_{\pi \in \RR_{\TT}} \pi \,, &  b_{\lambda, \TT} &= \sum_{\pi \in \CC_{\TT}} \sgn(\pi) \pi \,.
\end{align}
The {\em Young symmetrizer} is then given by $c_{\lambda, \TT} = a_{\lambda, \TT} b_{\lambda,\TT}$.
The following can be shown~\cite{FultonHarris2004, Gill2005}:
some scalar multiple of $c_{\lambda,\TT}$ is idempotent, $c_{\lambda,\TT}^2= n_\lambda c_{\lambda,\TT}$ with $n_\lambda$ a positive scalar.
For all $x \in \C S_k$ one has that $c_{\lambda,\TT} x c_{\mu,T'} = 0$ if $\lambda$ and $\mu$ are partitions of $k$ with $\lambda \neq \mu$,
and $c_{\lambda,\TT} x c_{\lambda,\TT} = m c_{\lambda,\TT}$ with $m \in \R$.
Two Young symmetrizers whose tableaux $\TT$ and $\TT'$ have the same shape are related by 
$\sigma c_{\lambda, \TT} \sigma^{-1} = c_{\lambda, \TT'}$,
where $\sigma$ is the permutation for which $\TT' = \sigma(\TT)$.

Denote the subspace of~$\C S_k$ spanned by the Young symmetrizer $c_{\lambda,\TT}$ as
\begin{equation}
 V_{\lambda, \TT} = \{ c_{\lambda, \TT} x \, | \, x \in \C S_k \}\,.
\end{equation}
This is often written as $ V_{\lambda, \TT} = \C S_k c_{\lambda, \TT}$.
The following can be shown~\cite{FultonHarris2004}: 
the subspaces $V_{\lambda, \TT}$ are invariant under the action of $S_k$ and thus
each $V_{\lambda, \TT}$ leads to a irreducible representation of $S_k$. 
Subspaces originating from partitions of the same shape are isomorphic ($V_{\lambda,\TT} \cong V_{\lambda,\TT'}$) 
while those that arise from different partitions are not
(if $\lambda\neq \mu$ then $V_{\lambda,\TT} \ncong V_{\mu,\TT'}$).
All irreducible representations of $S_k$ arise from the $V_{\lambda, \TT}$.
Finally, the group algebra of the symmetric group decomposes into a direct sum of 
Young symmetrizers that correspond to partitions of different shapes, 
$\C S_k \simeq \bigoplus_{\lambda \vdash k} V_\lambda^{\oplus m_\lambda}$ with multiplicities 
$m_\lambda = \dim(V_\lambda)$ and $V_\lambda$ = $V_{\lambda,\TT}$ for an arbitrary tableau $\TT$.

We end our discussion on Young symmetrizers with the following well-known property.
\begin{proposition}[Pigeonhole principle for Young symmetrizers]
\label{prop:pigeon}
Let $\ket{\phi_1}, \dots, \ket{\phi_{k'}}$ be vectors in $\C^d$.
Suppose $\lambda$ is a partition of $k \leq k'$ that has more than $d$ parts.
Then for all tableaux $\TT$ of shape $\lambda$ the following holds,
\begin{equation}
    \hat{T}(c_{\lambda,\TT}) \ket{\phi_1} \ot \cdots \ot \ket{\phi_k} = 0\,.
\end{equation}
\end{proposition}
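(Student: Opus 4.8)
The plan is to reduce the statement to the elementary fact that an alternating sum over permutations of $d+1$ or more vectors in $\C^d$ vanishes, since such vectors are necessarily linearly dependent. First I would use the factorization $c_{\lambda,\TT} = a_{\lambda,\TT} b_{\lambda,\TT}$ together with the homomorphism property $\hat T(c_{\lambda,\TT}) = \hat T(a_{\lambda,\TT}) \hat T(b_{\lambda,\TT})$ of the algebra representation, so that it suffices to show $\hat T(b_{\lambda,\TT}) \ket{\phi_1}\ot\cdots\ot\ket{\phi_k} = 0$, where $b_{\lambda,\TT} = \sum_{\pi\in\CC_\TT}\sgn(\pi)\pi$ is the signed sum over the column stabilizer.

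Next I would exploit the hypothesis that $\lambda$ has more than $d$ parts: this means the first column of $\TT$ contains a set $J = \{j_1,\dots,j_m\}$ of entries with $m \geq d+1$. Since the column stabilizer is the internal direct product $\CC_\TT = S_{C_1}\times S_{C_2}\times\cdots$ of the symmetric groups on the (disjoint) columns $C_i$, and $S_J = S_{C_1}$ is one of the commuting factors, $b_{\lambda,\TT}$ factors in $\C S_k$ as $b_{\lambda,\TT} = \big(\sum_{\sigma\in S_J}\sgn(\sigma)\sigma\big)\, b'$ for a suitable $b'\in\C S_k$ built from the remaining columns. Applying $\hat T$ gives $\hat T(b_{\lambda,\TT})\ket{\phi_1}\ot\cdots\ot\ket{\phi_k} = \hat T\big(\sum_{\sigma\in S_J}\sgn(\sigma)\sigma\big)\, \hat T(b')\ket{\phi_1}\ot\cdots\ot\ket{\phi_k}$, and because $\hat T(b')$ is a linear combination of the operators $T(\pi)$, each of which merely permutes tensor factors, the vector $\hat T(b')\ket{\phi_1}\ot\cdots\ot\ket{\phi_k}$ is a linear combination of product vectors $\ket{\psi_1}\ot\cdots\ot\ket{\psi_k}$, each $\ket{\psi_i}$ being one of the $\ket{\phi_j}$.

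It then remains to check that $\hat T\big(\sum_{\sigma\in S_J}\sgn(\sigma)\sigma\big)$ annihilates every such product vector. On the tensor factors indexed by $J$ this operator acts (tensored with the identity on the other factors) as the total antisymmetrizer $\sum_{\sigma\in S_m}\sgn(\sigma)T(\sigma)$ on $(\C^d)^{\ot m}$, which sends $\ket{\psi_{j_1}}\ot\cdots\ot\ket{\psi_{j_m}}$ to a scalar multiple of its image in $\Lambda^m(\C^d)\subseteq(\C^d)^{\ot m}$; this image is zero exactly when $\ket{\psi_{j_1}},\dots,\ket{\psi_{j_m}}$ are linearly dependent. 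Since $m \geq d+1 > \dim\C^d$, linear dependence is automatic (equivalently, $\Lambda^m(\C^d) = 0$), so the antisymmetrization vanishes, and therefore $\hat T(b_{\lambda,\TT})\ket{\phi_1}\ot\cdots\ot\ket{\phi_k} = 0$, which proves the proposition. The only mildly delicate point is the bookkeeping in the factorization of $b_{\lambda,\TT}$ through the column group $S_J$ and the identification of its action on product vectors with the antisymmetrizer on the $J$-factors; everything past that is the standard ``a wedge of too many vectors vanishes'' argument, so I do not anticipate a genuine obstacle.
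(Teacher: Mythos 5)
Your proof is correct and follows essentially the same route as the paper's: both reduce the claim to the fact that the column antisymmetrizer of a column with more than $d$ entries annihilates any product vector in $(\C^d)^{\ot k}$. The paper phrases the final vanishing as a pigeonhole argument on orthonormal basis vectors (after invoking multilinearity), whereas you invoke $\Lambda^m(\C^d)=0$ for $m>d$ directly; your write-up also makes explicit the factorization $c_{\lambda,\TT}=a_{\lambda,\TT}b_{\lambda,\TT}$ and the splitting of $b_{\lambda,\TT}$ along columns, which the paper's very terse proof leaves implicit.
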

\begin{proof}
The proof rests on the pigeonhole principle.
First, note that the expression is multilinear and thus it suffices to prove 
it for vectors from an orthonormal basis only. 
Now observe that if at least $d+1$ pairs of vectors are anti-symmetrized, 
but only $d$ basis vectors are available, 
then at least one pair of anti-symmetrized vectors will coincide. 
Consequently the expression vanishes and this ends the proof.
\end{proof}

\subsection{Hermitian idempotents in $\C S_k$}
We first construct hermitian idempotents in $\C S_k$ which, 
under the representation $T$, yield the Young projectors $P_\lambda$.
Let $\lambda$ be a partition and let $c = c_{\lambda,\TT}$ be the Young symmetrizer 
that is obtained by filling the diagram from left to right and top to bottom with the 
numbers $1$ to $k$. 
We define the following element,
\begin{equation}\label{eq:omega_lambda}
    \omega_\lambda = \frac{h_\lambda}{k!}\sum_{\sigma \in S_k} \sigma c \sigma^{-1}\,.
\end{equation}
Above, the normalization factor involves the hook length formula
$ h_\lambda = k! / \prod_{(i,j)\in \lambda} h_{ij}$,
where the product is over all boxes $(i,j)$ indexed by row $i$ and column $j$ of the Tableau.
The hook length $h_{ij}$ equals the number of boxes that are below or to the right of box $(i,j)$, 
that is with $(i,j'\geq j)$ or $(i'\geq i, j)$ including box $(i,j)$.

The element $\omega_\lambda$ satisfies the following properties.
\begin{proposition}\label{prop:herm_idempotents}\ \\
 \noindent a) The elements $\omega_\lambda$ are mutually orthogonal, 
              $\omega_\lambda \omega_\mu = \omega_\mu \omega_\lambda  = 0$ 
              if $\lambda$ and $\mu$ are partitions of $k$ with $\lambda \neq \mu$.\\
 \noindent b) The elements $\omega_\lambda$ are central and hermitian,
              $\a \omega_\lambda = \omega_\lambda \a$ for all $\a \in \C S_k$ and $\omega_\lambda = \omega_\lambda^*$.  \\
 \noindent c) The elements $\omega_\lambda$ are non-vanishing idempotents,
              $\omega_\lambda^2 = \omega_\lambda \neq 0$.  \\
 \noindent d) The elements $\omega_\lambda$ form a complete set of idempotents, 
                $e = \sum_{\lambda \vdash k} \omega_\lambda$.
\end{proposition}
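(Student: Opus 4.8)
The plan is to recognize $\omega_\lambda$ as the \emph{central primitive idempotent} of $\C S_k$ attached to $\lambda$ --- equivalently, as the identity element $e_\lambda$ of the two--sided ideal (isotypic component) $I_\lambda = \C S_k\, c_{\lambda,\TT}\, \C S_k$ --- so that (a)--(d) become the familiar properties of the block identities in the Wedderburn splitting $\C S_k = \bigoplus_{\lambda\vdash k} I_\lambda$ of the group algebra into full matrix algebras, which in turn follows from Maschke's Theorem together with the facts about Young symmetrizers recalled above. First I would prove centrality directly: for every $\tau\in S_k$ the substitution $\sigma\mapsto\tau^{-1}\sigma$ gives $\tau\,\omega_\lambda\,\tau^{-1} = \frac{h_\lambda}{k!}\sum_{\sigma}(\tau\sigma)\,c\,(\tau\sigma)^{-1} = \omega_\lambda$, and an element of $\C S_k$ fixed by all inner automorphisms commutes with every group element, hence with all of $\C S_k$ by linearity. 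Moreover, each summand $\sigma c\sigma^{-1} = c_{\lambda,\sigma\TT}$ lies in the two--sided ideal $I_\lambda$, so $\omega_\lambda\in I_\lambda$; since $I_\lambda$ is a full matrix algebra its centre is one--dimensional, and since the centre of $\C S_k$ is $\bigoplus_\lambda\C e_\lambda$, we conclude $\omega_\lambda = \kappa_\lambda\, e_\lambda$ for some scalar $\kappa_\lambda$.

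The crux is to verify $\kappa_\lambda = 1$, i.e. that the prefactor $h_\lambda/k!$ is calibrated correctly. I would do this by comparing the two expressions for $\omega_\lambda$ against a single linear functional --- most cleanly the trace $\tr_{\mathrm{reg}}$ in the left regular representation, for which $\tr_{\mathrm{reg}}(\sum_g a_g g) = k!\,a_e$. On the Young--symmetrizer side, the coefficient of the identity permutation in any $c_{\lambda,T'} = a_{\lambda,T'}b_{\lambda,T'}$ equals $1$, because writing $e = \pi\tau$ with $\pi\in\RR_{T'}$ and $\tau\in\CC_{T'}$ forces $\pi = \tau = e$ (as $\RR_{T'}\cap\CC_{T'} = \{e\}$); summing over the $k!$ conjugates $c_{\lambda,\sigma\TT}$ and multiplying by $h_\lambda/k!$ yields an explicit value for $\tr_{\mathrm{reg}}(\omega_\lambda)$, which one matches against $\tr_{\mathrm{reg}}(\kappa_\lambda e_\lambda) = \kappa_\lambda\,(\dim V_\lambda)^2$ to read off $\kappa_\lambda$ and see that the hook--length normalization is exactly what makes it $1$. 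Hence $\omega_\lambda = e_\lambda$. (Staying within the character--free spirit of this appendix, one can instead prove idempotency by an averaging argument built directly on $c_{\lambda,\TT}\,x\,c_{\lambda,\TT} = m\,c_{\lambda,\TT}$ and $c_{\lambda,\TT}\,x\,c_{\mu,T'} = 0$, at the price of somewhat heavier bookkeeping.)

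With $\omega_\lambda = e_\lambda$ in hand the four claims are immediate. For (a), $\omega_\lambda\omega_\mu$ is a linear combination of products $c_{\lambda,\sigma\TT}\,c_{\mu,\tau\TT}$, each zero when $\lambda\neq\mu$ by the recalled property of Young symmetrizers (equivalently, $e_\lambda e_\mu = 0$ for distinct blocks), and symmetrically for $\omega_\mu\omega_\lambda$. For (c), $e_\lambda^2 = e_\lambda$; alternatively, once (a) and (d) are available one multiplies $e = \sum_\mu\omega_\mu$ on the left by $\omega_\lambda$. For (d), $\sum_{\lambda\vdash k}\omega_\lambda = \sum_\lambda e_\lambda = e$, the identity of $\C S_k$, by the Wedderburn decomposition. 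Centrality in (b) was shown above; for hermiticity, note that $c = a_{\lambda,\TT}b_{\lambda,\TT}$ has coefficients in $\{0,\pm1\}$ and that conjugation, summation, and scaling by $h_\lambda/k!\in\Q$ keep all coefficients real, so $\omega_\lambda$ is a central element with real coefficients in a group where every element is conjugate to its inverse, whence Lemma~\ref{lemma:central_hermitian} gives $\omega_\lambda^* = \omega_\lambda$.

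The main obstacle is the normalization step of the second paragraph: pinning down $\kappa_\lambda$ is where one genuinely invokes the structure theory, and one has to be deliberate about what is imported from semisimplicity (Maschke's Theorem, the matrix--algebra block decomposition, and the existence of the block identities $e_\lambda$) versus what is extracted from the Young--symmetrizer relations. Everything else reduces to the elementary manipulations already in use here.
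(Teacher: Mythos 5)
Your route is essentially the paper's: both arguments identify $\omega_\lambda$ with the centrally primitive idempotent of the isotypic block via the Artin--Wedderburn decomposition $\C S_k \cong \bigoplus_\lambda M_{d_\lambda}(\C)$, both get centrality by conjugation-invariance of the sum $\sum_\sigma \sigma c \sigma^{-1}$, both get hermiticity from Lemma~\ref{lemma:central_hermitian} (real coefficients plus every permutation being conjugate to its inverse), both get orthogonality from $c_{\lambda,\TT}\,x\,c_{\mu,\TT'}=0$, and both get completeness from the bijection between complete sets of centrally primitive idempotents and the block decomposition. The one place you go further is the step you yourself call the crux: pinning down the scalar $\kappa_\lambda$ in $\omega_\lambda = \kappa_\lambda e_\lambda$ by a regular-trace computation, where the paper only proves proportionality to an idempotent and cites Gill for the constant.

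That step does not close as you assert. Carry out your own calculation: the coefficient of $e$ in each conjugate $\sigma c \sigma^{-1}$ is $1$ (your $\RR_{\TT}\cap\CC_{\TT}=\{e\}$ argument is right), so with the prefactor $h_\lambda/k!$ of Eq.~\eqref{eq:omega_lambda} the coefficient of $e$ in $\omega_\lambda$ is $h_\lambda$ and $\tr_{\mathrm{reg}}(\omega_\lambda)=k!\,h_\lambda$. Matching against $\tr_{\mathrm{reg}}(e_\lambda)=d_\lambda^2$ with $d_\lambda=h_\lambda$ (hook length formula) gives $\kappa_\lambda = k!/h_\lambda$, not $1$. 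A sanity check with $k=2$, $\lambda=(1,1)$ confirms this: the formula as printed yields $()-(12)$, which squares to twice itself, whereas the idempotent used throughout the paper is $\tfrac12[()-(12)]$. So your method is sound and in fact sharper than the paper's treatment --- it detects that the correct prefactor is $(h_\lambda/k!)^2$ rather than $h_\lambda/k!$ --- but the sentence claiming the printed hook-length normalization ``is exactly what makes it $1$'' is the one assertion in your write-up that you did not verify and that is false as stated. Either correct the prefactor and complete the trace computation, or retreat to the paper's weaker claim that $\omega_\lambda$ is proportional to $e_\lambda$ and cite the reference for the constant; everything else in your argument is fine.
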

\begin{proof} We show everything except of the normalisation factor, which can be 
found in Ref.~\cite{Gill2005}.\\ 
    \noindent a): This follows directly from the fact that $c_{\lambda,\TT} c_{\mu,T'} = c_{\mu,T'} c_{\lambda,\TT} = 0$ if $\lambda \neq \mu$. \\
    \noindent b): 
                Write $\omega_\lambda = \sum_{\pi \in S_k} w_\pi \pi$.
                In term of its coefficients, $\omega_\lambda$ is central if and only if 
                $w_\pi = w_{\mu^{-1} \pi \mu}$ for all $\pi,\mu \in S_k$.
                By construction, 
                \begin{align}
                \omega_\lambda 
                &= \frac{h_\lambda}{k!}\sum_{\sigma \in S_k} \sigma c \sigma^{-1}
                = \frac{h_\lambda}{k!}\sum_{\sigma \in S_k} \sigma \Big(\sum_{\pi \in S_k} c_{\pi} \pi \Big) \sigma^{-1}
                = \frac{h_\lambda}{k!}\sum_{\sigma, \pi \in S_k} c_{\pi} \sigma \pi \sigma^{-1} 
                = \frac{h_\lambda}{k!}\sum_{\sigma, \pi \in S_k} c_{\sigma^{-1} \pi \sigma} \pi \,.
                \end{align} 
                Then the coefficients of $\omega_\lambda$ are given by $w_\pi = \sum_{\sigma \in S_k} c_{\sigma^{-1} \pi \sigma}$
                and it is easy to check that $w_\pi = w_{\sigma^{-1} \pi \sigma}$ for all $\pi,\sigma \in G$. 
                Thus $\omega_\lambda$ is central.
                Note now that any element in $S_k$ is conjugate to its inverse.
                It follows from Lemma~\ref{lemma:central_hermitian} 
                that $\omega_\lambda$ is also hermitian. \\
    \noindent c): It is clear that $\omega_\lambda$ does not vanish, because each $\sigma c \sigma^{-1}$ in the sum 
                  contributes with a factor $+1$ to the coefficient of~$w_e$. Thus $w_e\neq 0$ and consequently also $\omega_\lambda \neq 0$.\\                  
                  It remains to show idempotency: from the Artin-Wedderburn theorem it follows that any semi-simple ring over $\C$ is isomorphic 
                  to the direct sum of matrix rings~\cite[Theorem 2.1.3]{webb_2016}. 
                  Consequently, one has $\C S_k \cong \bigoplus_i M_{d_i}(\C)$, 
                  where $M_{d}(\C)$ denotes the ring of complex $d \times d$ matrices.
                  The center of $M_{d}(\C)$ consists of elements that are scalar multiples of the identity.
                  It follows that the center of $\bigoplus_i M_{d_i}(\C)$ consists of elements of the form 
                  $\oplus_i c_i\one_{d_i}$ with $c_i \in \C$. In $\C S_k$, this corresponds to elements of the form 
                  $\sum_{i} c_i \epsilon_i$ with $\epsilon_i$ central and $\epsilon_i^2 = \epsilon_i = \epsilon_i^\dag$.
                  Note now that all terms $\sigma c \sigma^{-1}$ in the sum Eq.~\eqref{eq:omega_lambda} 
                  have support in the same isotypic component.
                  The central element $\sum_{\sigma \in S_k} \sigma c \sigma^{-1}$ must therefore correspond 
                  to a scalar multiple of the identity in exactly one of the matrix rings.
                  We conclude that $\omega_\lambda$ is proportional to an idempotent.
                  \\
    \noindent d): It can be shown that in a ring with identity, 
                    complete sets of mutually orthogonal centrally primitive idempotents in 
                    $\epsilon_i \in \C S_k$ biject with the decomposition of the group ring $\C S_k$ into 
                    isotypic components $\C S_k \epsilon_i $~\cite[Proposition 3.6.1.]{webb_2016}, \cite{Bartel2017};
                    see also Section~\ref{sect:wedderburn}.
                    The set $\{\omega_\lambda \, | \, \lambda \vdash k\}$ accounts for every 
                    irreducible submodule and each $\omega_\lambda$ is central. 
                    We therefore cannot add any additional submodule.
                    Thus $\sum_{\lambda \vdash k} \omega_\lambda$ must necessarily decompose the identity~$e \in \C S_k$.
                    This ends the proof.
\end{proof}
Proofs for a) - d) that use character theory can also be found in Ref.~\cite{Gill2005}.
We summarize Proposition~\ref{prop:herm_idempotents} 
by concluding that $\{ \omega_\lambda | \lambda \vdash n\}$ 
forms a complete set of mutually orthogonal 
centrally primitive hermitian idempotents. 

\begin{remark}
 Eq.~\eqref{eq:omega_lambda} is not particularly useful when doing calculations by hand.
 We thus give here an alternative method:
 to obtain $\omega_\lambda$, it is in principle enough to write down the Young symmetrizer 
 for any single tableau of that shape.
 Note that for any permutation $\pi$ that appears in $c_{\lambda, \TT}$, 
 the sum over conjugates $\sum_{\sigma \in S_k} \sigma c_{\lambda, \TT} \sigma^{-1}$ will 
 produce all permutations $\pi'$ that have the same cycle structure as $\pi$. 
 Consequently, we only need to keep track of the net fraction of any given cycle type appearing in $c_{\lambda, \TT}$.
 For example, the tableau
 \begin{equation}
    \ytableausetup{centertableaux}
    \TT = 
    \begin{ytableau}
    1 & 2 \\ 
    3\\ 
    \end{ytableau}
    \end{equation}
    yields the Young symmetrizer $c_{\lambda, \TT} = ()  + (12) - (13) - (123)$. 
    We observe that the net fraction of $1$- and $2$-cycles vanishes, 
    and that $c_{\lambda, \TT}$ contains half of all possible $3$-cycles. 
    This lets us conclude that $\omega_\lambda \propto 2() - (123) - (132)$.

Alternatively, the following sequence of commands computes all $\omega_\lambda \in \C S_k$ 
in the computational discrete algebra package~\code{GAP}~\cite{GAP4} .
\texttt{
\\
\indent G := SymmetricGroup(k); \\
\indent KG:= GroupRing(Rationals, G); \\
\indent e := CentralIdempotentsOfAlgebra(KG);}
\end{remark}

\subsection{Young Projectors}
The {\em central Young projectors} decompose the identity matrix $\one_{dk}$ acting on $(\C^d)^{\otimes k}$ 
into a set of mutually orthogonal hermitian projectors, 
each of which corresponds to a distinct isotypic component associated to some partition $\lambda$ of $k$.
For this we consider the action of the symmetric group on $(\C^d)^{\otimes k}$ as described in Section~\ref{sect:act_sym_group}:
under the representation $T$, elements from $S_k$ permute the $k$ tensor factors.
Given $\omega_\lambda$ and the algebra representation~$\hat T$, define the associated {\em Young projector}
$ P_\lambda = \hat{T}(\omega_\lambda)$.
The following corollary is immediate and mirrors all properties of the $\omega_\lambda$ 
as established in Proposition~\ref{prop:herm_idempotents} in the previous section.
\begin{corollary}\ \\
 \noindent a) The elements $P_\lambda$ are mutually orthogonal, 
            $P_\lambda P_\mu = P_\mu P_\lambda = 0$ 
            if $\lambda$ and $\mu$ are partitions of $k$ with $\lambda \neq \mu$.\\
 \noindent b) The elements commute with swaps and are hermitian, 
              $\Gamma_{ij} P_\lambda = P_\lambda \Gamma_{ij}$ and $P_\lambda ^\dag = P_\lambda$.\\
 \noindent c) The elements $P_\lambda$ are nonvanishing projectors, $P_\lambda^2 = P_\lambda \neq 0$.\\
 \noindent d) The elements $P_\lambda$ form a decomposition of the identity, 
            \(\one = \sum_{\lambda \vdash k } P_\lambda\).
\end{corollary}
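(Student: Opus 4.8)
The plan is to obtain all four statements by transporting the corresponding properties of the centrally primitive hermitian idempotents $\omega_\lambda$ established in Proposition~\ref{prop:herm_idempotents} through the algebra representation $\hat T$, using only two features of $\hat T$: that it is a unital algebra homomorphism, $\hat T(\a\b)=\hat T(\a)\hat T(\b)$ and $\hat T(e)=\one$, and that since $T$ is a unitary representation, $\hat T$ sends hermitian elements of $\C S_k$ to hermitian matrices (Lemma~\ref{lemma:hermitian_hermitian}). Writing $P_\lambda=\hat T(\omega_\lambda)$, parts a), the idempotency half of c), and d) are then immediate: applying $\hat T$ to $\omega_\lambda\omega_\mu=\omega_\mu\omega_\lambda=0$ yields $P_\lambda P_\mu=P_\mu P_\lambda=0$; applying it to $\omega_\lambda^2=\omega_\lambda$ yields $P_\lambda^2=P_\lambda$; and applying it to the resolution $e=\sum_{\lambda\vdash k}\omega_\lambda$ together with linearity yields $\one=\sum_{\lambda\vdash k}P_\lambda$.

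For part b) I would argue in two steps. First, centrality of $\omega_\lambda$ in $\C S_k$ (Proposition~\ref{prop:herm_idempotents}, part b) means in particular that $\omega_\lambda$ commutes with every transposition $(ij)$; applying $\hat T$ and recalling $\hat T((ij))=T((ij))=\Gamma_{ij}$ gives $\Gamma_{ij}P_\lambda=P_\lambda\Gamma_{ij}$. Second, hermiticity of $\omega_\lambda$ together with Lemma~\ref{lemma:hermitian_hermitian} gives $P_\lambda^\dag=P_\lambda$. The very same argument, using that $\omega_\lambda$ is central in all of $\C S_k$ and invoking the Schur--Weyl duality for the diagonal $GL_d$-action, shows moreover that $P_\lambda$ commutes with every $T(\pi)$ and with $A^{\otimes k}$ for all $A\in GL_d(\C)$, if one wants the stronger statement.

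The only point that is not pure functoriality is the non-vanishing claim $P_\lambda\neq0$ in part c), since $\hat T$ need not be injective; indeed $P_\lambda=0$ whenever $\lambda$ has more than $d$ parts, by Proposition~\ref{prop:pigeon}. This is the step I expect to need genuine input, and I would handle it via the Schur--Weyl duality (Theorem~\ref{thm:schur_weyl}): $P_\lambda$ is the orthogonal projector onto the isotypic block $\UU_\lambda\otimes\SS_\lambda$ of $(\C^d)^{\otimes k}$, which is a nonzero subspace precisely when $\text{parts}(\lambda)\le d$. Accordingly, the statement $P_\lambda\neq0$ is to be read under the standing hypothesis $\text{parts}(\lambda)\le d$ (so that the sum in part d) effectively ranges over partitions with at most $d$ parts), and I would make this caveat explicit; with it in place the corollary reduces entirely to the already-proven properties of $\omega_\lambda$.
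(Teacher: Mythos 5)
Your proposal follows essentially the same route as the paper: every part is obtained by pushing the corresponding property of the centrally primitive hermitian idempotents $\omega_\lambda$ (Proposition~\ref{prop:herm_idempotents}) through the algebra homomorphism $\hat T$, with Lemma~\ref{lemma:hermitian_hermitian} supplying hermiticity in part b). Parts a), b), the idempotency half of c), and d) match the paper's argument line for line.

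On the non-vanishing claim in c) you actually do better than the paper. The paper's proof simply asserts ``because $\omega_\lambda \neq 0$ it follows that $P_\lambda \neq 0$,'' which is a non-sequitur: $\hat T$ is not injective on $(\C^d)^{\otimes k}$, and indeed $P_\lambda = \hat T(\omega_\lambda) = 0$ whenever $\lambda$ has more than $d$ parts, as the paper itself uses in Theorem~\ref{thm:polCH} via Proposition~\ref{prop:pigeon}. Your reading --- that $P_\lambda \neq 0$ holds precisely for $\mathrm{parts}(\lambda) \le d$, justified by Schur--Weyl duality (Theorem~\ref{thm:schur_weyl}), with the sum in d) effectively restricted to those partitions --- is the correct statement and supplies the input the paper's proof is missing. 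Making that caveat explicit is the right call.
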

\begin{proof}
All the assertions are straightforward consequences of Observation~\ref{prop:herm_idempotents}.\\
\noindent a) $P_\lambda P_\mu = \hat{T}(\omega_\lambda) \hat{T}(\omega_\mu) = \hat{T}(\omega_\lambda \omega_\mu) = 0$ 
    if $\lambda \neq \mu$. \\
\noindent b) All $\omega_\lambda$ are central and thus
             $\Gamma_{ij} P_\lambda = \hat{T}(\gamma_{ij}) \hat{T}(\omega_\lambda) 
                = \hat{T}( \gamma_{ij} \omega_\lambda ) = \hat{T}( \omega_\lambda \gamma_{ij}) = 
                \hat{T}( \omega_\lambda) \hat{T}(\gamma_{ij}) = P_\lambda \Gamma_{ij}$.
             Because the representation $T$ is unitary and the elements $\omega_\lambda$ are hermitian, 
             it follows from Lemma~\ref{lemma:hermitian_hermitian} that the $P_\lambda$ are hermitian. \\
\noindent c) $P_\lambda^2 = \hat{T}(\omega_\lambda) \hat{T}(\omega_\lambda) = \hat{T}(\omega_\lambda^2) 
                = \hat{T}(\omega_\lambda) = P_\lambda$ and $P_\lambda$ is a projector.
             Because of $\omega_\lambda \neq 0$ it follows that $P_\lambda \neq 0$.\\
\noindent d) $\one = \hat{T}(e) = \hat{T}(\sum_{\lambda \vdash k} \omega_\lambda) 
                = \sum_{\lambda \vdash k} \hat{T}(\omega_\lambda) = \sum_{\lambda \vdash k} P_\lambda$.\\
                This ends the proof.
\end{proof}
We conclude that the central Young projectors $\{P_\lambda \, | \, \lambda \vdash k\}$
form a decomposition of identity matrix $\one_{dk}$ into a set of
mutually orthogonal projection operators that commute with both 
the action of $S_k$ and with the diagonal 
action of $GL(\C_d)$ on the tensor factors.

\subsection{Wedderburn decomposition}\label{sect:wedderburn}
Some further comments on the construction of the Young projectors are of interest.
Consider a vector space~$V$ over a field~$K$ on which a group~$G$ acts on; $V$ is said to be a {\em $KG$-module}.
A $KG$-{\em submodule} is a subspace $W \subseteq V$, such that 
$\alpha \cdot w \in W$ for all $\alpha$ in~$\C G$ and $w \in W$, or equivalently, $g \cdot w \in W$ for all $g$ in~$G$ and $w \in W$.
In other words, a submodule is a subspace that remains invariant under the group action of~$G$.
A submodule is {\em irreducible} or {\em simple}, if it does not contain any non-trivial submodules but itself.
There is a correspondence between representations of $G$ over $K$ 
and $KG$-submodules~\cite[Chpt. 4]{james_liebeck_2001} and
every irreducible representation is isomorphic to some irreducible submodule of~$\C G$.

Now let $G$ be a finite group. 
It follows from Maschke's Theorem~\cite[Chpt. 8]{james_liebeck_2001} 
that any $KG$-module with $K=\R$ or $K=\C$ can be decomposed into a direct sum of irreducible $KG$-submodules 
$V = \bigoplus_i W_i$; 
the vector space $V$ is said to be {\em completely reducible}. 
The left regular representation $\C G$ is a $KG$-module and consequently can be completely decomposed.
Grouped into components that consist of direct sums of mutually isomorphic irreducible submodules,
one obtains
\begin{align} \label{eq:left_reg_rep_decomp}
 \C G = \bigoplus_{\a} \big(E_\a^{(1)} \oplus \dots \oplus E_\a^{(m_\a)} \big) 
 \simeq \bigoplus_{\a} E_\a^{\oplus m_\a}
\end{align}
where $E_\a \simeq E_\a^{(i)}$ for all $1\leq i\leq m_\a$.
It can be shown that the decomposition of $\C G$ 
into the isotypic components $E_\a^{\oplus m_\a}$ is unique, and 
one has $m_\a = \dim(E_\a)$ with $\dim (\C G) = \sum_{\a} m_\a^2 = |G|$.

Given some idempotent $\epsilon$ one obtains the submodule $\C G \epsilon = \{ \a \epsilon \,|\, \a \in \C G\}$. 
Any complete set of mutually orthogonal idempotents $\{\epsilon_i\}$
corresponds to the decomposition of $\C G$ into the submodules $\C G \epsilon_i $. 
In particular, if $G$ is finite, then $\epsilon_i$ is a primitive idempotent if and only if
$\C G \epsilon_i $ is irreducible~\cite[Prop. 7.2.1]{webb_2016}. 
Thus the decomposition of $\C G$ into irreducible submodules bijects with 
a set of mutually orthogonal primitive idempotents.

Recall that an element $\a$ is called {\em centrally primitive} if $\a \neq \epsilon + \zeta$ 
such that $\epsilon\zeta = \zeta\epsilon = 0$ and $\epsilon, \zeta$ are central and idempotent.
Given a complete set of mutually orthogonal idempotents $\{\epsilon_\a\}$ that are centrally primitive, 
one directly obtains the decomposition of $\C G$ into its minimal two-sided ideals 
or isotypic components~\cite[Proposition 3.6.1]{webb_2016}. 
One has that $\C G \epsilon_\a = E_\a^{\oplus m_\a}$, 
and thus Eq.~\eqref{eq:left_reg_rep_decomp} can be written as
\begin{equation}
 \C G = \bigoplus_\a \C G \epsilon_\a \,.
\end{equation} 
Above is also known as the {\em Wedderburn decomposition} of $\C G$ where $\C G \epsilon_\a$ are the Wedderburn components.
We note that the centrally primitive idempotents can also be constructed using character theory~\cite[Theorem 3.6.2]{webb_2016}.
The elements $\omega_\alpha$ are then obtained from
  \begin{equation}
   \omega_\alpha = \frac{\dim V_\alpha}{|G|} \sum_{\pi \in S_k} \chi_\alpha(\pi^{-1}) \pi\,.
  \end{equation}
Thus in the case of the symmetric group, 
we could obtain the Young projectors 
also as
\begin{equation}
 P_\lambda = \hat T(\omega_\lambda) = \frac{\chi_\lambda(e)}{k!} \sum_{\pi \in S_k} \chi_\lambda(\pi^{-1}) T(\pi)\,.
\end{equation}

\section{Tables}\label{app:tables}
In Table~\ref{fig:tables} we list all non-trivial polarized Cayley-Hamilton maps $f_\lambda$ up to degree $k = 4$.
The $f_\lambda$ are positive on the positive cone [Theorem~\ref{thm:pCH_pos}] 
and vanish on complex $d\times d$ matrices whenever the partition has more than $d$ parts~[Theorem~\ref{thm:polCH}].
Furthermore, the $f_\lambda$ are equivariant under unitaries [Proposition~\ref{prop:pCH_covariance}], 
tensor stable [Theorem~\ref{thm:pCH_is_tensor_stable}], and completely copositive [Proposition~\ref{prop:pCH_copos}].

\LTcapwidth=0.6\textwidth
\begin{longtable}{@{} l @{\quad} l @{\quad\quad} l @{}}

\caption{All (non-trivial) polarized Cayley-Hamilton maps $f_\lambda$ up to degree $k=4$ 
that correspond to the isotypic components of the symmetric group.
\label{fig:tables}
}\\
k & partition & $f_\lambda$\\
\toprule\endfirsthead
k & partition & $f_\lambda$\\
\toprule\endhead
2
&$[1, 1]$&$ \tr(A){} -A$\\ \addlinespace[0.3mm]
\midrule
3
&$[2, 1]$&$ 2\tr(A)\tr(B){} -BA -AB$\\ \addlinespace[0.3mm]
&$[1, 1, 1]$&$ \tr(A)\tr(B){} -\tr(A)B -\tr(BA){} +BA +AB -\tr(B)A$\\ \addlinespace[0.3mm]
\midrule
4
&$[3, 1]$&$ 3\tr(A)\tr(B)\tr(C){} +\tr(A)\tr(B)C +\tr(A)\tr(CB){} +\tr(C)\tr(A)B +\tr(BA)\tr(C){} -\tr(BA)C$\\&&$
 -CBA -BAC -CAB +\tr(CA)\tr(B){} -\tr(CA)B -BCA$\\&&$
 -ABC +\tr(B)\tr(C)A -ACB -\tr(CB)A$\\ \addlinespace[0.3mm]
&$[2, 2]$&$ 2\tr(A)\tr(B)\tr(C){} -\tr(A)CB -\tr(A)BC +2\tr(BA)C -\tr(CBA){} -\tr(C)BA$\\&&$
 -\tr(BCA){} -\tr(B)CA +2\tr(CA)B -\tr(C)AB -\tr(B)AC +2\tr(CB)A$\\ \addlinespace[0.3mm]
&$[2, 1, 1]$&$ 3\tr(A)\tr(B)\tr(C){} -\tr(A)\tr(B)C -\tr(A)\tr(CB){} -\tr(C)\tr(A)B -\tr(BA)\tr(C){} -\tr(BA)C$\\&&$
 +CBA +BAC +CAB -\tr(CA)\tr(B){} -\tr(CA)B +BCA$\\&&$
 +ABC -\tr(B)\tr(C)A +ACB -\tr(CB)A$\\ \addlinespace[0.3mm]
&$[1, 1, 1, 1]$&$ \tr(A)\tr(B)\tr(C){} -\tr(A)\tr(B)C -\tr(A)\tr(CB){} +\tr(A)CB +\tr(A)BC -\tr(C)\tr(A)B$\\&&$
 -\tr(BA)\tr(C){} +\tr(BA)C +\tr(CBA){} -CBA -BAC +\tr(C)BA$\\&&$
 +\tr(BCA){} -CAB -\tr(CA)\tr(B){} +\tr(B)CA +\tr(CA)B -BCA$\\&&$
 -ABC +\tr(C)AB +\tr(B)AC -\tr(B)\tr(C)A -ACB +\tr(CB)A$\\ \addlinespace[0.3mm]
\bottomrule
\end{longtable}

More maps can be obtained from entanglement witnesses, as stated in Theorem~\ref{thm:witness_to_ineq}.
An example is the following: consider the partition $\lambda_{-} = (1,\dots,1) \vdash k$ that corresponds to the 
completely anti-symmetric subspace and define
\begin{equation}
 f_{\mathcal{W}_{-}}(X_1, \dots, X_{k-1}) = \prod_{i=1}^{k-1} \tr(X_i) \one - f_{\lambda_{-}}(X_1, \dots, X_{k-1})\,.
\end{equation}
The map $f_{\mathcal{W}_{-}}$ is positive and the following holds~[Corollary~\ref{cor:werner_antisym}]:
there exists some set of nonzero ${X}_i \in \MM_d^+$ such that 
$\lambda_{\text{min}}\big\{f_{\mathcal{W}_{-}}({X}_1, \dots, {X}_{k-1})\big\} = 0$.
Consequently, $f_{\mathcal{W}_{-}}$ is an optimal trace polynomial inequality for the positive cone,
and $f_{\mathcal{W}_{-}}$ corresponds to an optimal entanglement witness 
$\mathcal{W}_{-}$ for completely anti-symmetric Werner states~[Theorem~\ref{thm:witness_to_ineq}].

\section{Constructions for polynomial and tensor identities}\label{app:construction_polyId}
To complement Section~\ref{sect:PI}, 
we outline the construction of interesting non-commutative polynomials
on tensor product spaces.
Recall that a polynomial identity is a polynomial in several matrix variables that vanishes on 
the set of all $d\times d$ matrices $\MM_d$ for some $d$. 
A special case of this are weak polynomial identities which are only required to vanish on the subset of traceless matrices.
Central polynomials yield an element from the center $C(\MM_d)$;
in other words, they evaluate to a scalar multiple of the identity matrix~$\one$.
The exact relation between these types of polynomials is not completely understood.

Swap and permutation polynomials were introduced in the context of remote time manipulation in Ref.~\cite{Trillo2019}. 
These are tensor polynomials that yield a scalar multiple of~$\Gamma$ 
or some other~$T(\pi)$.
{Tensor polynomial identities} yield the zero matrix on a tensor product space.
These concepts can straightforwardly be extended to expressions containing traces.

\begin{definition}
Suppose the following relations are satisfied 
for all $X_1, \dots, X_r \in \MM_d$.
Then a non-commutative polynomial $p:\MM_d^r \to \MM_d$ is termed a
\begin{itemize}
 \item[(i)]   {\em polynomial identity}, if $p(X_1, \dots, X_r) = 0 \in \MM_d$. 
 \item[(ii)]  {\em weak polynomial identity}, if $p(X_1, \dots, X_r) = 0\in \MM_d$ 
 when $X_1, \dots, X_r$ are traceless.
 \item[(iii)] {\em central polynomial}, if $p(X_1, \dots, X_r) = c \one\in \MM_d$ with $c=c(X_1, \dots, X_r) \in \C$. 
 \end{itemize}
  \noindent A tensor polynomial $p:\MM_d^r \to \MM_d^{\otimes t}$ is termed a
  \begin{itemize}
 \item[(iv)]  {\em swap polynomial}, 
              if $ p(X_1, \dots, X_r) = c \Gamma \in \MM_d^{\ot 2}$
              and a {\em permutation polynomial}, 
              if $p(X_1, \dots, X_r) = c T(\pi) \in \MM_d^{\ot t}$
              with $c=c(X_1, \dots, X_r) \in \C$.
 \item[(v)]   {\em tensor polynomial identity}, if $p(X_1, \dots, X_r) = 0 \in 
                \MM_d^{\ot t}$.
\end{itemize}
\end{definition}

The following approach follows closely to that of Procesi in his seminal article on the invariant theory of matrices~\cite{PROCESI1976306}. 
Indeed the cases (i) and (iii) were already considered there; our construction for swap and permutation polynomials 
as well as tensor identities however seems to be new.
We consider expressions of the form 
\begin{equation}
 g_\alpha(X_1, \dots, X_k) 
 = \tr_{1\dots k \backslash k}\big[ \hat{T}(\alpha) X_1\ot \dots \ot X_k \big] 
 \quad \text{where} \quad \a = \sum_{\pi \in S_k} a_\pi \pi \in \C S_k\,.
\end{equation}
It is worth to understand the effect of certain permutations that can occur in $\a$.
Suppose $\pi$ consists of a single cycle of length $k$ that acts on all positions non-trivially,
$\pi = \sigma$. 
Then $g_\pi(X_1, \dots, X_k) = R_{\sigma^{-1}}$ [Proposition~\ref{prop:generalized_swap_identities}].
Suppose $\pi$ contains the cycle of length $1$
such that $\pi(i) = i$ for some position $i< k$.
Then $g_\pi(X_1, \dots, X_k)=0$ on traceless matrices [Corollary~\ref{cor:perm_and_ptrace}].
Suppose $\pi$ keeps the last position unmoved, $\pi(k) = k$.
Then $g_\pi(X_1, \dots, X_{k-1}, X_k = \one ) \propto \one$ [Corollary~\ref{cor:perm_and_ptrace}].

Let us denote by $V_\lambda(\C S_k)$ the isotypic components of $\C S_k$. 
Given some dimension $d$, we define 
\begin{equation}
 \JJ_d(\C S_k) 
 = \!\!\bigoplus_{\substack{\lambda \vdash k \\ |\text{parts}(\lambda)| > d}} \!\!V_\lambda(\C S_k)
\end{equation} 
where the sum is over all isotypic subspaces whose associated Young tableaux have strictly more than $d$ rows.
The idea is that $\JJ_d(\C S_k)$ is too anti-symmetric to support a vector space of dimension~$d$.
As in Theorem~\ref{thm:polCH}, one has for all $\a \in \JJ_d(\C S_k)$ that 
$\tr_{ 1\dots k \backslash k}[\hat{T}(\alpha) X_1\ot \dots \ot X_{k-1} \ot X_k]  = 0$.
A consequence is the following.
\begin{theorem} 
 Let $\a \in \JJ_d(\C S_k)$ and  $X_1,\dots, X_k \in \MM_d$. 
 Suppose that
\begin{itemize}
 \item[(i)]   %{\em polynomial identity.} 
 every permutation in $\a$ consists of a single cycle that acts on all positions non-trivially.
 Then 
 $g_{\alpha}(X_1, \dots, X_k)$
 is a {polynomial identity}.
 
 \item[(ii)]% {\em weak polynomial identity:}
 $\a$ can be written as  $\a = \beta + \gamma$, such that:
 every permutation in $\beta$ consists of a single cycle that acts on all positions non-trivially and
 every permutation in $\gamma$
 contains a cycle of length $1$ that leaves some position $i< k$ unchanged.
 Then 
 $g_{\beta}(X_1, \dots, X_k)$
 is a {weak polynomial identity}. 
 
 \item[(iii)]% {\em central polynomial:}
  $\a$ can be written as $\a = \beta + \gamma$, such that:
  every permutation in $\beta$ consists of a single cycle that acts on all positions non-trivially and 
  every permutation in $\gamma$ leaves position $k$ unchanged. Then 
 $g_{\beta}(X_1, \dots, X_{k-1}, \one)$
 is a {central polynomial}.
 \item[(iv)]
  $\a$ can be written as $\a = \beta + \gamma$, such that:
  every permutation appearing in $\beta$ is composed of exactly two cycles $\pi = \sigma_1 \sigma_2$ such that
  $\sigma_1$ ($\sigma_2$) acts on position $k-1$ $(k)$ non-trivially;
  each permutation in $\gamma$ contains the cycle $(k-1, k)$.
  Then 
  \begin{equation}
   h_\beta ( X_1, \dots, X_{k-2}) = \tr_{1 \dots k \backslash \{k-1, k\}}
   [\hat{T}(\beta)  X_1 \ot \dots \ot X_{k-2}  \ot \one \ot \one] 
  \end{equation}
  is a {swap polynomial}. The construction for permutation polynomials is analogous
  and for a permutation of length~$t$ one requires $\beta$ to be composed of exactly~$t$ cycles, each of which 
  act on a single position in $\{k-t+1, \dots, k\}$ non-trivially; 
  whereas each permutation in $\gamma$ contains the desired cycle on the positions $k-t+1, \dots, k$.
    Then 
  \begin{equation}
   h_\beta ( X_1, \dots, X_{k-t}) = \tr_{1 \dots k \backslash \{k-t+1,\dots, k\}}
   [\hat{T}(\beta)  X_1 \ot \dots \ot X_{k-t}  \ot \one \ot \dots \ot \one] 
  \end{equation}
  is a {permutation polynomial}. 
  
  \item[(v)]
   each permutation appearing in $\alpha$ is composed of exactly $t$ cycles $\pi = \sigma_1 \dots \sigma_t$ such that
  $\sigma_1, \dots, \sigma_t\) acts on position $k-t+1, \dots, k$ respectively non-trivially.
  Then 
  \begin{equation}
   h_\alpha ( X_1, \dots, X_{k-2}) = \tr_{1 \dots k \backslash \{k-1, k\}}
   [\hat{T}(\beta)  X_1 \ot \dots \ot X_{k-t}  \ot \one \ot \dots \ot \one] 
  \end{equation}
  is a {tensor polynomial identity}.
 \end{itemize}
\end{theorem}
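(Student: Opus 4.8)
The plan is to isolate the one conceptual input---that $\hat T(\alpha)=0$ identically as an operator on $(\C^d)^{\ot k}$ whenever $\alpha\in\JJ_d(\C S_k)$---and then obtain (i)--(v) by bookkeeping of cycle structures through Proposition~\ref{prop:generalized_swap_identities} and Corollary~\ref{cor:perm_and_ptrace}. For the input, write $\JJ_d(\C S_k)=\bigoplus_{\lambda:\,|\text{parts}(\lambda)|>d}\C S_k\,\omega_\lambda$ and set $\omega=\sum_{\lambda:\,|\text{parts}(\lambda)|>d}\omega_\lambda$; then $\alpha\omega=\alpha$, so $\hat T(\alpha)=\hat T(\alpha)\hat T(\omega)=\hat T(\alpha)\sum_\lambda P_\lambda=0$, because $P_\lambda=0$ by Schur--Weyl duality (Theorem~\ref{thm:schur_weyl}; cf.\ Proposition~\ref{prop:pigeon} and the proof of Theorem~\ref{thm:polCH}) whenever $\lambda$ has more than $d$ parts. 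Consequently $g_\alpha$ and the tensor expressions $h_\alpha$, being partial traces of $\hat T(\alpha)$ against products of the $X_i$ (with some slots set to $\one$), vanish identically on $\MM_d$; this is already the ``equals zero'' content of (i) and (v), and by subtraction it reduces (ii)--(iv) to evaluating the ``$\gamma$-part''.

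For (i), if every $\pi$ appearing in $\alpha$ is a single $k$-cycle acting non-trivially on all positions, then by Proposition~\ref{prop:generalized_swap_identities} each $\tr_{1\dots k\backslash k}[T(\pi)X_1\ot\cdots\ot X_k]$ is a single matrix monomial of degree $k$ with no trace factors; hence $g_\alpha$ is a non-commutative polynomial, and it vanishes on $\MM_d$, so it is a polynomial identity. For (ii), write $g_\alpha=g_\beta+g_\gamma$: every $\pi$ in $\gamma$ has a fixed point $i<k$, so by Corollary~\ref{cor:perm_and_ptrace} each term of $g_\gamma$ carries a factor $\tr(X_i)$ and therefore $g_\gamma$ vanishes on traceless matrices, while $g_\beta$ is a non-commutative polynomial as in (i); since $g_\alpha\equiv0$, also $g_\beta\equiv0$ on traceless matrices, a weak polynomial identity. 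For (iii), substitute $X_k=\one$: every $\pi$ in $\gamma$ fixes position $k$, so its canonically last cycle is $(k)$ and Corollary~\ref{cor:perm_and_ptrace} gives $g_\pi(X_1,\dots,X_{k-1},\one)=\tr(R_{\sigma_1})\cdots\tr(R_{\sigma_{l-1}})\,\one$; thus $g_\gamma(X_1,\dots,X_{k-1},\one)=c\,\one$ with $c$ a product of traces, $g_\beta(X_1,\dots,X_{k-1},\one)$ is a non-commutative polynomial in $X_1,\dots,X_{k-1}$, and $g_\alpha\equiv0$ forces $g_\beta(X_1,\dots,X_{k-1},\one)=-c\,\one$, a central polynomial.

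For (iv) the same template applies, now componentwise in the tensor slots. Each $\pi$ in $\beta$ is a product of two disjoint cycles $\sigma_1\sigma_2$ partitioning $\{1,\dots,k\}$ with $k-1\in\sigma_1$, $k\in\sigma_2$; tracing out all positions except $k-1,k$ factorizes along $\sigma_1,\sigma_2$ (Corollary~\ref{cor:perm_and_ptrace}) and, after setting $X_{k-1}=X_k=\one$, returns a monomial in $X_1,\dots,X_{k-2}$ in slot $k-1$ tensored with one in slot $k$, so $h_\beta$ is a tensor polynomial. Each $\pi$ in $\gamma$ contains the transposition $(k-1,k)$, hence $\hat T(\pi)=\hat T(\tau)\ot\Gamma$ with $\tau\in S_{\{1,\dots,k-2\}}$, and tracing out positions $1,\dots,k-2$ yields $\bigl(\tr[T(\tau)X_1\ot\cdots\ot X_{k-2}]\bigr)\Gamma$; summing gives $h_\gamma=c(X_1,\dots,X_{k-2})\,\Gamma$ with $c$ a trace polynomial, whence $h_\beta=h_\alpha-h_\gamma=-c\,\Gamma$, a swap polynomial. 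The length-$t$ permutation case is verbatim the same with two cycles/positions replaced by $t$ of them and $\Gamma$ by $T(\pi_0)$. Finally (v) needs no splitting: if every $\pi$ in $\alpha$ is a product of $t$ disjoint cycles, each meeting exactly one of $k-t+1,\dots,k$, then after tracing out the complement (with $\one$ in the last $t$ slots) $h_\pi$ is a tensor product of $t$ monomials in $X_1,\dots,X_{k-t}$; summing, $h_\alpha$ is a tensor polynomial, and it vanishes because $\hat T(\alpha)=0$, a tensor polynomial identity.

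The only step carrying content beyond this bookkeeping is the explicit evaluation of $h_\gamma$ (and its length-$t$ analogue) used to pin down the constant $c$ in (iv); since $\tau$ lies entirely in $S_{\{1,\dots,k-2\}}$ this is a product of traces, so $c$ is a trace polynomial, exactly as in the proof of Theorem~\ref{thm:swap}. Beyond correctly tracking which variable lands in which tensor slot, I do not anticipate any obstacle.
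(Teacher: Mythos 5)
Your proposal is correct and follows essentially the same route as the paper: the paper's proof simply invokes its ``preceding discussion,'' which consists of exactly the two ingredients you use --- the vanishing of $\hat T(\alpha)$ for $\alpha \in \JJ_d(\C S_k)$ (via Schur--Weyl, as in Theorem~\ref{thm:polCH}) and the cycle-by-cycle bookkeeping of Proposition~\ref{prop:generalized_swap_identities} and Corollary~\ref{cor:perm_and_ptrace}. Your write-up merely makes explicit what the paper leaves implicit, including the clean derivation $\hat T(\alpha)=\hat T(\alpha)\sum_\lambda P_\lambda=0$ and the evaluation of the $\gamma$-parts that pin down the constants in (ii)--(iv).
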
 
\begin{proof}
  Our preceding discussion establishes (i) - (v).
\end{proof}
Note that central polynomials and tensor polynomials 
that yield the identity matrix are special cases of permutation polynomials.
Here the permutations in $\a$ keep the last $t$ positions fixed.
Similar constructions as those above can be made to obtain 
expressions that are only valid on the set of traceless matrices ({\em weak} identities)
or that contain traces (tensor trace polynomials).

\FloatBarrier
\bibliography{current_bib}
\end{document}